\documentclass[12pt]{article}
\setlength{\voffset}{-.75truein}
\setlength{\textheight}{9truein}
\setlength{\textwidth}{6.9truein}
\setlength{\hoffset}{-.9truein}

\usepackage{amsthm,amsfonts,amsmath,amssymb,amscd}
\usepackage{graphicx}
\usepackage{mathtools}
\usepackage{enumerate}
\mathtoolsset{showonlyrefs}
\usepackage{verbatim}
\usepackage{authblk}
\usepackage[utf8]{inputenc}

\newcommand{\cA}{{\mathcal A}}

\newcommand{\cL}{{\mathcal L}}
\newcommand{\cC}{{\mathcal C}}
\newcommand{\cN}{{\mathcal N}}

\newtheorem{lemma}{Lemma}

\newtheorem{example}{Example}

\newtheorem{theorem}{Theorem}

\newtheorem{definition}{Definition}

\newcommand{\ket}[1]{\left\vert #1\right\rangle}
\newcommand{\bra}[1]{\left\langle #1\right\vert}

\newcommand{\Tr}{\mathsf{tr}}
\newcommand{\tr}{\mathsf{tr}}

\renewcommand{\vec}[1]{\boldsymbol{#1}}


\newcommand{\id}{\mathop{\rm id}}

\newcommand{\norm}[1]{\left\Vert #1 \right\Vert}

\newcommand{\boxendproof}{\hspace*{\fill}{{$\Box$}} \vspace{10pt}}

\newcommand{\be}{\begin{equation}}
\newcommand{\ee}{\end{equation}}
\newcommand{\bea}{\begin{eqnarray}}
\newcommand{\eea}{\end{eqnarray}}
\newcommand{\beann}{\begin{eqnarray*}}
\newcommand{\eeann}{\end{eqnarray*}}

\newcommand{\n}{{\bf n}}

\newcommand{\proj}[1]{|#1\rangle\langle#1|}
\newcommand*{\cLh}{\cL_{\mathsf{heat}}}



\usepackage{color}

\newcommand{\comm}[2]{\left[#1,#2\right]}

\newcommand*{\half}{\frac{1}{2}}

\begin{document}
\title{Geometric inequalities from phase space translations}
\author[1,2]{Stefan Huber \thanks{stefan.huber@ma.tum.de}}
\author[1,2]{Robert K\"onig}
\author[2,3]{Anna Vershynina}
\affil[1]{Institute for Advanced Study\\ Technische Universit\"at M\"unchen, 85748 Garching, Germany}
\affil[2]{Zentrum Mathematik\\
Technische Universit\"at M\"unchen, 85748 Garching, Germany}
\affil[3]{BCAM - Basque Center for Applied Mathematics, 48009 Bilbao, Spain}
\renewcommand\Authands{ and }
\renewcommand\Affilfont{\itshape\small}

\date{\today}
\maketitle
\begin{abstract}
We establish a quantum version of the classical isoperimetric inequality relating the Fisher information and the entropy power of a quantum state. The key tool is a Fisher information inequality for a state which results from a certain convolution operation: the latter maps a classical probability distribution on phase space and a quantum state to a quantum state. We show that this inequality also gives rise to several related inequalities whose counterparts are well-known in the classical setting: in particular, it implies an entropy power inequality for the mentioned convolution operation as well as the isoperimetric inequality, and establishes concavity of the entropy power along trajectories of the quantum heat diffusion semigroup. As an application, we derive a Log-Sobolev inequality for the quantum Ornstein-Uhlenbeck semigroup, and argue that it implies fast convergence towards the fixed point for a large class of initial states.
\end{abstract}

\section{Introduction}
The convolution operation $(X,Y)\mapsto X+Y$ between two 
real- (respectively vector-) valued independent random variables~$X$ and $Y$ plays a central role in classical information theory. The operation is defined in terms of the action on probability density functions as
\begin{align}
(f_X,f_Y)\mapsto f_{X+Y}\qquad\textrm{ where }\qquad f_{X+Y}(z):=\int f_X(z-x)f_Y(x)dx\ .\label{eq:classicalconvolution}
\end{align}  
The convolution models a general class of additive noise channels, and thus provides a natural framework for the study of information processing and associated capacities. 
The operation~\eqref{eq:classicalconvolution}  also is a central element in many functional analytic inequalities, most notably Young's inequality~\cite{Beckner75}, the Brascamp-Lieb inequalities~\cite{BrascampLieb76},  de Bruijn's identity~\cite{Stam59}, the Fisher information inequality~\cite{Stam59}, and the entropy power inequality~\cite{Shannon48-2,Blachman65}. Such inequalities have wide use in information theory, yielding bounds on communication capacities, as observed by Shannon~\cite{Shannon48-2}. They can also provide, for example, bounds on the convergence rate in the central limit theorem~\cite{Barron86central}. Beyond these applications, these results are appealing from a conceptual, geometric viewpoint: many inequalities can be regarded as
information-theoretic counterparts of related statements about convex bodies. For example, identifying entropy power with volumes reveals a formal similarity between the Brunn-Minkowski inequality and the entropy power inequality~\cite{CostaThomas84}. Indeed, there is even a proof  of the latter  guided by this intuition~\cite{SzarekVoicu00}. We refer to~\cite{Demboetal91,barthe06} for detailed accounts of this wealth of inequalities and their interrelationships.

Our work is guided by the question of whether a similar array of inequalities exists in a quantum setting. A key first step in this direction -- one which is directly relevant to our work --  was taken by Werner~\cite{WernerHarmonicanalysis84}. He introduced the convolution operation 
\begin{align}
(f,\rho)\mapsto f\star_t \rho\qquad\textrm{ where }\qquad f\star_t \rho=\int f(\xi) 
W({\sqrt{t}\xi}) \rho W({\sqrt{t}\xi)}^\dagger d\xi\ ,\label{eq:cqconvolution}
\end{align}
 which involves a probability density function~$f$ on phase space as well as a state~$\rho$ (of a bosonic system). The operation  results in the average state~$f\star_t \rho$  when displacing~$\rho$ according to~$f$ using the (Weyl) displacement operators~$W(\xi)$. Here we introduce the parameter~$t\geq 0$ for convenience, the case $t=1$ was considered in~\cite{WernerHarmonicanalysis84}.
 Treating the convolutions~\eqref{eq:classicalconvolution} and~\eqref{eq:cqconvolution} on the same algebraic footing, Werner established (among other results characterizing~\eqref{eq:cqconvolution}) a form of Young's inequality. It is worth mentioning that Carlen and Lieb have recently established generalizations of the latter in a fermionic context \cite{Carlenfermionic}.

More recent work~\cite{KoeSmiEPI} has centered around a convolution operation of the form
\begin{align}
(\rho_X,\rho_Y)\mapsto  \rho_{X\boxplus_\lambda Y}=\tr_2(U_\lambda (\rho_X\otimes \rho_Y)U_\lambda^\dagger)\ ,\label{eq:qqconvolution}
\end{align}
where $U_\lambda$ is a ($d$-mode) beamsplitter of transmissivity~$\lambda\in [0,1]$. It is worth pointing out that the action of this operation is formally similar to~\eqref{eq:classicalconvolution} when expressed in terms of the Wigner functions describing the  quantum states~$\rho_X$ and $\rho_Y$. The map~\eqref{eq:qqconvolution} describes a process where two states interact. It captures, in particular, the situation where one of the states is transmitted through an additive (bosonic) noise channel. In~\cite{KoeSmiEPI}, the authors established an entropy power inequality of the form
\begin{align}
e^{S(\rho_X\boxplus_{\lambda} \rho_Y)/d}&\geq \lambda e^{S(\rho_X)/d}+(1-\lambda) e^{S(\rho_Y)/d}\ ,\label{eq:epiqq}
\end{align}
for the convolution~\eqref{eq:qqconvolution} and for~$\lambda=1/2$, 
where $S(\rho_X)=-\tr(\rho_X\log\rho_X)$ denotes the von Neumann entropy. Subsequent work~\cite{MariPalma15} managed to lift the restriction on $\lambda$, and generalized this result to more general (Gaussian) unitaries in place of $U_\lambda$.   
A related inequality of the form $S(\rho_X\boxplus_\lambda \rho_Y)\geq \lambda S(\rho_X)+(1-\lambda)S(\rho_Y)$ for~$\lambda\in [0,1]$ was also shown in~\cite{KoeSmiEPI}, generalizing classical results~\cite{Lieb78} (see also~\cite{VerduGuo06} for a discussion of the relationship between the two). A generalization to conditional entropies was proposed in \cite{koenigconditionalepi2015}, and an application to channel capacities was discussed in~\cite{KoeSmiEPIChannel}. 

A key tool in establishing these results is the quantum Fisher information~$J(\rho)$,  
defined for a state~$\rho$ as the 
divergence-based Fisher information of the family $\left\{\rho^{(\theta)}\right\}_{\theta}$ obtained by displacing $\rho$ along a phase space direction (see Section~\ref{sec:quantumdef} for a precise definition). It was shown in~\cite{KoeSmiEPI} for $\lambda=1/2$ and in~\cite{MariPalma15} for general $\lambda\in [0,1]$ that this quantity satisfies the Fisher information  inequality~\cite{Stam59} 
\begin{align}
J(\rho_X\boxplus_\lambda \rho_Y)^{-1}\geq \lambda J(\rho_Y)^{-1}+(1-\lambda) J(\rho_X)^{-1}\ .\label{eq:fisherinformationinequality}
\end{align}
This identity is a consequence of the strong subadditivity (data processing) inequality for relative entropy, and lies at the heart of the proof of~\eqref{eq:epiqq}.

Following the theme of entropy power inequalities for quantum systems, Audenaert, Datta, and Ozols~\cite{datta15} have obtained strong majorization-type  results for the finite-dimensional case. These center around an operation of the form~\eqref{eq:qqconvolution}, but with~$\rho_X$ and $\rho_Y$ being states on a finite-dimensional Hilbert space, and a family $U_\lambda=e^{i\lambda H}$ of unitaries generated by a Hamiltonian~$H$  realizing the SWAP-operation of the two systems. 
As argued by Audenaert et al.,~these results imply several entropy-power-type inequalities.  Very recently, Carlen, Lieb, and Loss~\cite{Carlenetal16} have provided an elegant short proof of these statements. Guha, Shapiro, and Garc\'{i}a-Patr\'{o}n have discussed alternative definitions of the quantum entropy power and corresponding entropy power inequalities \cite{guha16}.

 \section{Our contribution}
\subsection{New geometric inequalities for bosonic systems}
Here we focus on the convolution operation~\eqref{eq:cqconvolution}. We find that this operation satisfies similar 
properties as the convolutions~\eqref{eq:classicalconvolution} and~\eqref{eq:qqconvolution}. In particular, we establish a Stam inequality for the Fisher information of the form
\begin{align}
J(f\star_t\rho)^{-1}\geq  J(\rho)^{-1}+t J(f)^{-1},\qquad\textrm{ for all }t\geq 0\ .\label{eq:stamcq}
\end{align} 
This classical-quantum version of Stam's inequality has several immediate consequences, all of which follow the reasoning used in establishing classical results~\cite{Demboetal91}. 
For example, we use \eqref{eq:stamcq} to establish an entropy power inequality of the form
\begin{align}
  \exp\left(S(f\star_t\rho)/d\right) \geq \exp\left(S(\rho)/d\right) + t \exp\left(H(f)/d\right)\ .
  \label{eq:entropypowerq}
\end{align}
Taking $f=f_Z$ to be a unit-variance centered Gaussian, we find a quantum isoperimetric inequality of the form
\begin{align}
\frac{d}{dt}\Big|_{t=0}\left(\frac{1}{2d}J(f_Z\star_t \rho)\right)^{-1}\geq 1\ .
\end{align}
Note that $f_Z\star_t \rho$ is the result of applying a classical noise channel to~$\rho$, where the variance of the displacement goes to~$0$ in the limit~$t\rightarrow 0$. This family of maps constitutes a semigroup, which we call
the heat diffusion semigroup: it is generated by a Liouvillian~$\cLh$ such that
\begin{align}
f_Z\star_t\rho=e^{t\cLh}(\rho)\qquad\textrm{ for an initial state }\rho\textrm{ and }t\geq 0\ .
\end{align}
We find that the entropy power along trajectories generated by this Liouvillian is concave, i.e.,
\begin{align}
\frac{d^2}{dt^2}\Big|_{t=0}\exp\left(\frac{1}{d}S(e^{t\cLh}(\rho))\right)\leq 0\ . \label{eq:qheatequationsemigroup}
\end{align}
Eq.~\eqref{eq:qheatequationsemigroup} generalizes a celebrated result~\cite{costa85,Villanishortproofconcavity} concerning the classical heat equation. The entropy power inequality \eqref{eq:entropypowerq} for Gaussian $f$ implies the lower bound $\exp\left(\frac{1}{d}S(e^{t\cLh}(\rho))\right)\geq~\exp\left(\frac{1}{d}S(\rho)\right) + (2\pi e)t$, and establishes the isoperimetric inequality for the Fisher information
\begin{align}
J(\rho)\exp\left(\frac{1}{d}S(\rho)\right)\geq 4\pi e\, d\ \label{eq:isoperimetryfisher}
\end{align}
for states~$\rho$ of $d$ bosonic modes. We find that for $d=1$, this statement is  tight: Gaussian thermal states achieve equality in~\eqref{eq:isoperimetryfisher} in the limit of large mean photon numbers. 

\subsection{Application to the Ornstein-Uhlenbeck semigroup}
\label{sec:appornstein}
We apply our results, in particular Eq.~\eqref{eq:isoperimetryfisher},  to the quantum Ornstein-Uhlenbeck (qOU) semigroup
for a one-mode bosonic system.
This is a one-parameter group of CPTP maps
$\{e^{t\cL_{\mu,\lambda}}\}_{t\geq 0}$ generated by a linear combination of  Liouvillians of a quantum amplifier and an attenuator channel, respectively:
\begin{align}
\cL_{\mu,\lambda}&=\mu^2\cL_-+\lambda^2\cL_+\qquad\textrm{ for } 0<\lambda<\mu\ ,\textrm{ where }\label{eq:qOUsemigroup}\\
 \cL_+(\rho)&= a^\dagger\rho a-\frac{1}{2}\{a a^\dagger, \rho\}\qquad\textrm{ and }\qquad  \cL_-(\rho)=a\rho a^\dagger-\frac{1}{2}\{a^\dagger a, \rho\}\ ,
\end{align}
where $a^\dagger$ and $a$ are the creation and annihilation operators (i.e., $[a,a^\dagger] = \mathsf{id}$).
The qOU semigroup~\eqref{eq:qOUsemigroup}
is a natural counterpart of the classical semigroup generated by the Fokker-Planck equation (see Appendix~\ref{sec:classicalOU}).
The unique fixed point of the semigroup~$\{e^{t\cL_{\mu,\lambda}}\}_{t\geq 0}$ is the state
\begin{align}
\sigma_{\mu,\lambda}&=(1-\nu)\sum_{n=0}^\infty \nu^n \proj{n}\qquad\textrm{ with }\qquad\nu=\lambda^2/\mu^2\ ,
\end{align}
i.e., it is diagonal in the number state basis $\{\ket{n}\}_{n\in\mathbb{N}_0}$ with a geometric distribution, hence a Gaussian thermal state. 

We conjecture that 
for an arbitrary initial state~$\rho$, this semigroup converges  to the fixed point at an exponential rate given by the exponent~$\mu^2-\lambda^2$, that is,
\begin{align}
D(e^{t\cL_{\mu,\lambda}}(\rho)\|\sigma_{\mu,\lambda})\leq e^{-(\mu^2-\lambda^2)t}D(\rho\|\sigma_{\mu,\lambda})\qquad\textrm{ for all }t\geq 0\ ,\label{eq:mainconjectureqou}
\end{align}
where $D(\rho\|\sigma) = \tr(\rho\log\rho - \rho\log\sigma)$ is the relative entropy.
In Appendix \ref{app:qOU}, we show that \eqref{eq:mainconjectureqou} holds for all Gaussian states~$\rho$, and the exponent $\mu^2-\lambda^2$ is optimal.
In other words, our conjecture
amounts to the statement that certain Gaussian thermal states converge ``most slowly'' to the fixed point, and the Log-Sobolev-$1$-constant, defined
as the largest constant $\alpha_1>0$ such that 
$D(e^{t\cL}(\rho)\|\sigma)\leq e^{-2\alpha_1 t}D(\rho\|\sigma)$
for any state $\rho$ and all $t\geq 0$, is given by $\alpha_1=\frac{1}{2}(\mu^2-\lambda^2)$. 

The quantum isoperimetric inequality~\eqref{eq:isoperimetryfisher} provides evidence for this conjecture: Taking as a specific example the case where $\lambda=1$ and $\mu=\sqrt{2}$ (and thus $\mu^2-\lambda^2=1$),
 we can show (see Example~\ref{ex:2-1_process_n} in Section~\ref{sec:app}) that
 \begin{align}
\frac{d}{dt}\Big|_{t=0}D(e^{t\cL_{\sqrt{2},1}}(\rho)\|\sigma_{\sqrt{2},1})\leq -D(\rho\|\sigma_{\sqrt{2},1})\qquad\textrm{ for all states }\rho\textrm{ with } \tr(\rho a^\dagger a)\lesssim 0.67\ .\label{eq:contractionratedef}
\end{align} 
We also show (see Example~\ref{ex:2-1_process_S} in Section~\ref{sec:app})  that recent work by de Palma, Trevisan, and Giovannetti~\cite{depalma2016} similarly implies 
exponential convergence of the form~\eqref{eq:mainconjectureqou}
for initial states~$\rho$ having large entropy: the isoperimetric inequality of~\cite{depalma2016} (which we discuss in more detail below) implies that 
\begin{align}
\frac{d}{dt}\Big|_{t=0}D(e^{t\cL_{\sqrt{2},1}}(\rho)\|\sigma_{\sqrt{2},1})\leq -D(\rho\|\sigma_{\sqrt{2},1})\qquad\textrm{ for all states }\rho\textrm{ with } S(\rho)\gtrsim\ 2.4\ .\label{eq:contractionratedef2}
\end{align}
While this does not establish the scaling~\eqref{eq:mainconjectureqou} for all states~$\rho$,  it illustrates the use of the quantum isoperimetric inequality in a concrete context. We clarify the relationship between our case and the one in the classical setting (see Section~\ref{sec:classicalcontext}), where
the Log-Sobolev-$1$ constant 
can be obtained from the isoperimetric inequality for the classical Fisher information, following work by Carlen~\cite{carlen94}. In fact, this is the main motivation for our conjecture: Gaussian distributions give the optimal convergence rate in the classical problem.

Whether conjecture~\eqref{eq:mainconjectureqou} holds is a challenging open question. We remark that it would significantly strengthen known results about the qOU semigroup. Specifically, Carbone et al.~\cite{carbonesasso07} established  that the qOU semigroup is hypercontractive. In~\cite[Proposition 4.2]{carbonesasso07}, the following 
inequality was shown for the Log-Sobolev-$2$ constant\footnote{Here we follow the notation of~\cite{olkiewiczzergarlinski,kastoryanotemme,hermesetal}. In contrast, this is denoted~$\alpha_1$ in~\cite{carbonesasso07}.}~$\alpha_2$ of $\cL_{\mu,\lambda}$: 
\begin{align}
\alpha_C^{-1}\leq \alpha_2^{-1}\leq \frac{4(5-\log(1-\nu))}{\mu^2(1-\nu)}+(3\log 3)\alpha_C^{-1}\ \qquad\textrm{ for }\qquad\nu=\lambda^2/\mu^2\ .
\end{align}
 In this expression, $\alpha_C$ is the Log-Sobolev-$2$ constant 
of the associated classical birth-and-death process (which is unknown), but for which  the bounds 
\begin{align}
\frac{\log \nu^{-1}}{5\sqrt{5}\mu^2(1-\nu)^{3/2}}\leq \alpha_C^{-1}\leq 
\frac{255}{4}\frac{(1+\log 2)(1-\nu)+\log \nu^{-1}}{\mu^2(1-\nu)^3}
\end{align}
were shown (see~\cite[Proposition 4.1]{carbonesasso07}
where $\alpha_C$ is  denoted $\alpha_0$).
Following~\cite{olkiewiczzergarlinski,kastoryanotemme,hermesetal},
this implies that for all states~$\rho$, we have~$D(e^{t\cL_{\mu,\lambda}}(\rho)\|\sigma_{\mu,\lambda})\leq e^{-2\alpha_2t}D(\rho\|\sigma)$ (respectively, we have $D(e^{t\cL_{\mu,\lambda}}(\rho)\|\sigma_{\mu,\lambda})\leq e^{-\alpha_2t}D(\rho\|\sigma_{\mu,\lambda})$), if the semigroup can be shown to be strongly (respectively weakly) $L_p$-regular.

The derivation of our fast convergence results from
the isoperimetric inequality~\eqref{eq:isoperimetryfisher} follows, to some extent, a well-known  line of reasoning considered in the classical context. Indeed, Carlen~\cite{carlen94} has shown that the isoperimetric inequality gives rise to a Log-Sobolev inequality, which in turns provides bounds on the convergence of the classical Ornstein-Uhlenbeck (cOU) semigroup to the fixed point (see Appendix~\ref{sec:classicalOU}).  However, in our case we find that, while a Log-Sobolev inequality again easily follows from~\eqref{eq:isoperimetryfisher}, the quantities appearing in it are not easily estimated. This concerns, in particular, the entropy rate  of the attenuator
\begin{align}
J_-(\rho)&=2\frac{d}{dt}\Big|_{t=0}  S(e^{t\cL_-}(\rho))\ ,
\end{align}
given an arbitrary initial state~$\rho$ (the factor~$2$ is chosen for convenience only). This is in sharp contrast to the classical case: here
the trivial identity $H(e^{t}X)=H(X)+t$, for a scalar~$t>0$, satisfied by the differential entropy~$H(X)$ of a random variable~$X$, is sufficient for the purpose of establishing fast convergence of the cOU semigroup to the fixed point (see Appendix~\ref{sec:classicalOU}). 

De Palma et al.~\cite{depalma2016} showed that the infimum 
$\inf_{\rho:\, S(\rho)=S}J_-(\rho)$ over all states~$\rho$ with a given entropy is achieved by a Gaussian thermal state. This statement, combined with
a lower bound on the corresponding quantity~$J_+(\rho)$ for the amplifier from~\cite{buscemiwilde}, valid for all states~$\rho$, immediately yields inequality~\eqref{eq:contractionratedef2}.

To establish \eqref{eq:contractionratedef}, we prove another lower bound on~$J_-(\rho)$: more precisely, we show
that the infimum 
\begin{align}
\inf_{\rho:\, \tr(\rho a^\dagger a)\leq\n }J_-(\rho)
\end{align}
over all states~$\rho$ with mean photon number bounded by~$\n$
is achieved by a Gaussian thermal state. The proof proceeds by reduction to the classical case using recent majorization-type results~\cite{jabbouretal,depalmaetal}. The latter can be treated using the results from \cite{depalma2016}. This, then, provides the required lower bound and yields statement~\eqref{eq:contractionratedef}. It is not, however, tight enough to establish Conjecture~\eqref{eq:mainconjectureqou}.

Understanding the relationship between  entropy production rates along trajectories of the qOU semigroup, i.e., different quantities of the form
$J_{\mu,\lambda}(\rho) = 2\frac{d}{dt}\Big|_{t=0}  S(e^{t\cL_{\mu,\lambda}}(\rho))$ for $(\mu,\lambda)\neq (1,0)$,
remains an open problem. We believe that progress in this direction could help provide further evidence for (or indeed lead to a proof of) the validity of conjecture~\eqref{eq:mainconjectureqou}.

\subsection{Prior work in the classical setting\label{sec:classicalcontext}}
Our work in the quantum setting follows a long sequence of well-known existing arguments applicable to classical probability distributions. All geometric inequalities established here have classical counterparts, and their proofs are inspired by (and directly generalize)  corresponding classical proofs. This raises the question of whether other analogues of classical results exist: for example, one may conjecture that  there is a quantum counterpart of Young's inequality for the convolution operation~\eqref{eq:qqconvolution}. 

It is not our intention to provide a complete review of this assortment of classical results: it is hardly possible to do justice to the many important developments in this area. We refer to the  article~\cite{Demboetal91} for a survey of many known connections. Instead, we briefly review some of the basic definitions
and seminal results which are directly relevant to our work.

In the classical setting we assume to have $\mathbb{R}^d$-valued random variables with absolutely continuous density functions. For such a random variable~$X$ with density function~$f$ (which we often assume to be non-vanishing everywhere for simplicity), a fundamental information measure of interest is the \textit{Shannon (differential) entropy}
\begin{align}
H(f)&=-\int_{\mathbb{R}^d} f(\vec{x})\log f(\vec{x})d\vec{x}\ ,
\end{align}
also denoted as $H(X)$. The {\em entropy power} is given by
\begin{align}
  N(f) &= \exp\left({2H(f)/d}\right)\ ,
\end{align}
and also written as $N(X)$.
Up to a factor, this quantity coincides with the variance of a Gaussian distribution having the same entropy. 
The \textit{divergence}, or \textit{relative entropy}, of two density functions $f, g$ is defined as
\begin{align}
  D\left(\left.f \right\| g\right) = \int_{\mathbb{R}^d} f(\vec{x}) \log \frac{f(\vec{x})}{g(\vec{x})}\;d\vec{x}\ .
  \label{def:divergence}
\end{align}
The \textit{Fisher information} of a random variable $X$ with density function $f$ is defined as the following quantity\footnote{In more general terms,~\eqref{eq:fisherinformationtranslation} is the trace of the Fisher information matrix
\begin{align}
\left( \left. \frac{\partial^2}{\partial\theta_i\partial\theta_j}\right|_{\vec{\theta} = \vec{\theta_0}} D\left( \left. f^{\left(\vec{\theta}_0\right)}\right\| f^{(\vec{\theta})}\right)\right)_{i,j=1}^{m}
\label{eq:def:_class_Fisher}
\end{align}
corresponding to~\eqref{eq:translateclassicaldistribution}.  The Fisher information matrix provides a lower bound on the variance of an unbiased estimate of the unknown parameter~$\vec{\theta}$ according to the Cramer-Rao inequality. For our purposes, it is sufficient to consider the family~\eqref{eq:translateclassicaldistribution}.}
\begin{align}
  J(f)&=\int \left(\nabla f(\vec{x})\right)^T\cdot \left(\nabla f(\vec{x})\right) \cdot \frac{1}{f(\vec{x})} d\vec{x} \ ,\label{eq:fisherinformationtranslation}
\end{align}
which is associated with the family of translated probability density functions
\begin{align}
f^{(\vec{\theta})}(\vec{x})=f(\vec{x}-\vec{\theta})\qquad\textrm{ for }\vec{\theta}\in\mathbb{R}^d\ .\label{eq:translateclassicaldistribution}
\end{align}
The quantity $J(f)$ is also often denoted as $J(X)$, to emphasize that it is the Fisher information of a random variable $X$ which has density function $f$.

For two densities~$f$ and $g$ describing random variables~$X$ and~$Y$, the convolution operation of interest is given by~\eqref{eq:classicalconvolution}, and describes the addition of the two random variables. Of particular interest is the case where one of the random variables is a centered normal distribution with unit variance. Such a random variable is denoted as~$Z$ below.  Key results in this setting are
\begin{description}
\item[the classical de Bruijn identity:]
  \begin{equation}
    \label{eq:de_Bruijn_cl}
   \frac{\partial}{\partial t}H \left(X + \sqrt{t}Z\right) = \frac{1}{2} J\left( X + \sqrt{t}Z\right)\ .
 \end{equation}
 This result was established by de Bruijn and gives an important relation between the Fisher information and the entropy when a random variable $X$ is perturbed under an additive Gaussian noise channel. It is a key ingredient in proofs of many information-theoretic inequalities. A simple proof can be found in \cite{CostaThomas84}.
\item[the Fisher information inequality:]
\begin{align}
  J\left( \sqrt{\lambda} X + \sqrt{1-\lambda}Y\right)\leq \lambda J(X) + (1-\lambda) J(Y),\ \mathrm{for}\ \lambda \in \left[0,1\right],
  \label{eq:fisherinformationineq}
\end{align}
as well as the related inequality
\begin{align}
  J\left(X + Y\right)^{-1} - J(X)^{-1} - J(Y)^{-1}\geq 0\ .
\end{align}
Proofs of these inequalities are given in \cite{Blachman65}, \cite{CostaThomas84}, and \cite{Zamir98}. Zamir~\cite{Zamir98} gives a particularly useful proof which relies on the information-processing inequality. This inequality states that the application of a channel cannot increase the Fisher information. Zamir's proof of the Fisher information inequality can be generalized to the quantum case.

\item[the Fisher information isoperimetric inequality]
  \begin{align}
    \frac{d}{d\epsilon}\bigg|_{\epsilon = 0}\left[\frac{1}{d}J(X+\sqrt{\epsilon}Z) \right]^{-1}\geq 1\ .
  \end{align}
This inequality implies that for Gaussian states, the inverse of the Fisher information has minimal sensitivity to additive Gaussian noise~\cite{dembosimpleconcavity}.

\item[the concavity of the entropy power]
  \begin{align}
    \frac{d^2}{d\epsilon^2}\bigg|_{\epsilon = 0} N(X + \sqrt{\epsilon}Z) \leq 0\ .
  \end{align}
This celebrated result establishes that the entropy power is a concave function along trajectories of the heat flow semigroup.
A proof is given in \cite{costa85}, and some shorter ones are presented in \cite{dembosimpleconcavity,Villanishortproofconcavity}.

\item[the entropy power inequality]
\begin{align} 
 N(X + Y) \geq N(X) + N(Y)\ .
\end{align}
Stam \cite{Stam59} gave a proof of the entropy power inequality which relies on the de Bruijn identity and the Fisher information inequality. The proof was later simplified by Blachman \cite{Blachman65} and others \cite{Barron86central,Demboetal91}.

\item[the isoperimetric inequality for entropies]
\begin{align}
  \frac{1}{d} J(X)N(X) \geq 2\pi e\ ,
\end{align}
as given in \cite{CostaThomas84}. The isoperimetric inequality for entropies implies that Gaussians have minimal entropy power among random variables with fixed Fisher information, and can be used to derive Log-Sobolev inequalities for the Ornstein-Uhlenbeck semigroup \cite{carlen94}, as we review in Appendix \ref{sec:classicalOU}.

\end{description}

\section{Preliminaries}\label{sec:quantumdef}

\subsection{States and information measures of interest}
We consider a $d$-mode bosonic system with
``position'' and ``momentum'' operators $(Q_k,P_k)$ of the~$k$-th mode satisfying  the canonical commutation relations~$[Q_j, P_k]=i\delta_{j,k}I$. Denoting the vector of position- and momentum-operators by $\vec{R} = \left(Q_1, P_1, \dots, Q_d, P_d\right)$, the Weyl displacement operators are defined as 
\begin{align}\label{eq:def:Weyl_op}
  W(\vec{\xi}) = e^{i \sqrt{2\pi}\, \vec{\xi} \cdot (\sigma\vec{R})}\qquad\textrm{ for } \vec{\xi} \in \mathbb{R}^{2d}\ .
\end{align}
Here  $\sigma = \begin{pmatrix}0 & 1 \\ -1 & 0 \end{pmatrix}^{\oplus d}$ is the matrix defining the symplectic inner  product. The factor~$\sqrt{2\pi}$
in the definition~\eqref{eq:def:Weyl_op} is for convenience only. From the commutation relations of position and momentum operators and the Campbell-Baker-Hausdorff formula, it is straightforward to check that the Weyl operators satisfy 
\begin{align}
  W(\vec{\xi})W(\vec{\eta}) = e^{-{i} \pi \vec{\xi} \cdot (\sigma \vec{\eta})}W(\vec{\xi} + \vec{\eta})\qquad\textrm{  for }\vec{\xi}, \vec{\eta} \in \mathbb{R}^{2d}\ .\label{eq:cbhcomposition}
\end{align}
Consider a state~$\rho$ on $d$~modes.  Quantities of interest are the von Neumann entropy $S(\rho)=-\tr(\rho\log \rho)$, as well as the relative entropy~$D(\rho\|\sigma)=\Tr \left(\rho \log \rho - \rho \log \sigma \right)$. The latter expression is defined for  positive operators~$\rho, \sigma$, and we will assume without further comments that the states~$\rho$,~$\sigma$ have full rank. 

For a multi-parameter family $\{\rho^{(\vec{\theta})}\}_{\vec{\theta}\in\mathbb{R}^{D}}$ of states depending smoothly on the parameters~$\vec{\theta}$, the divergence-based quantum Fisher information is defined as the trace of the Fisher information matrix \begin{align}
  J\left. \left( \{\rho^{(\vec{\theta})}\};\vec{\theta}\right)\right|_{\vec{\theta} = \vec{\theta_0}} = \left(\left. \frac{\partial^2}{\partial \theta_j \partial \theta_k} \right|_{\vec{\theta} = \vec{\theta_0}} D\left(\left. \rho^{(\vec{\theta_0})} \right\| \rho^{(\vec{\theta})}\right) \right)_{j,k = 1}^{D}\ .
\end{align}
This definition quantifies the dependence of the states on the parameter~$\vec{\theta}$  in the neighborhood of~$\vec{\theta}=\vec{\theta}_0$. 

In the following, we will apply this definition to the family~$\{ \rho^{(\vec{\theta})} \}_{\vec{\theta}\in\mathbb{R}^{2d}}$ of states
obtained by translating a given $d$-mode state~$\rho$:  Analogously to~\eqref{eq:translateclassicaldistribution}, where we translated a given probability density function~$f$, we define the translated states
 \begin{align}
  \rho^{(\vec{\theta})}:=W(\vec{\theta})\rho W(\vec{\theta})^\dagger\qquad\textrm{ for }\vec{\theta}\in\mathbb{R}^{2d}\ .
  \label{eq:translatedstate}
 \end{align}
Here translation by the parameter~$\vec{\theta}$ on phase space is achieved by means of the Weyl operators. The corresponding quantity
\begin{align}\label{eq:def:q_Fisher}
  J \left( \rho\right) = \Tr \left( \left. J\left(\{\rho^{(\vec{\theta})}\}; \vec{\theta}\right)\right|_{\vec{\theta}=\vec{0}} \right)\ ,
\end{align}
will simply be called the Fisher information of~$\rho$. 
Note that this definition matches that of the classical Fisher information~(cf.~\eqref{eq:def:_class_Fisher}). We emphasize that the concept of quantum Fisher information is non-unique (see~\cite{petzghi10}), but the use of~\eqref{eq:def:q_Fisher} is sufficient for our purposes.

\subsection{The quantum diffusion semigroup and the de Brujin identity}
Consider the Liouvillian defined on $d$ modes as
\begin{align}\label{eq:def:diffusion}
  \cLh(\rho) = - \pi \sum_{j=1}^{2d} \comm{R_j}{\comm{R_j}{\rho}}\ .
\end{align}
(The factor~$\pi$ differs from the convention used in~\cite{KoeSmiEPI}, but turns out to be convenient in the proof of Theorem \ref{thm:entropy_power}, as explained later.)
The one-parameter semigroup~$\{e^{t\cLh}\}_{t\geq 0}$ of completely positive trace-preserving maps (CPTPMs) generated by~$\cLh$ will be called the quantum (heat) diffusion semigroup. It has various nice properties: for example, as shown in \cite{KoeSmiEPI}, a quantum version of the de Bruijn identity~\eqref{eq:de_Bruijn_cl} reads
\begin{align}\label{eq:de_Bruijn_heat}
 \frac{d}{dt}\Big|_{t = 0} S\left( e^{t\cLh}(\rho)\right) = \half J(\rho)\ .
\end{align}
We remark that the proof of~\eqref{eq:de_Bruijn_heat}, which has  subsequently been applied, e.g., in~\cite{guha16} and generalized in~\cite{MariPalma15}, involves certain formal manipulations whose rigorous justification remains an interesting mathematical problem: as a quantum counterpart of partial integration, for example,
arguments under the trace need to be cyclically permuted. It is clear that such manipulations should be valid for sufficiently regular families of states (and indeed, are established for Gaussian states), but corresponding conditions are currently unknown.  We believe that the recent introduction of Schwartz operators in~\cite{keylkiukaswerner15} provides an appropriate framework to shed light on this aspect. In the following, we will assume that our states under consideration satisfy the required regularity assumptions. We hope 
that this issue will eventually  be resolved in a similar manner as in the classical setting, where initial work by Shannon~\cite{Shannon48-2} was followed by a long  sequence of papers with increasing rigor. In the case of the classical de Bruijn identity~\eqref{eq:de_Bruijn_cl},  Barron~\cite{Barron86central}, based on Stam's work~\cite{Stam59}, has shown validity for all random variables~$X$ with finite variance.

The map $e^{t\cLh}$ can be explicitly written as (see \cite{KoeSmiEPI,eisertwolfb})
\begin{align}
  e^{t\cLh}(\rho)   = \frac{1}{(2\pi)^d} \int e^{-\norm{\vec{\xi}}^2/2} W(\sqrt{t}\vec{\xi})\,\rho \, W(\sqrt{t}\vec{\xi})^\dagger d\vec{\xi}\ .\label{eq:heatequationintegrated} 
\end{align}
We may interpret this as the result of applying a certain convolution operation to a Gaussian distribution and a quantum state. More precisely, 
for $t\geq 0$, we  define a convolution operation~$\star_t$ between a probability density function $f : \mathbb{R}^{2d} \rightarrow \mathbb{R}$ and a $d$-mode state $\rho$ by
\begin{align}
  (f,\rho) \mapsto f \star_t \rho := \int f(\vec{\xi}) W(\sqrt{t}\vec{\xi})\, \rho\, W(\sqrt{t}\vec{\xi})^\dagger d\vec{\xi}\ .\label{eq:cF}
 \end{align}
In this terminology, Eq.~\eqref{eq:heatequationintegrated}  becomes 
 \begin{align}
   f_Z \star_t \rho = e^{t\cLh}(\rho)\ ,
 \end{align}
where $f_Z$ is a 
unit-variance centered Gaussian distribution, that is,
 \begin{align}\label{eq:def:Gaussian}
   f_Z (\vec{\xi}) = (2\pi)^{-d} e^{-\left\| \vec{\xi}\right\|^2/2}\ .
 \end{align}

 To close this section, we list two elementary properties of the convolution \eqref{eq:cF} which can be checked by straightforward calculation. If $f = f_X$ is the probability density function of a random variable $X$ and $t \geq 0$, then
 \begin{align}
   f_X\star_t \rho&=f_{\sqrt{t}X} \star_1\rho\ ,
   \label{eq:conv_scaling}
 \end{align}
 where the probability density function of the rescaled random variable $\sqrt{t}X$ is given by $f_{\sqrt{t}X}(\vec{\xi}) = f(\vec{\xi}/\sqrt{t})/\sqrt{t}^{2d}$. Addition of random variables corresponds to convolution in the following sense:

\begin{align}
  f_{X_1}\star_1 (f_{X_2}\star_1\rho)=f_{X_1+X_2}\star_1\rho\ ,
  \label{eq:conv_addition}
\end{align}
where $f_{X_1 + X_2}$ is defined as in \eqref{eq:classicalconvolution}.

\section{Quantum geometric inequalities}\label{sec:quantumineq}

In this section, we present several statements about the convolution operation \eqref{eq:cF} and the quantum Fisher information~\eqref{eq:def:q_Fisher}.
The key idea in establishing these results is the fact that
the convolution operation~\eqref{eq:cF} constitutes data processing, and hence provides an inequality because of the monotonicity of relative entropy. This is expressed in the following lemma. In the classical setting, the analogous argument for obtaining the Fisher information inequality~\eqref{eq:fisherinformationineq} was first emphasized by Zamir~\cite{Zamir98}. 
  
  \begin{lemma}{(Data processing inequality for convolution)}\label{lem:dataprocv}
  Let $f, g : \mathbb{R}^{2d} \rightarrow \mathbb{R}$ be probability density functions with full support. Then
\begin{align}\label{eq:q_data_processing}
D\left( \left. f \star_t \rho \right\| g \star_t \sigma \right) \leq D\left(\left. f \right\| g \right) + D\left( \left. \rho \right\| \sigma \right)\ 
\end{align}
for any states $\rho, \sigma$.
\end{lemma}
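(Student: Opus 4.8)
The plan is to realize the map $(f,\rho)\mapsto f\star_t\rho$ as a single quantum channel applied to a joint classical-quantum state, and then invoke the monotonicity of relative entropy under CPTP maps. Concretely, I would first introduce an auxiliary ``register'' Hilbert space that carries the phase-space variable $\vec\xi\in\mathbb{R}^{2d}$: encode the density function $f$ as the diagonal (classical) state $\hat f=\int f(\vec\xi)\,\proj{\vec\xi}\,d\vec\xi$ on this register (a continuous analogue; in a fully rigorous treatment one discretizes or works with the associated probability measure). The pair $(f,\rho)$ then corresponds to the product state $\hat f\otimes\rho$, and the pair $(g,\sigma)$ to $\hat g\otimes\sigma$. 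Because relative entropy is additive on tensor products, $D(\hat f\otimes\rho\,\|\,\hat g\otimes\sigma)=D(\hat f\|\hat g)+D(\rho\|\sigma)=D(f\|g)+D(\rho\|\sigma)$, which is exactly the right-hand side of~\eqref{eq:q_data_processing}.

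Next I would exhibit a CPTP map $\Phi_t$ that sends $\hat f\otimes\rho$ to $f\star_t\rho$ and simultaneously $\hat g\otimes\sigma$ to $g\star_t\sigma$; crucially, the \emph{same} map must work for both arguments, since the two density functions $f,g$ are absorbed into the register state, not into the channel. The natural choice is the ``controlled displacement'' channel: measure the register in the $\{\ket{\vec\xi}\}$ basis (equivalently, dephase it), and conditioned on outcome $\vec\xi$ apply the displacement $W(\sqrt t\vec\xi)(\cdot)W(\sqrt t\vec\xi)^\dagger$ to the system, then discard the register. Acting on $\hat f\otimes\rho$ this produces $\int f(\vec\xi)\,W(\sqrt t\vec\xi)\rho W(\sqrt t\vec\xi)^\dagger\,d\vec\xi = f\star_t\rho$ by the definition~\eqref{eq:cF}, and likewise for $(g,\sigma)$. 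One checks that this is a legitimate CPTP map (it is a convex combination of unitary conjugations composed with a measurement and a partial trace, all of which are CPTP). Then the data processing inequality $D(\Phi_t(X)\|\Phi_t(Y))\le D(X\|Y)$ applied to $X=\hat f\otimes\rho$, $Y=\hat g\otimes\sigma$ yields
\begin{align}
D(f\star_t\rho\,\|\,g\star_t\sigma)\le D(\hat f\otimes\rho\,\|\,\hat g\otimes\sigma)=D(f\|g)+D(\rho\|\sigma)\ ,
\end{align}
which is the claim.

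The main obstacle is not the conceptual structure but the analytic rigor of the classical-quantum encoding: the register Hilbert space for $\vec\xi\in\mathbb{R}^{2d}$ is infinite-dimensional and non-separable if one insists on sharp ``position eigenstates'' $\ket{\vec\xi}$, so the state $\hat f$ and the measurement channel need to be interpreted with care (e.g.\ via a genuine probability measure on $\mathbb{R}^{2d}$ and a POVM, or by a limiting/discretization argument). One clean workaround is to avoid an explicit register altogether: write $f\star_t\rho$ directly as the image of $f$ (viewed as an element of $L^1(\mathbb{R}^{2d})$ together with $\rho$) under the map, and appeal to the joint convexity and lower semicontinuity of relative entropy together with the operator version of the log-sum / chain-rule inequality; alternatively one partitions $\mathbb{R}^{2d}$ into cells, replaces $f,g$ by histograms, applies the finite-register argument, and passes to the limit using continuity of all quantities involved (this mirrors the level of rigor already adopted elsewhere in the paper). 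I would present the finite/discretized version as the rigorous core and remark that the continuum statement follows by approximation, noting that the densities have full support so no singularities of $\log(f/g)$ arise.

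Finally, I would double-check the two degenerate regimes as sanity tests: at $t=0$ the channel is $W(0)=\idty$ on the system (tensored with discarding the register), so the left side becomes $D(\rho\|\sigma)$ and the inequality reads $D(\rho\|\sigma)\le D(f\|g)+D(\rho\|\sigma)$, consistent since $D(f\|g)\ge 0$; and taking $\rho=\sigma$ recovers the statement that classical convolution with a fixed family of displacements (a classical channel on the measure $f$) does not increase $D(f\|g)$, which is the classical data processing inequality underlying Zamir's argument~\cite{Zamir98}. These checks confirm that the constructed channel $\Phi_t$ is the correct one and that no extraneous constants have been dropped.
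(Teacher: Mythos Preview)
Your approach is correct and conceptually clean, but it differs in presentation from the paper's. The paper avoids introducing an auxiliary classical register altogether: it writes $g=f\cdot h$ with $h=g/f$ (this is where the full-support hypothesis is used), so that
\[
D\!\left(f\star_t\rho\,\big\|\,g\star_t\sigma\right)
=D\!\left(\int f(\vec\xi)\,\rho^{(\sqrt t\vec\xi)}\,d\vec\xi\;\Big\|\;\int f(\vec\xi)\,h(\vec\xi)\,\sigma^{(\sqrt t\vec\xi)}\,d\vec\xi\right),
\]
and then applies the \emph{joint convexity} of relative entropy (Lieb~\cite{lieb73}) directly to this integral mixture. The scaling property $D(\rho\|\mu\sigma)=D(\rho\|\sigma)-\log\mu$ then strips off the factor $h(\vec\xi)$, producing exactly $D(\rho\|\sigma)+D(f\|g)$ after integrating $-\int f\log(g/f)$ and using unitary invariance of $D$.

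The two arguments are of course closely related, since joint convexity is equivalent to monotonicity under the ``discard the label'' channel. What each buys: the paper's route is entirely self-contained on the system Hilbert space and sidesteps the rigor issues you flag about continuous classical registers (no non-separable Hilbert space, no discretization limit needed). Your route, on the other hand, makes the data-processing structure transparent and would generalize immediately to any family of CPTP maps indexed by $\vec\xi$, not just unitary conjugations; it also explains \emph{why} the bound should be $D(f\|g)+D(\rho\|\sigma)$ before any computation is done. If you want to keep your framing but match the paper's level of rigor, you can simply replace the final ``apply monotonicity to $\Phi_t$'' step by the joint-convexity-plus-scaling computation the paper does; then the auxiliary register becomes purely heuristic motivation and can be omitted.
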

\begin{proof}
The quantum relative entropy satisfies the following scaling property for scalars~$\lambda, \mu > 0$:
\begin{align}\label{eq:divergencescale}
  D\left(\left.\lambda \rho \right\| \mu \sigma \right) = \lambda D\left( \left. \rho \right\| \sigma \right) - \lambda \Tr\left(\rho\right) \log \frac{\mu}{\lambda}\ .
\end{align} 
Defining $h(\vec{\xi}) = \frac{g(\vec{\xi})}{f(\vec{\xi})}$, we obtain (using the translated states defined in Eq. \eqref{eq:translatedstate})
  \begin{align}
D\left(\left. f \star_t \rho \right\| g \star_t \sigma \right) &= D\left( \left. f \star_t \rho \right\| \left( f \cdot h\right) \star_t \sigma \right)\\
&= D\left( \left. \int f(\vec{\xi}) \rho^{(\sqrt{t}\vec{\xi})}\; \mathrm{d}\vec{\xi}\right\|\int f(\vec{\xi}) h(\vec{\xi}) \sigma^{(\sqrt{t}\vec{\xi})}\;\mathrm{d}\vec{\xi} \right)\\
&\leq \int f(\vec{\xi}) D\left(\left. \rho^{(\sqrt{t}\vec{\xi})} \right\| h(\vec{\xi}) \sigma^{(\sqrt{t}\xi)}\right) \mathrm{d}\vec{\xi}\\
&= \int f(\vec{\xi})\left( D\left( \left. \rho^{(\sqrt{t}\vec{\xi})} \right\| \sigma^{(\sqrt{t}\vec{\xi})} \right) - \Tr\left(\rho^{(\sqrt{t}\vec{\xi})}\right) \log h(\vec{\xi})\right) \mathrm{d}\vec{\xi}\\
&= D\left( \left. \rho \right\| \sigma \right) - \int f(\vec{\xi}) \log \frac{g(\vec{\xi})}{f(\vec{\xi})}\\
&= D\left( \left. \rho \right\| \sigma \right)+D\left(\left. f \right\| g \right)\ .
\end{align}
Here the inequality we used is the joint convexity of the relative entropy (see~\cite[Theorem 1]{lieb73}). 
For the third equality we used property \eqref{eq:divergencescale}, and for the fourth equality we used unitary invariance of the relative entropy and the trace as well as the fact that $f$ is a probability distribution. The last equality follows from the definition~\eqref{def:divergence} of the divergence.
 \end{proof}
To convert Lemma~\ref{lem:dataprocv} into a statement about Fisher information, we need the following covariance property of the convolution operation~\eqref{eq:cF}: it breaks down translations of the state~$f\star_t\rho$ into translations of the function~$f$ and the state~$\rho$, respectively.

 \begin{lemma}\label{lemma:equiv} 
   Let $\omega_q, \omega_c>0$ and $t\geq 0$. Then
  \begin{align}\label{eq:equivalence}
    \left( f \star_t \rho \right)^{(\omega\vec{\theta})} =  f^{(\omega_c\vec{\theta})} \star_t \rho^{(\omega_q\vec{\theta})}\qquad\textrm{ for all } \vec{\theta } \in \mathbb{R}^{2d}\ ,
  \end{align}
  where $\omega=\omega_q+\sqrt{t}\omega_c$.
   \end{lemma}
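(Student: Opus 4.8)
The plan is to unwind both sides of~\eqref{eq:equivalence} using only the definition~\eqref{eq:cF} of the convolution, the definition~\eqref{eq:translatedstate} of the translated state, and the composition law~\eqref{eq:cbhcomposition} for Weyl operators. First I would write out the right-hand side: by definition,
\begin{align}
f^{(\omega_c\vec{\theta})} \star_t \rho^{(\omega_q\vec{\theta})} = \int f^{(\omega_c\vec\theta)}(\vec\xi)\, W(\sqrt t\,\vec\xi)\, W(\omega_q\vec\theta)\,\rho\, W(\omega_q\vec\theta)^\dagger W(\sqrt t\,\vec\xi)^\dagger\, d\vec\xi\ ,
\end{align}
and then substitute $f^{(\omega_c\vec\theta)}(\vec\xi) = f(\vec\xi - \omega_c\vec\theta)$ and change variables $\vec\eta = \vec\xi - \omega_c\vec\theta$, so $\vec\xi = \vec\eta + \omega_c\vec\theta$ (the Jacobian is $1$). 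This turns the Weyl operator argument into $W(\sqrt t\,\vec\eta + \sqrt t\,\omega_c\vec\theta)$.

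The key step is then to combine $W(\sqrt t\,\vec\eta + \sqrt t\,\omega_c\vec\theta)\,W(\omega_q\vec\theta)$ using~\eqref{eq:cbhcomposition}. Since $\omega = \omega_q + \sqrt t\,\omega_c$, this product equals $e^{-i\pi(\sqrt t\,\vec\eta + \sqrt t\,\omega_c\vec\theta)\cdot(\sigma\,\omega_q\vec\theta)} W(\sqrt t\,\vec\eta + \omega\vec\theta)$. The crucial observation is that the phase factor is a pure scalar, and moreover, in the product $W(\cdots)\,W(\omega_q\vec\theta)\,\rho\,W(\omega_q\vec\theta)^\dagger\,W(\cdots)^\dagger$ the phase from the left multiplication is cancelled by its complex conjugate coming from the right multiplication (since $W(\vec\zeta)^\dagger = W(-\vec\zeta)$ and the phases are opposite); alternatively, one notes that $W(\omega_q\vec\theta)\rho W(\omega_q\vec\theta)^\dagger$ is already $\rho^{(\omega_q\vec\theta)}$ and that the phase in regrouping $W(\sqrt t\vec\eta + \sqrt t\omega_c\vec\theta + \omega_q\vec\theta) = W(\sqrt t\vec\eta + \omega\vec\theta)$ against its conjugate drops out of the sandwich $W(\cdot)\,X\,W(\cdot)^\dagger$. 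Either way, the integral becomes $\int f(\vec\eta)\, W(\sqrt t\,\vec\eta + \omega\vec\theta)\,\rho\,W(\sqrt t\,\vec\eta + \omega\vec\theta)^\dagger\, d\vec\eta$.

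Finally I would recognize this last integral as the left-hand side: using the composition law once more (or directly the fact that $W(\vec a + \vec b)\rho W(\vec a + \vec b)^\dagger = W(\vec a)\big(W(\vec b)\rho W(\vec b)^\dagger\big)W(\vec a)^\dagger$ up to a scalar phase that cancels in the conjugation), we have $W(\sqrt t\,\vec\eta + \omega\vec\theta)\,\rho\,W(\sqrt t\,\vec\eta + \omega\vec\theta)^\dagger = W(\sqrt t\,\vec\eta)\,\rho^{(\omega\vec\theta)}\,W(\sqrt t\,\vec\eta)^\dagger$, so the integral equals $\int f(\vec\eta)\,W(\sqrt t\,\vec\eta)\,\rho^{(\omega\vec\theta)}\,W(\sqrt t\,\vec\eta)^\dagger\,d\vec\eta = f\star_t\rho^{(\omega\vec\theta)}$. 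But wait---the claimed left-hand side is $(f\star_t\rho)^{(\omega\vec\theta)} = W(\omega\vec\theta)(f\star_t\rho)W(\omega\vec\theta)^\dagger$; interchanging $W(\omega\vec\theta)$ with the integral and regrouping against $W(\sqrt t\vec\xi)$ (again a cancelling phase) shows $(f\star_t\rho)^{(\omega\vec\theta)} = f\star_t\rho^{(\omega\vec\theta)}$ as well, closing the loop.

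The main obstacle is purely bookkeeping: tracking the Campbell--Baker--Hausdorff phase factors from~\eqref{eq:cbhcomposition} carefully enough to be sure they cancel at each regrouping. There is no analytic subtlety---the integrals are finite since $f$ is a probability density and the $W(\vec\xi)$ are unitary---so the whole proof is a change of variables plus two applications of the Weyl composition law, and the only thing to get right is that every phase introduced on the left of a state is undone by the adjoint on the right.
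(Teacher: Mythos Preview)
Your proposal is correct and follows essentially the same route as the paper: both proofs reduce each side to an integral of the form $\int f(\cdot)\,W(\cdot)\rho W(\cdot)^\dagger\,d\vec{\xi}$ using the Weyl composition law~\eqref{eq:cbhcomposition} (with the CBH phases cancelling in the conjugation $W(\cdot)\rho W(\cdot)^\dagger$), and then match the two integrals by the change of variables $\vec{\eta}=\vec{\xi}-\omega_c\vec{\theta}$. The only difference is organizational: the paper computes the left- and right-hand sides separately and compares, whereas you pass through the intermediate expression $f\star_t\rho^{(\omega\vec\theta)}$ and verify it equals both sides---an unnecessary but harmless detour.
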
  
  \begin{proof}
 According to Definition~\eqref{eq:cF} and~\eqref{eq:cbhcomposition} we have 
 \begin{align}
    \left( f \star_t \rho \right)^{(\omega\vec{\theta})} &=\int f(\vec{\xi})W(\omega\vec{\theta})W(\sqrt{t}\vec{\xi})
   \rho W(\sqrt{t}\vec{\xi})^\dagger W(\omega\vec{\theta})^\dagger d\vec{\xi}\nonumber\\
   &=\int f(\vec{\xi})W(\omega\vec{\theta}+\sqrt{t}\vec{\xi})\rho W(\omega\vec{\theta}+\sqrt{t}\vec{\xi})^\dagger d\vec{\xi}\ .\label{inside:left}
 \end{align}
On the other hand, we similarly have 
\begin{align}
  f^{(\omega_c \vec{\theta})} \star_t \rho^{(\omega_q \vec{\theta})}
 &=
 \int f(\vec{\xi}-\omega_c\vec{\theta})W(\sqrt{t}\vec{\xi})W(\omega_q\vec{\theta})\rho W(\omega_q\vec{\theta})^\dagger W(\sqrt{t}\vec{\xi})^\dagger
 d\vec{\xi}\nonumber\\
 &=\int f(\vec{\xi}-\omega_c\vec{\theta})W(\omega_q \vec{\theta} + \sqrt{t}\vec{\xi})\rho W(\omega_q\vec{\theta}+\sqrt{t}\vec{\xi})^\dagger
 d\vec{\xi}\label{inside:right}\ .
\end{align}
Through a simple change of variables and recalling that $\omega=\omega_q+\sqrt{t}\omega_c$, the claim~\eqref{eq:equivalence} follows from~\eqref{inside:left} and~\eqref{inside:right}.
  \end{proof}

 Combining the data processing inequality of Lemma~\ref{lem:dataprocv} with Lemma~\ref{lemma:equiv}, we prove an inequality which may be seen as a classical-quantum version of the Stam inequality. It relates
the Fisher information of the state $f\star_t \rho$ to the Fisher informations of $f$ and $\rho$, respectively.
  \begin{theorem}\label{thm:InfoIneq} (Quantum Stam inequality)
    Let $\omega_q,\omega_c\in\mathbb{R},$ and $t\geq 0$. Then
  \begin{align}
    \omega^2 J(f \star_t \rho)\leq \omega_q^2 J(\rho)+\omega_c^2 J(f)\ ,
    \label{eq:fisherinfoineq}
  \end{align}
  where $\omega=\omega_q+\sqrt{t}\omega_c$. In particular,
  \begin{align}\label{eq:Stam}
    J(f \star_t \rho)^{-1}- J(\rho)^{-1}-tJ(f)^{-1}\geq 0\ .
  \end{align}
  \end{theorem}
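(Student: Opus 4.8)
The plan is to derive~\eqref{eq:fisherinfoineq} from the data processing inequality of Lemma~\ref{lem:dataprocv} by applying it to \emph{translated} arguments and expanding to second order, and then to obtain~\eqref{eq:Stam} as a pointwise optimization over the free parameters $\omega_q,\omega_c$. Concretely, fix a direction $\vec{\theta}\in\mathbb{R}^{2d}$ and a small scalar $s$, and apply Lemma~\ref{lem:dataprocv} to the pairs of states $\rho^{(s\omega_q\vec\theta)}$ versus $\rho$ and the pair of densities $f^{(s\omega_c\vec\theta)}$ versus $f$. This gives
\begin{align}
D\left(\left. f^{(s\omega_c\vec\theta)}\star_t\rho^{(s\omega_q\vec\theta)}\,\right\|\, f\star_t\rho\right)\leq D\left(\left. f^{(s\omega_c\vec\theta)}\,\right\|\, f\right)+D\left(\left.\rho^{(s\omega_q\vec\theta)}\,\right\|\,\rho\right)\ .
\end{align}
By the covariance property of Lemma~\ref{lemma:equiv} (used with the scalar parameter $s\vec\theta$ in place of $\vec\theta$), the left-hand argument equals $\left(f\star_t\rho\right)^{(s\omega\vec\theta)}$ with $\omega=\omega_q+\sqrt{t}\,\omega_c$, so the inequality reads
\begin{align}
D\left(\left.\left(f\star_t\rho\right)^{(s\omega\vec\theta)}\,\right\|\,f\star_t\rho\right)\leq D\left(\left. f^{(s\omega_c\vec\theta)}\,\right\|\, f\right)+D\left(\left.\rho^{(s\omega_q\vec\theta)}\,\right\|\,\rho\right)\ .
\end{align}

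Next I would differentiate twice in $s$ at $s=0$. Each term is a relative entropy between a state and its translate along a fixed phase-space direction; by the very definition~\eqref{eq:def:q_Fisher} of the Fisher information (and its classical counterpart~\eqref{eq:fisherinformationtranslation}), the first-order term vanishes and the coefficient of $s^2/2$ in $D\big(\tau^{(s c\vec\theta)}\|\tau\big)$ is $c^2\,\vec\theta^{T}\!\big(\text{Fisher matrix of }\tau\big)\vec\theta$. Hence the twice-differentiated inequality gives, for every $\vec\theta$,
\begin{align}
\omega^2\,\vec\theta^{T}I(f\star_t\rho)\vec\theta\leq \omega_c^2\,\vec\theta^{T}I(f)\vec\theta+\omega_q^2\,\vec\theta^{T}I(\rho)\vec\theta\ ,
\end{align}
where $I(\cdot)$ denotes the relevant Fisher information matrix. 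Taking the trace over an orthonormal basis of directions $\vec\theta$ yields exactly~\eqref{eq:fisherinfoineq}. (One should note that the convolution is translation-covariant in the sense that the Fisher matrix of $f\star_t\rho$ is a well-defined object here; this is implicit in Lemma~\ref{lemma:equiv}.)

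Finally, to get~\eqref{eq:Stam} from~\eqref{eq:fisherinfoineq}, I would treat the latter as holding for all real $\omega_q,\omega_c$ and optimize. Writing $a=J(\rho)$, $b=J(f)$, $c=J(f\star_t\rho)$, the inequality $(\omega_q+\sqrt{t}\,\omega_c)^2 c\le \omega_q^2 a+\omega_c^2 b$ must hold for all $(\omega_q,\omega_c)$; equivalently the quadratic form $\omega_q^2(a-c)+\omega_c^2(b-tc)-2\sqrt{t}\,c\,\omega_q\omega_c$ is positive semidefinite. Its determinant condition is $(a-c)(b-tc)\ge t c^2$, i.e.\ $ab-tac-bc\ge 0$, which upon dividing by $abc>0$ is precisely $c^{-1}-a^{-1}-tb^{-1}\ge 0$. (Equivalently, one can just plug in the near-optimal choice $\omega_q=J(\rho)^{-1}$, $\omega_c=\sqrt{t}\,J(f)^{-1}$ and simplify.) The main subtlety — and the place where rigor must be handled with care rather than any genuine obstacle — is the interchange of the second derivative $\partial_s^2|_{s=0}$ with the trace/integral defining the relative entropies and the identification of the Hessian of $s\mapsto D(\tau^{(s\vec\theta)}\|\tau)$ with the Fisher information matrix; this requires the same regularity assumptions on the states already invoked for the de Bruijn identity~\eqref{eq:de_Bruijn_heat}, and I would invoke those here as well.
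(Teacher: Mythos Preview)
Your proposal is correct and follows essentially the same route as the paper: apply the data processing inequality (Lemma~\ref{lem:dataprocv}) to translated arguments, use the covariance identity (Lemma~\ref{lemma:equiv}) to recognize the convolution of translates as a translate of the convolution, extract the Fisher information matrices by taking second derivatives at the origin, trace over directions, and finally specialize $(\omega_q,\omega_c)$ to obtain~\eqref{eq:Stam}. The only cosmetic differences are that the paper varies the \emph{second} argument of the relative entropies (matching its definition of the Fisher matrix) and works coordinatewise in $\theta_j$, whereas you vary the first argument along a single direction $\vec\theta$ and then sum over an orthonormal basis; these are equivalent by unitary invariance of $D(\cdot\|\cdot)$, and your optimization/PSD argument for~\eqref{eq:Stam} is a valid alternative to the explicit choice the paper plugs in.
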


  \begin{proof} Let $\vec{\theta_0}=(\theta_0^{(1)},\dots,\theta_0^{(2d)})\in\mathbb{R}^{2d}$. For $j=1,\dots 2d$ and $\theta_j \in \mathbb{R}$, introduce the vector 
    \begin{align}
      \vec{\tilde{\theta_j}} = \vec{\tilde{\theta_j}}(\theta_j) = (\theta_0^{(1)},\dots,\theta_0^{(j-1)}, \theta_j, \theta_0^{(j+1)},\dots,\theta_0^{(2d)})\ .
\end{align} Define the functions
\begin{equation}
  f(\theta_j):=D\left(\left. f^{(\omega_c\vec{\theta_0})} \right\| f^{(\omega_c\vec{\tilde{\theta_j}})}\right) + D\left( \left. \rho^{(\omega_q\vec{\theta_0})} \right\| \rho^{(\omega_q\vec{\tilde{\theta_j}})} \right)\ ,
\end{equation}
and 
\begin{equation}
  g(\theta_j):=D\left( \left. f^{(\omega_c\vec{\theta_0})} \star_t \rho^{(\omega_q\vec{\theta_0})} \right\| f^{(\omega_c\vec{\tilde{\theta_j}})} \star_t \rho^{(\omega_q\vec{\tilde{\theta_j}})} \right)\ .
\end{equation}
From the definition of the relative entropy and the data processing inequality \eqref{eq:q_data_processing}, for every $\theta_j$, we have
\begin{align}
&0\leq g(\theta_j)\leq f(\theta_j)\\
&0=f(\theta_0^{(j)})=g(\theta_0^{(j)}) \ .
\end{align}
The second derivative of $g$ can be written as the limit
\begin{equation}
\left. \frac{d^2}{d\theta_j^2}\right|_{\theta_j=\theta_0^{(j)}} g(\theta_j)=\lim_{\epsilon\rightarrow 0}\frac{g(\theta_0^{(j)}+\epsilon)-2g(\theta_0^{(j)})+g(\theta_0^{(j)}-\epsilon)}{\epsilon^2}\ ,
\end{equation}
and, therefore, is it bounded
\begin{equation}
0\leq \left. \frac{d^2}{d\theta_j^2}\right|_{\theta_j=\theta_0^{(j)}} g(\theta_j)\leq \left. \frac{d^2}{d\theta_j^2}\right|_{\theta_j=\theta_0^{(j)}} f(\theta_j)\ .
\end{equation}
Since 
\begin{align}
\Tr\left(J(\{f^{(\omega_c\vec{\theta})}\};\vec{\theta})\big|_{\vec{\theta}=\vec{\theta_0}}\right) + \Tr\left(J(\{\rho^{(\omega_q\vec{\theta})}\};\vec{\theta})\big|_{\vec{\theta}=\vec{\theta_0}}\right)=\sum_{j=1}^{2d}\left. \frac{d^2}{d\theta_j^2}\right|_{\theta_j=\theta_0^{(j)}} f(\theta_j)\ ,
\end{align}
and
\begin{align}
\Tr \left( J(\{f^{(\omega_c\vec{\theta})}\star_t \rho^{(\omega_q\vec{\theta})}\};\vec{\theta})\big|_{\vec{\theta}=\vec{\theta_0}} \right)=\sum_{j=1}^{2d}\left. \frac{d^2}{d\theta_j^2}\right|_{\theta_j=\theta_0^{(j)}} g(\theta_j)\ ,
\end{align}
we conclude that
\begin{equation}\label{eq:traces}
\Tr \left( J(\{f^{(\omega_c\vec{\theta})}\star_t \rho^{(\omega_q\vec{\theta})}\};\vec{\theta})\big|_{\vec{\theta}=\vec{\theta_0}} \right)\leq \Tr\left(J(\{f^{(\omega_c\vec{\theta})}\};\vec{\theta})\big|_{\vec{\theta}=\vec{\theta_0}}\right) + \Tr\left(J(\{\rho^{(\omega_q\vec{\theta})}\};\vec{\theta})\big|_{\vec{\theta}=\vec{\theta_0}}\right)\ .
\end{equation}
We remark that above inequality can also be derived as a matrix inequality without tracing both sides. However, this is not required for our purposes and our definition of the Fisher information.

Using Lemma~\ref{lemma:equiv}, the left-hand side of this equation for $\vec{\theta_0}=\vec{0}$ can be written as
 \begin{align}
   \Tr \left( J(\{f^{(\omega_c\vec{\theta})}\star_t \rho^{(\omega_q\vec{\theta})}\};\vec{\theta})\big|_{\vec{\theta}=\vec{0}} \right)&=
   \Tr \left( J(\{(f \star_t \rho)^{(\omega\vec{\theta})}\};\vec{\theta})\big|_{\vec{\theta}=\vec{0}} \right)=\omega^2 J(f \star_t \rho)\ .  \label{eq:xequationv}
 \end{align}
The right-hand side of Eq. \eqref{eq:traces} can be simplified by noticing that both the classical and quantum Fisher information matrices satisfy reparametrization formulas (see~\cite[Lemma~IV.1]{KoeSmiEPI}) 
\begin{align}
J(\{f^{(\omega_c\vec{\theta})}\};\vec{\theta})|_{\vec{\theta}=\vec{\theta_0}}=\omega_c^2J(\{f^{(\vec{\theta})}\};\vec{\theta})\qquad\text{ and }\qquad J(\{\rho^{(\omega_q\vec{\theta})}\};\vec{\theta})=\omega_q^2J(\{\rho^{(\vec{\theta})}\};\vec{\theta})\ .
\label{eq:reparamform}
\end{align}
Therefore, taking $\vec{\theta}=\vec{0}$ leads to
 \begin{align}
   \Tr\left(J(\{f^{(\omega_c\vec{\theta})}\};\vec{\theta})\big|_{\vec{\theta}=\vec{0}}\right) + \Tr\left(J(\{\rho^{(\omega_q\vec{\theta})}\};\vec{\theta})\big|_{\vec{\theta}=\vec{0}}\right)=\omega_c^2J(f)+
   \omega_q^2 J(\rho)\ .\label{inside2:right}
 \end{align}
With~\eqref{eq:traces},~\eqref{eq:xequationv}, and \eqref{inside2:right}, we arrive at the desired result \eqref{eq:fisherinfoineq}.
Finally, setting $\omega_q=\frac{J(\rho)^{-1}}{J(\rho)^{-1}+tJ(f)^{-1}}$ and $\omega_c=\frac{\sqrt{t}J(f)^{-1}}{J(\rho)^{-1}+tJ(f)^{-1}}$, we obtain \eqref{eq:Stam}.

\end{proof}

In the next lemma we show how the quantum Stam inequality implies an isoperimetric inequality for the quantum Fisher information.

\begin{lemma} (Quantum Fisher information isoperimetric inequality)
The following inequality holds:
\begin{align}\label{eq:iso}
  \frac{d}{dt}\bigg|_{t=0} \Bigl[\frac{1}{2d}J(e^{t\cLh}(\rho))\Bigr]^{-1} \geq 1\ .
\end{align}
\end{lemma}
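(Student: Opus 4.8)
The plan is to specialize the quantum Stam inequality~\eqref{eq:Stam} to the case $f=f_Z$, the unit-variance centered Gaussian, and to combine it with the relation $f_Z\star_t\rho=e^{t\cLh}(\rho)$ together with the elementary scaling property~\eqref{eq:conv_scaling} of the convolution. First I would compute $J(f_Z)$: for a centered Gaussian on $\mathbb{R}^{2d}$ with unit variance in each direction, the classical Fisher information~\eqref{eq:fisherinformationtranslation} is simply $J(f_Z)=2d$, since $\nabla f_Z/f_Z = -\vec\xi$ and $\int \|\vec\xi\|^2 f_Z(\vec\xi)\,d\vec\xi = 2d$. Plugging $f=f_Z$ into~\eqref{eq:Stam} then gives $J(e^{t\cLh}(\rho))^{-1}\geq J(\rho)^{-1} + t/(2d)$ for all $t\geq 0$.

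Next I would turn this into the derivative statement. Write $g(t):=\bigl[\tfrac{1}{2d}J(e^{t\cLh}(\rho))\bigr]^{-1} = 2d\cdot J(e^{t\cLh}(\rho))^{-1}$. The inequality just derived reads $g(t)\geq g(0) + t$, i.e.\ $g(t)-g(0)\geq t$ for all $t\geq 0$. Dividing by $t>0$ and letting $t\downarrow 0$ yields $\frac{d}{dt}\big|_{t=0}g(t)\geq 1$, which is exactly~\eqref{eq:iso}. (One should note the one-sided derivative suffices here, or invoke the assumed regularity of $t\mapsto e^{t\cLh}(\rho)$ so that the two-sided derivative exists; either way the bound on the difference quotient from above passes to the limit.)

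The one subtlety to address is that~\eqref{eq:Stam} as stated applies to the convolution $f\star_t\rho$ with the \emph{same} parameter $t$ appearing in $\star_t$, whereas $e^{t\cLh}(\rho) = f_Z\star_t\rho$ already has this form, so in fact no reparametrization via~\eqref{eq:conv_scaling} is needed — the identity $f_Z\star_t\rho=e^{t\cLh}(\rho)$ is precisely the right object. I would double-check that the normalization constants match: the Liouvillian~\eqref{eq:def:diffusion} carries a factor $\pi$ and the Weyl operators~\eqref{eq:def:Weyl_op} a factor $\sqrt{2\pi}$, and these are exactly what make $f_Z\star_t\rho = e^{t\cLh}(\rho)$ hold with the unit-variance Gaussian~\eqref{eq:def:Gaussian}; this is stated in the preliminaries, so I may cite it directly.

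The main obstacle — really the only place where something could go wrong — is the passage from the finite-$t$ inequality $g(t)\geq g(0)+t$ to the derivative bound, since this requires knowing that $g$ is differentiable (or at least that its upper right Dini derivative at $0$ is the quantity of interest). This rests on the same regularity assumptions on the family $\{e^{t\cLh}(\rho)\}_{t\geq0}$ that were already invoked in connection with the quantum de Bruijn identity~\eqref{eq:de_Bruijn_heat}; under those assumptions $J(e^{t\cLh}(\rho))$ is a smooth, positive function of $t$, hence so is $g$, and the argument goes through. Everything else is a one-line substitution into Theorem~\ref{thm:InfoIneq}.
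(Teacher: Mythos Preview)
Your proposal is correct and follows essentially the same approach as the paper: specialize the Stam inequality~\eqref{eq:Stam} to $f=f_Z$, use $J(f_Z)=2d$, rearrange to get $\frac{1}{t}\bigl(J(e^{t\cLh}(\rho))^{-1}-J(\rho)^{-1}\bigr)\geq (2d)^{-1}$, and take $t\to 0$. The paper's proof is more terse (it does not explicitly compute $J(f_Z)$ or discuss the regularity caveat), but the argument is identical.
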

\begin{proof}
Recall that we have $ e^{t\cLh}(\rho)=f_Z \star_t \rho$ for a Gaussian random variable $Z$ \eqref{eq:def:Gaussian}.
In the quantum Stam inequality \eqref{eq:Stam}, take $f=f_Z$, then
 \begin{align}
   \frac{1}{t}\left( J(f_Z\star_t \rho)^{-1}-J(\rho)^{-1}\right)\geq 
   J(f_Z)^{-1}=(2d)^{-1}\ .
 \end{align}
 Taking the limit $t\rightarrow 0$, we arrive at the desired inequality.
\end{proof}

The isoperimetric inequality is tight for one mode ($d = 1$) and saturated by the Gaussian thermal state
\begin{align}
  \omega_\n = \frac{1}{\n+1}\sum_{j=0}^\infty \left(\frac{\n}{\n+1}\right)^j\proj{j}
  \label{eq:gaussianthermal}
\end{align}
with mean-photon number $\n$: As shown in Appendix \ref{app:Fisher}, we have
\begin{align}
  \frac{d}{dt}\bigg|_{t=0}\Bigl[\frac{1}{2}J(e^{t\cLh}(\omega_\n))\Bigr]^{-1} &=\frac{1}{\n(\n+1)}\log^{-2}\left(1+\frac{1}{\n}\right)\rightarrow\ 1\qquad \text{ as }\n\rightarrow\infty\ .
\end{align}

As in classical information theory, the isoperimetric inequality for the quantum Fisher information implies concavity of the entropy power under diffusion as an immediate consequence.
We define the entropy power as
\begin{align}
N(\rho) = \exp\left(S(\rho)/d\right)\ .
  \label{def:entropy_power}
\end{align}

\begin{theorem} (Concavity of the quantum entropy power) 
  The entropy power along trajectories of the diffusion semigroup  \eqref{eq:def:diffusion} is concave, i.e.
 \begin{align}
   \frac{d^2}{dt^2}\bigg|_{t=0} N(e^{t\cLh}(\rho)) \leq 0\ .
 \end{align}
\end{theorem}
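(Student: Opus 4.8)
The plan is to combine the two relations already in hand: the de Bruijn identity~\eqref{eq:de_Bruijn_heat}, which gives $\frac{d}{dt}S(e^{t\cLh}(\rho)) = \half J(e^{t\cLh}(\rho))$ along the whole trajectory (not just at $t=0$, by applying it to the state $e^{s\cLh}(\rho)$ at each time $s$ and using the semigroup property $e^{t\cLh}e^{s\cLh} = e^{(t+s)\cLh}$), and the isoperimetric inequality~\eqref{eq:iso} for the quantum Fisher information. First I would write $N(t) := N(e^{t\cLh}(\rho)) = \exp(S(t)/d)$ with $S(t) := S(e^{t\cLh}(\rho))$, so that $\frac{d}{dt}N(t) = \frac{1}{d}N(t)S'(t)$ and
\begin{align}
\frac{d^2}{dt^2}N(t) = \frac{1}{d}N(t)\left(\frac{1}{d}S'(t)^2 + S''(t)\right)\ .
\end{align}
Since $N(t) > 0$, concavity at $t=0$ reduces to showing $S''(0) \leq -\frac{1}{d}S'(0)^2$.

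Next I would express both quantities in terms of the Fisher information via de Bruijn. The identity gives $S'(t) = \half J(e^{t\cLh}(\rho))$ for all $t \geq 0$, hence $S'(0) = \half J(\rho)$ and, differentiating once more, $S''(0) = \half \frac{d}{dt}\big|_{t=0} J(e^{t\cLh}(\rho))$. Now I would rewrite the isoperimetric inequality~\eqref{eq:iso} in a form that controls $\frac{d}{dt}J$: since $t \mapsto \left[\tfrac{1}{2d}J(e^{t\cLh}(\rho))\right]^{-1}$ has derivative at $0$ equal to $-\left[\tfrac{1}{2d}J(\rho)\right]^{-2}\cdot \tfrac{1}{2d}\frac{d}{dt}\big|_{t=0}J(e^{t\cLh}(\rho))$, the bound~\eqref{eq:iso} that this derivative is $\geq 1$ translates directly into
\begin{align}
\frac{d}{dt}\bigg|_{t=0} J(e^{t\cLh}(\rho)) \leq -\frac{1}{2d}J(\rho)^2\ .
\end{align}
Substituting, $S''(0) = \half \frac{d}{dt}\big|_{t=0}J(e^{t\cLh}(\rho)) \leq -\tfrac{1}{4d}J(\rho)^2 = -\tfrac{1}{d}\left(\half J(\rho)\right)^2 = -\tfrac{1}{d}S'(0)^2$, which is exactly what was needed. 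Plugging back gives $\frac{d^2}{dt^2}\big|_{t=0}N(e^{t\cLh}(\rho)) = \tfrac{1}{d}N(0)\left(\tfrac{1}{d}S'(0)^2 + S''(0)\right) \leq 0$.

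The main obstacle is not the algebra but the regularity/justification underlying the two analytic inputs. Applying de Bruijn at a general time $s$ and then differentiating in $s$ requires that $t \mapsto e^{t\cLh}(\rho)$ stays within the class of states for which~\eqref{eq:de_Bruijn_heat} and the formal trace manipulations behind it are valid, and that $J(e^{t\cLh}(\rho))$ is differentiable in $t$; similarly, interpreting~\eqref{eq:iso} as a pointwise statement about $\frac{d}{dt}J$ presumes differentiability of the inverse Fisher information. As the paper already flags (the discussion around Schwartz operators and the classical history from Shannon to Barron), these are the genuinely delicate points, and under the standing regularity assumptions adopted in the excerpt they are taken as granted; the computation itself is then a short two-line consequence of de Bruijn plus the isoperimetric inequality.
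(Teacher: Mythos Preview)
Your proof is correct and follows essentially the same approach as the paper: compute $N''(0)$ via two applications of de~Bruijn to obtain $N(\rho)\bigl([\tfrac{1}{2d}J(\rho)]^2+\tfrac{1}{2d}\tfrac{d}{dt}|_{t=0}J(e^{t\cLh}(\rho))\bigr)$, then observe that the isoperimetric inequality~\eqref{eq:iso} is equivalent to $\tfrac{1}{2d}J(\rho)^2+\tfrac{d}{dt}|_{t=0}J(e^{t\cLh}(\rho))\leq 0$. Your added discussion of the regularity caveats is appropriate and consistent with the paper's own remarks.
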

\begin{proof}
 Two applications of de Bruijn identity  \eqref{eq:de_Bruijn_heat} yield
 \begin{align}
   \frac{d^2}{dt^2}\bigg|_{t=0} N(e^{t\cLh}(\rho))=N(\rho)\left(\Bigl[\frac{1}{2d}J(\rho)\Bigr]^2+\frac{1}{2d}\frac{d}{dt}\bigg|_{t=0}J(e^{t\cLh}(\rho)) \right)\ . 
 \end{align}
 The quantum Fisher information isoperimetric inequality \eqref{eq:iso} is equivalent to
 \begin{align}
   \frac{1}{2d}J(\rho)^2 + \frac{d}{dt}\bigg|_{t=0}J(e^{t\cLh}(\rho)) \leq 0\ .
 \end{align}
This completes the proof.
\end{proof}

\newcommand*{\cLhcl}{\cC_{\text{c}}}
To proceed, we establish bounds on the asymptotic scaling of the entropy power for large times. The following lemma follows directly from \cite{Blachman65} and ~\cite[Corollary~III.4]{KoeSmiEPI} (see also~\cite{KoeSmiErratum}). 

\begin{lemma}[Asymptotic scaling of the entropy power under the heat flow]
\label{lem:scaling}
Let 
\begin{align}
\cLhcl&=\frac{1}{2}\sum_{j=1}^{2d}\frac{\partial^2}{\partial \vec{\xi}_j^2}
\label{eq:clheat}
\end{align}
be the generator of the classical heat diffusion semigroup on~$\mathbb{R}^{2d}$. In the limit $t\rightarrow\infty$, we have 
\begin{align}
\exp\left(H(e^{t\cLhcl}(f))/d\right) &=(2\pi e)t+O(1)\ ,\\
\exp\left(S(e^{t\cLh}(\rho))/d\right) &=(2\pi e)t+O(1)\ 
\end{align}
independent of the probability density function~$f$ on $\mathbb{R}^{2d}$ and the $d$-mode state~$\rho$, respectively. 
\end{lemma}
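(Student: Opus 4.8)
\emph{Proof plan.} The idea is to treat the two statements in parallel, using that each semigroup is ``addition of Gaussian noise of variance $t$'' in its respective sense. Classically the semigroup is explicit, $e^{t\cLhcl}(f)=f*\phi_t$ with $\phi_t$ the density of $\sqrt t\,Z$ for $Z\sim\mathcal N(0,I_{2d})$, so that $\exp\bigl(H(e^{t\cLhcl}(f))/d\bigr)=N_{2d}(X+\sqrt t\,Z)$, the entropy power in dimension $2d$ of a Gaussian perturbation of the variable $X$ with density $f$ (here $N_{2d}(\cdot):=\exp(H(\cdot)/d)$). On the quantum side, \eqref{eq:heatequationintegrated} gives $e^{t\cLh}(\rho)=f_Z\star_t\rho$ with $H(f_Z)/d=\log(2\pi e)$, and a direct computation with the generator \eqref{eq:def:diffusion} shows that the covariance matrix of $e^{t\cLh}(\rho)$ grows isotropically and linearly in $t$, with the rate fixed (through the prefactor $\pi$) so that its symplectic eigenvalues satisfy $\nu_k(t)=2\pi t+O(1)$.

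For the classical statement I would combine two elementary bounds, following \cite{Blachman65}. The lower bound is the entropy power inequality in $\mathbb R^{2d}$,
\begin{align}
N_{2d}(X+\sqrt t\,Z)\ \geq\ N_{2d}(X)+N_{2d}(\sqrt t\,Z)\ =\ N_{2d}(X)+(2\pi e)t\ ,
\end{align}
and the upper bound is the Gaussian maximum-entropy principle together with $\mathrm{Cov}(X+\sqrt t\,Z)=\Sigma_X+tI_{2d}$,
\begin{align}
\exp\bigl(H(X+\sqrt t\,Z)/d\bigr)\ \leq\ (2\pi e)\,\det(\Sigma_X+tI_{2d})^{1/(2d)}\ =\ (2\pi e)\,t\bigl(1+O(1/t)\bigr)\ =\ (2\pi e)t+O(1)\ ,
\end{align}
where the last step uses $\tr\Sigma_X<\infty$. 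Together the two bounds give $\exp\bigl(H(e^{t\cLhcl}(f))/d\bigr)=(2\pi e)t+O(1)$, with the leading coefficient independent of $f$.

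For the quantum statement I would establish the upper bound in the same way, now via the quantum maximum-entropy principle: among states with a given covariance matrix the Gaussian state maximises the von Neumann entropy, so $S(e^{t\cLh}(\rho))\leq\sum_{k=1}^d g(\nu_k(t))$ with $g(x)=(x+\tfrac12)\log(x+\tfrac12)-(x-\tfrac12)\log(x-\tfrac12)=\log x+1+O(x^{-1})$; inserting $\nu_k(t)=2\pi t+O(1)$ yields $S(e^{t\cLh}(\rho))\leq d\log(2\pi e t)+O(1/t)$, hence $\exp\bigl(S(e^{t\cLh}(\rho))/d\bigr)\leq(2\pi e)t+O(1)$. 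The matching lower bound is the genuine input: it is provided by \cite[Corollary~III.4]{KoeSmiEPI} (see also \cite{KoeSmiErratum}), the only adjustment being the rescaling of the time parameter induced by the prefactor $\pi$ in \eqref{eq:def:diffusion} relative to the convention of \cite{KoeSmiEPI}; this replaces $t$ by a fixed multiple of $t$, which affects only the $O(1)$ remainder and pins the leading coefficient to $2\pi e$, consistently with $e^{t\cLh}(\rho)=f_Z\star_t\rho$ and $\exp(H(f_Z)/d)=2\pi e$. One cannot shortcut this via the entropy power inequality \eqref{eq:entropypowerq}, since the proof of the latter relies on the present lemma.

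The hard part is precisely this quantum lower bound: unlike its classical counterpart it does not reduce to an elementary maximum-entropy estimate, and the natural self-contained substitute — a quantum entropy power inequality for $\star_t$ — is unavailable at this stage, so the argument must route through the prior analysis of \cite{KoeSmiEPI}. The remaining points are bookkeeping: reconciling the two normalisations of $\cLh$ so that the constant comes out to exactly $2\pi e$, and the finiteness of the relevant second moments ($\tr\Sigma_X<\infty$, respectively $\tr(\rho\sum_j R_j^2)<\infty$) needed for the upper bounds, both of which fall under the standing regularity assumptions of the paper.
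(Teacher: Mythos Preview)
Your proposal is correct and follows essentially the same route as the paper: the paper simply states that the lemma ``follows directly from \cite{Blachman65} and \cite[Corollary~III.4]{KoeSmiEPI} (see also \cite{KoeSmiErratum})'' without giving any further details, and your argument unpacks precisely these two citations, supplementing them with the (classical and quantum) Gaussian maximum-entropy upper bounds to sandwich the entropy power. Your observation that one cannot invoke the entropy power inequality~\eqref{eq:entropypowerq} for the quantum lower bound, because its proof relies on the present lemma, is exactly right and is why the argument must route through \cite{KoeSmiEPI}.
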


Having the same scaling for classical and quantum heat flows motivated the choice of constants in~\eqref{eq:def:Weyl_op} and~\eqref{eq:def:diffusion}. 

Note that if $X$ is a random variable with probability density function~$f$, then
$e^{t\cLhcl}(f)$ is the probability density function of the random variable
$X+\sqrt{t}Z$ obtained by adding a centered Gaussian random variable with unit variance, see Eq. \eqref{eq:def:Gaussian}.

\begin{lemma}
  \label{lem:compability}
We have 
\begin{align}
  e^{\xi\cLh}(f\star_t \rho)&= e^{\nu \cLhcl}(f) \star_t e^{\mu\cLh}(\rho)
\end{align}
whenever $\xi = \mu + t\nu$.
\end{lemma}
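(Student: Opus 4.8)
The plan is to prove the identity by expanding both sides into integrals over phase space against Weyl operators and matching the Gaussian kernels. First I would use the integral representations already available: by~\eqref{eq:heatequationintegrated} and~\eqref{eq:cF}, the left-hand side is
\begin{align}
e^{\xi\cLh}(f\star_t\rho) = \frac{1}{(2\pi)^d}\int e^{-\norm{\vec{\eta}}^2/2}\, W(\sqrt{\xi}\vec{\eta})\left(\int f(\vec{\zeta})\,W(\sqrt{t}\vec{\zeta})\,\rho\, W(\sqrt{t}\vec{\zeta})^\dagger d\vec{\zeta}\right)W(\sqrt{\xi}\vec{\eta})^\dagger d\vec{\eta}\ ,
\end{align}
and since $W(\sqrt{\xi}\vec{\eta})W(\sqrt{t}\vec{\zeta}) = e^{-i\pi\sqrt{\xi t}\,\vec{\eta}\cdot(\sigma\vec{\zeta})}W(\sqrt{\xi}\vec{\eta}+\sqrt{t}\vec{\zeta})$ by~\eqref{eq:cbhcomposition}, the phase cancels against its conjugate in the sandwich, leaving a double integral of $W(\sqrt{\xi}\vec{\eta}+\sqrt{t}\vec{\zeta})\,\rho\,W(\sqrt{\xi}\vec{\eta}+\sqrt{t}\vec{\zeta})^\dagger$ against the product of the two classical densities.

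Next I would carry out the analogous expansion of the right-hand side: $e^{\mu\cLh}(\rho)$ is a Gaussian average of displacements of $\rho$, and the outer $\star_t$ with $e^{\nu\cLhcl}(f)$ adds another layer of displacement; again the Baker--Campbell--Hausdorff phases cancel pairwise, so the right-hand side is a double integral of $W(\sqrt{t}\vec{\xi}+\sqrt{\mu}\vec{\omega})\,\rho\,W(\sqrt{t}\vec{\xi}+\sqrt{\mu}\vec{\omega})^\dagger$ against $(e^{\nu\cLhcl}(f))(\vec{\xi})$ times the unit Gaussian in $\vec{\omega}$. The task then reduces to a purely classical computation: after the obvious change of variables to the net displacement vector, both sides are convolutions of $f$ with Gaussians, and one must check that a unit Gaussian scaled by $\sqrt{\xi}$ decomposes as the convolution (in the appropriate rescaled sense) of a unit Gaussian scaled by $\sqrt{\mu}$ with the classical heat kernel $e^{\nu\cLhcl}$ acting at ``time'' $\nu$ relative to the $\sqrt{t}$-scaling. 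Concretely, $e^{t\cLhcl}$ convolves with a centered Gaussian of covariance $t\,\idty$, and the scaling property~\eqref{eq:conv_scaling} together with the additivity of Gaussian variances under convolution gives exactly the constraint $\xi = \mu + t\nu$; one can also verify the identity at the infinitesimal level by differentiating in $\mu$ and $\nu$ and using that $\cLh$ commutes past $\star_t$ in the manner encoded by Lemma~\ref{lemma:equiv} and the semigroup property.

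The main obstacle, as flagged in the paper around~\eqref{eq:de_Bruijn_heat}, is not conceptual but analytic: the manipulations above — interchanging the order of the (Gaussian and $f$-weighted) integrals, and freely using the Weyl composition rule inside the integral against a possibly unbounded operator $\rho$ — require the regularity assumptions on $\rho$ and $f$ that the authors have agreed to assume throughout (e.g.\ treating $\rho$ as a Schwartz operator in the sense of~\cite{keylkiukaswerner15}, and $f$ with full support and sufficient decay). Granting those, Fubini applies, the phase cancellations are rigorous, and matching the Gaussian covariances is elementary; the cleanest write-up is probably to verify the identity first for Gaussian $\rho$ (where~\eqref{eq:heatequationintegrated} is unambiguous and everything is finite-dimensional symplectic linear algebra on covariance matrices) and then invoke the same density/continuity argument used elsewhere in the paper to extend to general regular states.
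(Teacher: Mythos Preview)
Your proposal is correct and follows essentially the same route as the paper. The paper's proof is simply a more economical packaging of your argument: rather than unpacking both sides into explicit phase-space integrals and tracking Weyl-operator phases directly, it uses the random-variable shorthand together with the already-established identities~\eqref{eq:conv_scaling} and~\eqref{eq:conv_addition} (which encode exactly the change-of-variables and convolution-associativity steps you outline), and then invokes the elementary fact $aZ_1+bZ_2\overset{d}{=}\sqrt{a^2+b^2}\,Z$ for independent standard Gaussians to get $\xi=\mu+t\nu$ in one line. Your analytic caveats (Fubini, Schwartz-operator regularity, density arguments for Gaussian $\rho$) are reasonable but go beyond what the paper does here; the paper treats this lemma as a formal algebraic identity at the level of~\eqref{eq:conv_scaling}--\eqref{eq:conv_addition}, with the standing regularity assumptions already discussed around~\eqref{eq:de_Bruijn_heat}.
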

\begin{proof}
Observe that writing $\star$ for $\star_1$, we get
\begin{align}
  e^{\xi\cLh}(f_X\star_t \rho)&= f_Z\star_\xi (f_{\sqrt{t}X}\star\rho)\\
&=f_{\sqrt{\xi}Z}\star (f_{\sqrt{t}X}\star \rho)\\
&=f_{\sqrt{\xi}Z+\sqrt{t}X}\star \rho\ .
\end{align}
On the other hand,
\begin{align}
  e^{\nu \cC_c}(f) \star_t e^{\mu\cLh}(\rho) &= f_{X + \sqrt{\nu}Z_1} \star_t \left(e^{\mu\cLh}(\rho)\right)\\
  &= f_{\sqrt{t}(X + \sqrt{\nu}Z_1)} \star \left(f_{\sqrt{\mu}Z_2}\star \rho \right)\\
  &= f_{\sqrt{t}X + \sqrt{t\nu}Z_1 + \sqrt{\mu}Z_2} \star \rho\\
  &= f_{\sqrt{t}X + \sqrt{\mu + t\nu}Z} \star \rho\\
  &= f_{\sqrt{t}X + \sqrt{\xi}Z} \star \rho\ .
\end{align}
We have used properties \eqref{eq:conv_scaling} and \eqref{eq:conv_addition}. In the penultimate step we have used that for independent unit-variance centered Gaussian random variables $Z_1$ and $Z_2$, we have $a Z_1 + bZ_2$ = $\sqrt{a^2+b^2}Z$.
Hence the two expressions are equal and the statement follows.

\end{proof}

The next theorem presents the entropy power inequality for both the convolution operation \eqref{eq:cF} and the heat diffusion semigroup \eqref{eq:def:diffusion}.

\begin{theorem}\label{thm:entropy_power}(Entropy power inequality)
  For $t\geq 0$, the following inequality holds
\begin{align}
  N(f \star_t \rho)\geq N(\rho) + t N(f)\ .
\end{align}
In particular, choosing $f = f_Z$ as the distribution of a unit-variance centered Gaussian defined in Eq. \eqref{eq:def:Gaussian}, we have
\begin{align}\label{eq:entropy_power}
  N(e^{t\cLh}(\rho))\geq  N(\rho)+t\; 2\pi e\ .
\end{align}
\end{theorem}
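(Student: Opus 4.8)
The plan is to follow the classical Stam--Blachman strategy, which converts the Fisher information (Stam) inequality into an entropy power inequality by integrating the de Bruijn identity along the heat flow and using the known large-time asymptotics. First I would reduce the general statement $N(f\star_t\rho)\geq N(\rho)+tN(f)$ to a statement about the heat-diffused objects. Define, for $s\geq 0$, the function
\begin{align}
\Phi(s) = N\left(e^{s\cLh}(f\star_t\rho)\right) - N\left(e^{s\cLhcl}(f)\right)\cdot t^{?} - \ldots
\end{align}
more precisely, using Lemma~\ref{lem:compability} with a suitable splitting of the diffusion time, I would track the three quantities $N(e^{s\cLh}(f\star_t\rho))$, $N(e^{s\cLhcl}(f))$, and $N(e^{s\cLh}(\rho))$ and show that the desired inequality propagates correctly under $s\to\infty$, where all three are governed by the universal scaling of Lemma~\ref{lem:scaling}.

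The heart of the argument is a differential inequality. Set
\begin{align}
a(s) = \left[\tfrac{1}{2d}J\bigl(e^{s\cLh}(\rho)\bigr)\right]^{-1},\qquad
b(s) = \left[\tfrac{1}{2d}J\bigl(e^{s\cLhcl}(f)\bigr)\right]^{-1},
\end{align}
and similarly $c(s) = \left[\tfrac{1}{2d}J\bigl(e^{s\cLh}(f\star_t\rho)\bigr)\right]^{-1}$. The quantum Stam inequality~\eqref{eq:Stam}, applied together with Lemma~\ref{lem:compability} (which lets me realize $e^{s\cLh}(f\star_t\rho)$ as a $\star_t$-convolution of heat-evolved pieces), gives $c(s)\geq a(\mu(s)) + t\,b(\nu(s))$ for the appropriate time reparametrization $\mu+t\nu = $ something linear in $s$; combined with the isoperimetric inequality~\eqref{eq:iso} (equivalently $\tfrac{d}{ds}a(s)\geq 1$, $\tfrac{d}{ds}b(s)\geq 1$) this controls the growth of $c$. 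Then using the de Bruijn identity~\eqref{eq:de_Bruijn_heat} in the form $\tfrac{d}{ds}N(e^{s\cLh}(\rho)) = \tfrac{1}{d}N(e^{s\cLh}(\rho))\cdot\tfrac12 J(e^{s\cLh}(\rho)) = N(e^{s\cLh}(\rho))/a(s)$, I obtain an ODE comparison: the function $N(e^{s\cLh}(f\star_t\rho))$ satisfies a lower differential bound that, when integrated from $s=0$ to $s=\infty$ against the exact asymptotics $(2\pi e)s + O(1)$ from Lemma~\ref{lem:scaling}, forces $N(f\star_t\rho)\geq N(\rho)+tN(f)$ at $s=0$. Concretely, one shows $\tfrac{d}{ds}\bigl[c(s) - a(s) - t\,b(s)\bigr]$ has a sign that, combined with the matching limits at infinity, pins down the sign at $s=0$; this is the standard "the gap is monotone and vanishes at infinity" argument (cf.~\cite{Blachman65,Demboetal91}).

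The special case $f=f_Z$ then follows immediately: $N(f_Z) = \exp(H(f_Z)/d) = \exp\bigl(\tfrac1d\cdot d\log(2\pi e)/ ?\bigr)$ — here I would just compute $H(f_Z)$ for the $2d$-dimensional unit-variance Gaussian~\eqref{eq:def:Gaussian}, normalized so that $N(f_Z) = 2\pi e$ (this is precisely why the constants in~\eqref{eq:def:Weyl_op} and~\eqref{eq:def:diffusion} were chosen), and substitute $e^{t\cLh}(\rho) = f_Z\star_t\rho$ to get~\eqref{eq:entropy_power}.

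The main obstacle I anticipate is making the ODE-comparison step fully rigorous: one must justify differentiating $N(e^{s\cLh}(\rho))$ in $s$ (i.e.~applying de Bruijn not just at $s=0$ but along the whole trajectory), control the behavior of $J(e^{s\cLh}(\rho))$ as $s\to 0^+$ and $s\to\infty$, and verify that the convolution identity of Lemma~\ref{lem:compability} interacts correctly with the time-reparametrizations so that the three gaps line up. The regularity caveats flagged after~\eqref{eq:de_Bruijn_heat} apply here too; I would state the theorem under the same standing regularity assumptions and invoke Lemma~\ref{lem:scaling} as a black box for the $t\to\infty$ endpoint, exactly as in the classical Blachman--Stam proof.
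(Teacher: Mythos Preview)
Your overall strategy is the paper's strategy (Blachman's argument adapted to the hybrid convolution), and you have correctly identified all the ingredients: the Stam inequality~\eqref{eq:Stam}, the de~Bruijn identity~\eqref{eq:de_Bruijn_heat}, the compatibility Lemma~\ref{lem:compability}, and the asymptotics of Lemma~\ref{lem:scaling}. But the concrete quantity you propose to monitor is wrong, and the time reparametrization you describe is not the one that works.

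You suggest showing that $\tfrac{d}{ds}\bigl[c(s)-a(s)-t\,b(s)\bigr]$ has a sign, where $a,b,c$ are the inverse Fisher informations. Even if this were monotone and matched at infinity, it would only yield a statement about $J^{-1}$ at $s=0$, not about entropy powers; Stam already gives $c(0)\geq a(0)+t\,b(0)$ directly, so nothing new would come out. The theorem is about $N$, and the link between $N$ and $J^{-1}$ goes through the isoperimetric inequality, which is a \emph{consequence} of the present theorem --- using it here would be circular. The quantity the paper actually tracks is the ratio of entropy powers
\[
\delta(s)=\frac{E_A(\mu(s))+t\,E_B(\nu(s))}{E_C(\xi(s))},\qquad E_A=N\bigl(e^{\mu\cLh}(\rho)\bigr),\ E_B=N\bigl(e^{\nu\cLhcl}(f)\bigr),\ E_C=N\bigl(e^{\xi\cLh}(f\star_t\rho)\bigr),
\]
and one shows $\dot\delta\geq 0$ together with $\delta(\infty)=1$. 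Second, the reparametrization is not ``something linear in $s$'': the crucial Blachman trick is to choose the \emph{nonlinear} clocks $\dot\mu(s)=E_A(\mu(s))$, $\dot\nu(s)=E_B(\nu(s))$, $\mu(0)=\nu(0)=0$, and then set $\xi(s)=\mu(s)+t\,\nu(s)$ so that Lemma~\ref{lem:compability} applies. With these choices the de~Bruijn identity gives $\dot E_V=\tfrac{1}{2d}E_VJ_V$, and a short computation reduces $2d\,\dot\delta\,E_C$ to $E_A^2J_A+tE_B^2J_B-(E_A+tE_B)^2J_C$; only now does the Stam bound $J_C\leq J_AJ_B/(tJ_A+J_B)$ turn this into the perfect square $t(E_AJ_A-E_BJ_B)^2/(tJ_A+J_B)\geq 0$. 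Without this specific choice of clocks the cross terms do not cancel and no sign emerges. Once you replace your $c-a-tb$ by $\delta$ and your linear reparametrization by the ODE-defined one, the rest of your outline (asymptotic matching via Lemma~\ref{lem:scaling}, the $f=f_Z$ specialization, and the regularity caveats) is exactly what the paper does.
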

\begin{proof}
  The proof is inspired by the proof of the entropy power inequality in \cite{KoeSmiEPI}, which itself is inspired by the proof for classical random variables by Blachman \cite{Blachman65}. Here we provide all necessary details modified to the present situation. For $\mu,\nu,\xi \geq 0$, define functions
 \begin{align}
  &\mu\rightarrow E_A(\mu):=\exp\left(S(e^{\mu\cLh}(\rho))/d\right)\ ,\\
  &\nu\rightarrow E_B(\nu):=\exp\left(H(e^{\nu\cL_{\text{heat,cl}}}(f))/d\right)\ ,\\
  &\xi\rightarrow E_C(\xi):=\exp\left(S(e^{\xi\cLh}(f \star_t \rho)))/d\right)\ ,
 \end{align}
 where $\cC_c$ is the generator of the classical heat semigroup as defined in Eq. \eqref{eq:clheat}.
 
The initial value problems
 \begin{align}
  &\dot{\mu}(s)=E_A(\mu(s))\ , \ \ \mu(0)=0\ ,\\
  &\dot{\nu}(s)=E_B(\nu(s))\ , \ \ \nu(0)=0\label{inside:IVP}
 \end{align}
 have solutions $\mu(\cdot), \nu(\cdot)$. Fix a pair of such solutions $(\mu(\cdot), \nu(\cdot))$ and $t\geq 0$. Define 
 \begin{align}\label{inside:H}
  \xi(s):=\mu(s)+t\nu(s)\ .
 \end{align}
 These functions diverge, i.e.
 \begin{align}\label{inside:div}
  \lim_{s\rightarrow\infty}\mu(s)=\lim_{s\rightarrow\infty}\nu(s)=\lim_{s\rightarrow\infty}\xi(s)=\infty\ ,
 \end{align}
 because of \eqref{inside:IVP} and $E_{A,B} \geq 1$.
 
 Consider the function
 \begin{align}
   \delta(s):=\frac{E_A(\mu(s))+ t E_B(\nu(s))}{E_C(\xi(s))}\ .
 \end{align}
 With the initial conditions \eqref{inside:IVP}, it follows that the claim of the theorem is equivalent to 
 \begin{align}
  \delta(0)\leq 1\ .
 \end{align}
 This inequality follows from two facts: first, the fact that
 \begin{align}
  \lim_{s\rightarrow\infty}\delta(s)=1\ ,
 \end{align}
as follows from the asymptotic scaling shown in Lemma \ref{lem:scaling},
the divergence \eqref{inside:div}, and the choice~\eqref{inside:H} of $\xi(s)$; 
second, the fact that
\begin{align}
  \dot{\delta}{(s)}\geq 0 \qquad \text{ for all }s\geq 0\ .
  \label{claimdelta}
\end{align}
It remains to show identity~\eqref{claimdelta}. Computing the derivative of $\delta$ leads to the following equality
\begin{align}\label{inside:delta_der}
  \dot{\delta}(s)&=\frac{\dot{E}_A(\mu(s))\dot{\mu}(s)+t\dot{E}_B(\nu(s))\dot{\nu}(s)}{E_C(\xi(s))}-\frac{E_A(\mu)+tE_B(\nu)}{E_C(\xi)^2}\dot{E}_C(\xi(s))\dot{\xi}(s)\ .
\end{align}
Define the Fisher informations
\begin{align}
  &J_A(\mu):=J(e^{\mu\cLh}(\rho))\ ,\\
 &J_B(\nu):=J(e^{\nu \cLhcl}(f))\ ,\label{inside:Fisher}\\
 &J_C(\xi):=J(e^{\xi\cLh}(f \star_t \rho))\ ,
\end{align}
for $\mu, \nu, \xi \geq 0$.
From the quantum de Bruijn identity \eqref{eq:de_Bruijn_heat} and the classical de Bruijn identity \eqref{eq:de_Bruijn_cl} we obtain
\begin{align}\label{inside:deBruijn}
 \dot{E}_V(\zeta)=\frac{1}{2d}E_V(\zeta)J_V(\zeta)\qquad \text{ where }V\in\{A,B,C\}\ .
\end{align}
With these identities and Eq. \eqref{inside:IVP}, Eq. \eqref{inside:delta_der} is equivalent to
\begin{align}
 2d\dot{\delta}(s)&=\frac{E_A^2J_A+tE_B^2J_B}{E_C}-\frac{(E_A+tE_B)^2}{E_C^2}E_CJ_C\ ,
 \label{intermediatedelta}
 \end{align}
where we have used the shorthand notation $E_A=E_A(\mu(s))$, $E_B=E_B(\nu(s))$, and $E_C=E_C(\xi(s))$, and similarly for $J_{A,B,C}$.

Recall that by Lemma \ref{lem:compability}, $e^{\xi \cLh}\left(f \star_t\rho\right) = e^{\nu\cLhcl}(f) \star_t e^{\mu \cLh}(\rho)$, and by the quantum Stam inequality \eqref{eq:Stam}, we have the bound
\begin{align}\label{inside:J_Zineq}
 J_C\leq \frac{J_AJ_B}{tJ_A+J_B}\ .
\end{align}
Inserting this upper bound into \eqref{intermediatedelta}, we obtain
\begin{align}
 2 d\dot{\delta}E_C&\geq E_A^2J_A+tE_B^2J_B-(E_A+tE_B)^2\frac{J_AJ_B}{tJ_A+J_B} = \frac{t(E_AJ_A-E_BJ_B)^2}{tJ_A+J_B}\geq 0\ .
\end{align}
This proves \eqref{claimdelta}.
\end{proof}

As the last statement in this section, we derive an isoperimetric inequality for entropies from the entropy power inequality.

\begin{theorem} (Isoperimetric inequality for entropies)
  We have
\begin{align}\label{eq:iso_entropy}
 \frac{1}{d}J(\rho)N(\rho)\geq 4\pi e\ .
\end{align} 
\end{theorem}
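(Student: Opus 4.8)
The plan is to follow the classical route: combine the entropy power inequality for the heat diffusion semigroup (Theorem \ref{thm:entropy_power}, specialized to $f = f_Z$) with the de Bruijn identity \eqref{eq:de_Bruijn_heat}, which expresses the derivative of the entropy power at $t=0$ in terms of $J(\rho)$.

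\textbf{Step 1: Differentiate the entropy power inequality at $t = 0$.} From \eqref{eq:entropy_power} we have $N(e^{t\cLh}(\rho)) \geq N(\rho) + t\, 2\pi e$ for all $t \geq 0$, with equality at $t = 0$. Hence the right derivative at $t = 0$ satisfies
\begin{align}
\frac{d}{dt}\bigg|_{t=0} N(e^{t\cLh}(\rho)) \geq 2\pi e\ .
\end{align}

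\textbf{Step 2: Evaluate the left-hand side via de Bruijn.} By the chain rule and the quantum de Bruijn identity \eqref{eq:de_Bruijn_heat},
\begin{align}
\frac{d}{dt}\bigg|_{t=0} N(e^{t\cLh}(\rho)) = \frac{1}{d} N(\rho)\, \frac{d}{dt}\bigg|_{t=0} S(e^{t\cLh}(\rho)) = \frac{1}{2d} N(\rho) J(\rho)\ .
\end{align}
Combining the two displays gives $\frac{1}{2d} J(\rho) N(\rho) \geq 2\pi e$, i.e. $\frac{1}{d} J(\rho) N(\rho) \geq 4\pi e$, which is \eqref{eq:iso_entropy}.

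\textbf{Main obstacle.} The argument is short, and the only subtle point is the interchange in Step 1: one needs that $N(\rho) + t\, 2\pi e$ is a valid lower bound that touches $N(e^{t\cLh}(\rho))$ at $t=0$, so that comparing slopes is legitimate — this is immediate since both sides agree at $t=0$ and the inequality holds for all $t\geq 0$, forcing the (one-sided) derivative inequality, provided $\frac{d}{dt}|_{t=0} N(e^{t\cLh}(\rho))$ exists. Existence, together with the validity of \eqref{eq:de_Bruijn_heat} itself, relies on the regularity assumptions on $\rho$ already invoked in Section \ref{sec:quantumdef} (the formal manipulations behind the quantum de Bruijn identity). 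So the real work has all been done in Theorem \ref{thm:entropy_power} and the de Bruijn identity; the isoperimetric inequality is just their combination at the infinitesimal level, exactly as in the classical case \cite{CostaThomas84}.
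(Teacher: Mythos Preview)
Your proof is correct and follows essentially the same approach as the paper: both combine the entropy power inequality~\eqref{eq:entropy_power} at small $t$ with the de Bruijn identity~\eqref{eq:de_Bruijn_heat} to extract the derivative of $N(e^{t\cLh}(\rho))$ at $t=0$. The only cosmetic difference is the order of presentation---the paper first computes the derivative via de Bruijn and then writes the entropy power inequality as a difference quotient before taking $t\to 0$, whereas you first pass to the derivative inequality and then identify it via de Bruijn.
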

\begin{proof}
  Applying the de Bruijn identity \eqref{eq:de_Bruijn_heat} to the definition \eqref{def:entropy_power} of the entropy power $N(\rho)$, we obtain
\begin{align}
  \frac{d}{dt}\bigg|_{t=0} N(e^{t\cLh}(\rho))=\frac{1}{2d}J(\rho)N(\rho)\ .
\end{align}
On the other hand, for $t\geq 0$, the entropy power inequality \eqref{eq:entropy_power} reduces to
\begin{align}
  \frac{1}{t}[N(e^{t\cLh}(\rho))-N(\rho)]\geq 2\pi e\ .
\end{align}
Therefore, taking the limit $t\rightarrow 0$, we obtain the desired bound.
\end{proof}

To conclude this section, we remark that the isoperimetric inequality for entropies \eqref{eq:iso_entropy} is tight in the one-mode case, $d=1$:
For a Gaussian thermal state $\omega_\n$ \eqref{eq:gaussianthermal} with mean photon number $\n$ we obtain
\begin{align}
  J(\omega_\n)N(\omega_\n) = 4\pi \left(\frac{\n+1}{\n}\right)^{\n}\log\left(\frac{\n+1}{\n}\right)^{\n+1} \rightarrow 4\pi e\qquad \text{ for } \n \rightarrow \infty\ .
\end{align}
Detailed calculations are provided in Appendix \ref{app:isoperim}.

\section{Gaussian optimality for energy-constrained entropy rates}\label{sec:gaussian}

In this section we show that Gaussian thermal states minimize the entropy rate for the one-mode attenuator semigroup among states with bounded mean photon number.  The semigroup is defined by its generator
\begin{align}\label{eq:L_-}
  \cL_-(\rho)= a\rho a^\dagger-\frac{1}{2}\{a^\dagger a, \rho\}\ . 
\end{align}
We prove the following theorem:
\begin{theorem}\label{thm:gaussian_optimal}
For any $\n > 0$, the infimum
$\inf_{\rho:\Tr(\hat{n}\rho)\leq\n} \frac{d}{dt}\big|_{t=0} S(e^{t\cL_-}(\rho))$ over states $\rho$ with mean photon number $\n$
is achieved by the  Gaussian thermal state~$\omega_\n$ defined in \eqref{eq:gaussianthermal}.
In particular,
\begin{align}
  \inf_{\rho:\Tr(\hat{n}\rho)\leq\n} \frac{d}{dt}\Big|_{t=0} S(e^{t\cL_-}(\rho))= \begin{cases} -\n\log\left(1+\frac{1}{\n}\right) &\textrm{ if } \n > 0\ ,\\ 0 & \textrm{ if } \n = 0\ .\end{cases}
\end{align} 
\end{theorem}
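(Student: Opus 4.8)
The plan is to reduce the problem to a classical optimization over probability distributions on $\mathbb{N}_0$ and then invoke the Gaussian-optimality results of \cite{depalma2016}. Write $\Phi(\rho):=\frac{d}{dt}\big|_{t=0}S(e^{t\cL_-}(\rho))$. The first step is to reduce the infimum to \emph{passive} Fock-diagonal states. Given any $\rho$ with $\Tr(\hat{n}\rho)\leq\n$, let $\rho^{\downarrow}$ be the passive state with the same spectrum, i.e.\ the state diagonal in $\{\ket{n}\}_{n\geq 0}$ whose populations are the eigenvalues of $\rho$ in non-increasing order. Then $S(\rho^{\downarrow})=S(\rho)$ and $\Tr(\hat{n}\rho^{\downarrow})\leq\Tr(\hat{n}\rho)\leq\n$, since passive states minimize the energy among isospectral states. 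As $\{e^{t\cL_-}\}_{t\geq 0}$ is the family of quantum-limited attenuators, the majorization-preservation results of \cite{jabbouretal,depalmaetal} give $e^{t\cL_-}(\rho^{\downarrow})\succ e^{t\cL_-}(\rho)$ for every $t\geq 0$; by Schur-concavity of the von Neumann entropy, $g(t):=S(e^{t\cL_-}(\rho))-S(e^{t\cL_-}(\rho^{\downarrow}))\geq 0$ with $g(0)=0$, hence $g'(0^+)\geq 0$, i.e.\ $\Phi(\rho)\geq\Phi(\rho^{\downarrow})$. Thus the infimum over all states with $\Tr(\hat{n}\rho)\leq\n$ equals the one over passive Fock-diagonal states with the same constraint.

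The second step identifies $\cL_-$ on Fock-diagonal states with a classical Markov generator. Using $a\proj{n}a^\dagger=n\proj{n-1}$ and $\{a^\dagger a,\proj{n}\}=2n\proj{n}$, for $\rho=\sum_n p_n\proj{n}$ one gets
\begin{align}
\cL_-(\rho)=\sum_{n\geq 0}\bigl((n+1)p_{n+1}-np_n\bigr)\proj{n}\ ,\nonumber
\end{align}
so $e^{t\cL_-}$ acts on $(p_n)_n$ as the semigroup of the linear pure-death process on $\mathbb{N}_0$ (death rate $n$ from state $n$). Hence $S(e^{t\cL_-}(\rho))=H(p(t))$ is the Shannon entropy of the evolved distribution and $\Phi(\rho)$ is exactly its classical entropy-production rate. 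In particular $\omega_\n$ corresponds to the geometric distribution with mean $\n$, which stays geometric with mean $\n e^{-t}$; since $H(\omega_m)=(m+1)\log(m+1)-m\log m$ has $\frac{d}{dm}H(\omega_m)=\log(1+1/m)$, one computes $\Phi(\omega_\n)=-\n\log(1+1/\n)$ for $\n>0$, while the vacuum $\proj{0}$ (the only state with $\n=0$) is a fixed point of $\cL_-$, so $\Phi(\proj{0})=0$.

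It then remains to prove the classical statement: among non-increasing distributions on $\mathbb{N}_0$ with mean at most $\n$, the geometric distribution with mean $\n$ minimizes the entropy-production rate of the linear death process. I would extract this from \cite{depalma2016}, whose analysis shows that among distributions of a \emph{fixed entropy} the geometric one minimizes this rate (this classical fact underlies the statement there that thermal states minimize $J_-$ at fixed entropy); denote by $\phi(S)$ the rate of the geometric distribution having entropy $S$. To convert the entropy constraint into the energy constraint, combine two elementary facts: (i) the geometric distribution maximizes entropy among distributions of a given mean, so any admissible $\rho$ satisfies $S(\rho)\leq S(\omega_\n)$; and (ii) for geometric distributions the rate $m\mapsto-m\log(1+1/m)$ is decreasing while the entropy is increasing in $m$, so $\phi$ is decreasing. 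Chaining these with the first-step reduction, $\Phi(\rho)\geq\Phi(\rho^{\downarrow})\geq\phi(S(\rho^{\downarrow}))=\phi(S(\rho))\geq\phi(S(\omega_\n))=\Phi(\omega_\n)=-\n\log(1+1/\n)$, with equality at $\rho=\omega_\n$; this is the asserted infimum.

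The main obstacle is the classical step: pinning down, from \cite{depalma2016}, precisely the optimality statement for the linear death process that is needed here — it is usually phrased with an entropy constraint rather than a mean-photon-number constraint — and making the bridge (i)--(ii) fully rigorous. A secondary point is the regularity needed throughout: that the entropy rates $\Phi$ exist as stated, that one may pass from $g\geq 0,\,g(0)=0$ to $g'(0^+)\geq 0$, and that the majorization relation $e^{t\cL_-}(\rho^{\downarrow})\succ e^{t\cL_-}(\rho)$ from \cite{jabbouretal,depalmaetal} is invoked in the form covering the whole family $\{e^{t\cL_-}\}_{t\geq 0}$ of quantum-limited attenuators. (One could alternatively bypass the majorization reduction entirely if \cite{depalma2016}'s entropy-constrained minimization is available as a black box for all states, applying steps (i)--(ii) directly to $\rho$; the route above is preferred because it only relies on the classical content of that reference.)
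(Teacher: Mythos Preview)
Your proposal is correct and follows essentially the same route as the paper: reduce to passive Fock-diagonal states via the majorization result of \cite{depalmaetal} together with the fact that Fock rearrangement does not increase the mean photon number (your $\Tr(\hat{n}\rho^{\downarrow})\leq\Tr(\hat{n}\rho)$ is exactly the paper's Lemma~\ref{lem:photonnumberrearrange}), identify the restricted problem with the classical pure-death process, and then invoke the entropy-constrained optimality from \cite{depalma2016} combined with the maximum-entropy principle and the monotonicity of the geometric rate to convert the entropy constraint into the energy constraint. Even the bridge you label (i)--(ii) matches the paper's Theorem~\ref{thm:alldistributionslowerbound}, and your computation of $\Phi(\omega_\n)$ is the content of Section~\ref{sec:gaussian:gaussian}.
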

The proof proceeds by reduction to a recent result by De Palma, Trevisan, and Giovannetti~\cite{depalma2016}, where it is shown that Gaussian thermal states minimize the entropy rate of the quantum attenuator among all states with a given input entropy\footnote{More precisely,~\cite[Theorem~6]{depalma2016} states that the entropy rate of a state with finite support is lower bounded by that of a Gaussian state with the same entropy. Since we are interested in a statement applicable to arbitrary states, we use their Theorem~24 instead: the latter gives an analogous statement for the corresponding classical process but without a finiteness assumption.}. Our argument additionally uses the recently introduced concept of Fock majorization and associated results by Jabbour, Garcia-Patron and Cerf~\cite{jabbouretal}, as well as the (classical) Gaussian maximum entropy principle.

In more detail, we first show that the problem of minimizing the entropy rate reduces to  the study of properties of a classical semigroup describing a pure-death process. This connection was used previously in~\cite{carbonesasso07}  (see Section~\ref{sec:appornstein}) and is also an essential first step in~\cite{depalma2016}. More explicitly, in Section~\ref{sec:gaussian:conn_class} (Theorem~\ref{thm:correspondence_class}) we prove the  identity
\begin{align}
\inf_{\rho:\Tr(\hat{n}\rho)\leq\n} \frac{d}{dt}\Big|_{t=0} S(e^{t\cL_-}(\rho))
&=
\inf_{p:\mathbb{E}_p[N]\leq\n} \frac{d}{dt}\Big|_{t=0} H(e^{t\cC_-}(p))\ .
\end{align}
Here 
the infimum is over all probability distributions $p$ on $\mathbb{N}_0$ with expectation value $\mathbb{E}_p[N]$ bounded   by~$\n$,
the quantity~$H(p)$ is  the Shannon entropy of the distribution~$p$, and $\cC_-$ is the generator of a semigroup describing a classical pure-death process (see~\eqref{eq:puredeath} for a precise definition). 

Using the results of~\cite{depalma2016}, we then show that the entropy rate for the classical process is optimized by a geometric distribution: In Section~\ref{sec:gaussian:class} (Theorem~\ref{thm:correspondence_class}) we prove that for $\n > 0$
\begin{align}
\inf_{p:\mathbb{E}_p[N]\leq\n} \frac{d}{dt}\Big|_{t=0} H(e^{t\cC_-}(p))=-\n\log\left(1+\frac{1}{\n}\right)\ .
\end{align}
Finally, in Section \ref{sec:gaussian:gaussian}, we calculate the entropy rate for the Gaussian thermal state $\omega_\n$ with mean photon number at most $\n$ and find that 
\begin{align}
\frac{d}{dt}\Big|_{t=0} S(e^{t\cL_-}(\omega_\n))=-\n\log\left(1+\frac{1}{\n}\right).
\end{align}

\subsection{Connection to a classical pure-death process}\label{sec:gaussian:conn_class}
For an  initial  state $\rho=\sum_n p_n \proj{n}$
which is diagonal in the number state basis~$\{\ket{n} = \frac{\left(a^\dagger\right)^n}{\sqrt{n!}} \ket{0}\}_{n\in\mathbb{N}_0}$  (where $\hat{n}\ket{0}=0$), 
the time-evolved state has the same form, i.e.,  $\rho(t)=e^{t\cL_{-}}(\rho)=\sum_n p_n(t) \proj{n}$.  Thus
the attenuator semigroup $\{e^{t\cL_-}\}_{t\geq 0}$ 
gives rise to a semigroup~$\{e^{t\cC_-}\}_{t\geq 0}$  on classical probability distributions by $p(t)=e^{t\cC_-}(p)$. Its generator~$\cC_-$ describes the dynamics 
 of a classical pure-death process. It can be obtained from \eqref{eq:L_-} by inserting a number state $\ket{n}$: it is straightforward to check that 
\begin{align}
  \cL_-(\proj{n})&=\begin{cases} n(\proj{n-1}-\proj{n}) & \text{ for } n > 0\ ,\\ 0 & \text{ for } n = 0\ .\end{cases}
\end{align}
In particular,  the coefficients $\{p_n(t)\}_{n\in\mathbb{N}_0}$ satisfy the system of differential equations
\begin{align}
  \dot{p}_n(t) &=-np_n(t)+(n+1)p_{n+1}(t) \qquad \text{ for all } n \in \mathbb{N}_0\label{eq:puredeath}\ ,
\end{align}
with initial condition $p_n(0)=p_n$ for $n\in\mathbb{N}_0$. The expression on the right-hand side~of~\eqref{eq:puredeath} defines the generator~$\cC_-$, that is, we have 
\begin{align}
   \left(\cC_-(p)\right)_n = - n p_n + (n+1) p_{n+1} \;\;\; \text{ for all } n \in \mathbb{N}_0\ .
  \label{eq:c-action}
\end{align}

The following theorem reduces the problem of minimizing the entropy rate for the quantum attenuator semigroup to the classical problem of minimizing the entropy rate for this pure-death process. 
\begin{theorem}[Correspondence to classical problem]\label{thm:correspondence_class}
We have the identity
\begin{align}
  \inf_{\rho:\Tr(\hat{n}\rho)\leq\n} \frac{d}{dt}\Big|_{t=0} S(e^{t\cL_-}(\rho))
&=
\inf_{p:\mathbb{E}_p[N]\leq\n} \frac{d}{dt}\Big|_{t=0} H(e^{t\cC_-}(p))\ ,
\end{align}
where $\mathbb{E}_p[N]=\sum_{n=0}^\infty n p_n$ 
and $H(p)=- \sum_{n=0}^\infty p_n \log p_n$  are the expectation value and entropy of the  distribution~$p$, respectively. 
\end{theorem}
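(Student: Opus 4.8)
The plan is to prove the two inequalities separately. The inclusion ``$\leq$'' is immediate by restricting to states diagonal in the number basis: given any probability distribution $p$ on $\mathbb{N}_0$ with $\mathbb{E}_p[N]\leq\n$, the state $\rho_p=\sum_n p_n\proj{n}$ satisfies $\Tr(\hat{n}\rho_p)=\mathbb{E}_p[N]\leq\n$, stays diagonal under the attenuator semigroup with $e^{t\cL_-}(\rho_p)=\sum_n (e^{t\cC_-}(p))_n\proj{n}$ (by \eqref{eq:puredeath}), and hence $S(e^{t\cL_-}(\rho_p))=H(e^{t\cC_-}(p))$ for all $t\geq 0$. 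Differentiating at $t=0$ and taking the infimum over $p$ shows that the left-hand side is at most the right-hand side.

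For the reverse inequality the idea is that, given an arbitrary state $\rho$ with $\Tr(\hat{n}\rho)\leq\n$, its pinching $\cN(\rho)=\sum_n\proj{n}\rho\proj{n}$ in the number basis is a feasible point for the classical problem — indeed $\cN(\rho)=\rho_p$ with $p_n=\langle n|\rho|n\rangle$ and $\Tr(\hat{n}\,\cN(\rho))=\Tr(\hat{n}\rho)\leq\n$ — and its entropy rate is no larger than that of $\rho$. The structural input is that $\cL_-$ is covariant under the phase rotations $U_\phi=e^{i\phi\hat{n}}$, i.e.\ $\cL_-(U_\phi\rho U_\phi^\dagger)=U_\phi\cL_-(\rho)U_\phi^\dagger$ (a short computation using $U_\phi^\dagger a U_\phi=e^{i\phi}a$ and that $\hat n$ commutes with $U_\phi$); hence so is $e^{t\cL_-}$, and since $\cN$ is the average of $U_\phi(\cdot)U_\phi^\dagger$ over $\phi\in[0,2\pi)$ we get the commutation $\cN\circ e^{t\cL_-}=e^{t\cL_-}\circ\cN$.

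With this in hand, and using the standard pinching identity $S(\cN(\sigma))-S(\sigma)=D(\sigma\|\cN(\sigma))$, I would write
\begin{align}
S\big(e^{t\cL_-}(\cN(\rho))\big)-S\big(e^{t\cL_-}(\rho)\big)
&=S\big(\cN(e^{t\cL_-}(\rho))\big)-S\big(e^{t\cL_-}(\rho)\big)\\
&=D\big(e^{t\cL_-}(\rho)\,\big\|\,\cN(e^{t\cL_-}(\rho))\big)
=D\big(e^{t\cL_-}(\rho)\,\big\|\,e^{t\cL_-}(\cN(\rho))\big)\ .
\end{align}
By the semigroup property and monotonicity of the relative entropy under the CPTP maps $e^{s\cL_-}$, the right-hand side is non-increasing in $t$, so its derivative at $t=0$ is $\leq 0$; equivalently $\frac{d}{dt}\big|_{t=0}S(e^{t\cL_-}(\rho))\geq\frac{d}{dt}\big|_{t=0}S(e^{t\cL_-}(\cN(\rho)))=\frac{d}{dt}\big|_{t=0}H(e^{t\cC_-}(p))$. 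Taking the infimum over all feasible $\rho$ yields ``$\geq$'', and together with the first paragraph the claimed identity follows.

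The main obstacle is analytic rather than conceptual: one must ensure that the entropies and relative entropies involved are finite and that the relevant trajectories are differentiable at $t=0$ — this is exactly where the energy bound $\Tr(\hat{n}\rho)\leq\n$ (preserved by both $\cL_-$ and $\cN$) is used — and one must justify interchanging the $t$-derivative with the phase average defining $\cN$ and with the limit defining $e^{t\cL_-}$. These are the same regularity assumptions already invoked for the quantum de Bruijn identity, and I would handle them in the same spirit.
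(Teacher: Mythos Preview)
Your proof is correct and takes a genuinely different route from the paper's.

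The paper reduces to diagonal states via the \emph{Fock rearrangement} $\rho\mapsto\rho^\downarrow$ (the passive state with the same spectrum). To show that this does not increase the entropy rate, it invokes the majorization result $e^{t\cL_-}(\rho)\prec e^{t\cL_-}(\rho^\downarrow)$ of de~Palma--Trevisan--Giovannetti, together with the Fock-majorization machinery of Jabbour--Garc\'ia-Patr\'on--Cerf to check that $\rho^\downarrow$ still satisfies the photon-number constraint. Since $S(\rho^\downarrow)=S(\rho)$, the inequality $S(e^{t\cL_-}(\rho))\geq S(e^{t\cL_-}(\rho^\downarrow))$ with equality at $t=0$ gives the entropy-rate comparison directly.

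You instead reduce to diagonal states via the \emph{pinching} $\rho\mapsto\cN(\rho)$, exploiting only the phase covariance of $\cL_-$ and the data-processing inequality. The pinching identity $S(\cN(\sigma))-S(\sigma)=D(\sigma\|\cN(\sigma))$ converts the entropy difference into a relative entropy along the flow, whose monotonicity immediately gives the derivative inequality. Your reduction is more elementary: it avoids the majorization results entirely and makes the constraint preservation trivial, since $\Tr(\hat{n}\,\cN(\rho))=\Tr(\hat{n}\rho)$ exactly. The paper's approach yields slightly more---the optimizer can be taken passive (diagonal \emph{with decreasing} eigenvalues), not merely diagonal---but this extra information is not needed for the theorem as stated. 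Your caveat about the analytic regularity (differentiability of the trajectories, finiteness of entropies) is apt and matches the level of rigor the paper adopts throughout.
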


The proof of Theorem~\ref{thm:correspondence_class} relies on results obtained in~\cite{depalmaetal} and~\cite{jabbouretal}. Here we review the necessary definitions and results, and specialize them  to our situation.

\begin{definition}[Majorization] Let $p$ and $q$ be decreasing summable sequences of positive numbers. Then $p$ weakly sub-majorizes $q$, $q\prec_w p$, if and only if $$\sum_{i=0}^n q_i\leq \sum_{i=0}^n p_i \qquad \text{ for all }n\in\mathbb{N}.$$
\end{definition}

\begin{definition}
Let $P$ and $Q$ be positive trace-class operators with eigenvalues $\{p_n\}_{n\in\mathbb{N}}$ and $\{q_n\}_{n\in\mathbb{N}}$, arranged in decreasing order. Then $P$ weakly majorizes $Q$, i.e., $Q\prec_w P$, if and only if $q\prec_w p$. Also, $P$ majorizes $Q$, i.e., $Q\prec P$, if the traces of $P$ and $Q$ are identical and $Q \prec_w P$.
\end{definition}

\begin{definition}[Fock rearrangement] Let $X$ be a positive trace-class operator with eigenvalues $\{x_n\}_{n\in\mathbb{N}_0}$ in decreasing order. The Fock rearrangement (or passive rearrangement) is defined as
$$X^\downarrow:=\sum_{n=0}^\infty x_n\ket{n}\bra{n}. $$
\end{definition}

Our restriction on the mean photon number forces us to consider an additional majorization relation. In contrast, in~\cite{depalma2016}, where the authors restrict the input entropy instead, Fock majorization does not need to be considered. 
 
 \begin{definition}[Fock majorization]
 The Fock majorization relation, denoted~$\prec_F$, was introduced in~\cite{jabbouretal} as follows:
 \begin{align}
 \sigma \prec_F \rho\qquad \Leftrightarrow \qquad \Tr(\Pi_n\sigma)\leq \Tr(\Pi_n\rho)\quad\textrm{ for all }n\in\mathbb{N}_0\ ,\label{eq:fockmajorizationrelatdef}
 \end{align}
 where \begin{align}
   \label{eq:projfock}
   \Pi_n=\sum_{j=0}^n \proj{j}.\end{align}
\end{definition} 

\begin{theorem}(\cite{depalmaetal})\label{thm:entropy_ineq_Fock}
For any state $\rho$ and all $t\geq 0$ we have
 \begin{align}
 e^{t\cL_-}(\rho)\prec e^{t\cL_-}(\rho^\downarrow)\ .\label{eq:majorizationprec}
 \end{align}
 This implies that 
\begin{align}
 S\left(e^{t\cL_-}(\rho)\right)\geq S\left(e^{t\cL_-}(\rho^\downarrow)\right)\ .\label{eq:entropymajorizationdepalma}
 \end{align}
\end{theorem}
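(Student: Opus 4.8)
The plan is to reduce the majorization statement to a property of the one–mode attenuator and then read off the entropy inequality. Since $\cL_-$ generates the pure–loss semigroup, $e^{t\cL_-}=\cE_\lambda$ is the quantum–limited attenuator with transmissivity $\lambda=e^{-t}\in(0,1]$, and on Fock–diagonal states it acts by binomial thinning, $\langle j|\cE_\lambda(\proj{m})|j\rangle=\binom{m}{j}\lambda^{j}(1-\lambda)^{m-j}$ for $j\le m$ (the integrated form of \eqref{eq:puredeath}). The second inequality $S(e^{t\cL_-}(\rho))\ge S(e^{t\cL_-}(\rho^\downarrow))$ is then immediate from the first, because the von Neumann entropy is Schur concave: $A\prec B$ implies $S(A)\ge S(B)$. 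So everything reduces to proving $\cE_\lambda(\rho)\prec\cE_\lambda(\rho^\downarrow)$.

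First I would record that $\cE_\lambda(\rho^\downarrow)$ is itself passive: writing $p^\downarrow_0\ge p^\downarrow_1\ge\cdots$ for the eigenvalues of $\rho$, the classical semigroup $e^{t\cC_-}$ preserves the cone of non–increasing sequences, as one checks at the boundary from \eqref{eq:puredeath}, where $d_n:=p_n-p_{n+1}$ satisfies $\dot d_n=(n+2)\,d_{n+1}$ whenever $d_n=0$. Hence $\cE_\lambda(\rho^\downarrow)=\sum_n p_n(t)\proj{n}$ with $p_0(t)\ge p_1(t)\ge\cdots$, so that $\sum_{i=0}^{n}\lambda_i^\downarrow(\cE_\lambda(\rho^\downarrow))=\Tr(\Pi_n\,\cE_\lambda(\rho^\downarrow))$. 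On the other side, Ky Fan's maximum principle gives $\sum_{i=0}^{n}\lambda_i^\downarrow(\cE_\lambda(\rho))=\max\{\Tr(\Pi\,\cE_\lambda(\rho))\}$ over orthogonal projections $\Pi$ of rank $n+1$. Thus it suffices to show, for every $n$ and every such $\Pi$,
\[
\Tr(\Pi\,\cE_\lambda(\rho))=\Tr(\cE_\lambda^{*}(\Pi)\,\rho)\ \le\ \Tr(\cE_\lambda^{*}(\Pi_n)\,\rho^\downarrow)=\Tr(\Pi_n\,\cE_\lambda(\rho^\downarrow)),
\]
where $\cE_\lambda^{*}$ is the (unital, completely positive) dual channel.

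The auxiliary facts are routine. Gauge covariance of $\cE_\lambda$ (commutation with $\mathrm{Ad}_{e^{i\theta\hat n}}$) passes to the dual, so $\cE_\lambda^{*}(\Pi_n)$ commutes with $\hat n$, hence is Fock–diagonal with $m$-th entry $\Tr(\cE_\lambda(\proj{m})\Pi_n)=\mathbb{P}[\mathrm{Bin}(m,\lambda)\le n]$, a sequence non–increasing in $m$ since $\mathrm{Bin}(m,\lambda)$ is stochastically increasing in $m$; in particular its decreasingly ordered eigenvalues are exactly these diagonal entries, so $\sum_i\mu_i^\downarrow(\cE_\lambda^{*}(\Pi_n))\,p_i^\downarrow=\Tr(\cE_\lambda^{*}(\Pi_n)\rho^\downarrow)$. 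Using the identity $\cE_\lambda(I)=\lambda^{-1}I$ one also finds $\Tr\cE_\lambda^{*}(\Pi)=\Tr\cE_\lambda^{*}(\Pi_n)=(n+1)/\lambda<\infty$, so both are trace class of equal trace. Granting the operator majorization $\cE_\lambda^{*}(\Pi)\prec\cE_\lambda^{*}(\Pi_n)$, the displayed inequality follows: von Neumann's trace inequality gives $\Tr(\cE_\lambda^{*}(\Pi)\rho)\le\sum_i\mu_i^\downarrow(\cE_\lambda^{*}(\Pi))\,p_i^\downarrow$, and Abel summation against the non–increasing sequence $(p_i^\downarrow)$ bounds this by $\sum_i\mu_i^\downarrow(\cE_\lambda^{*}(\Pi_n))\,p_i^\downarrow=\Tr(\cE_\lambda^{*}(\Pi_n)\rho^\downarrow)$.

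The hard part is exactly this operator majorization $\cE_\lambda^{*}(\Pi)\prec\cE_\lambda^{*}(\Pi_n)$: the statement that among all rank–$(n+1)$ projections the low–lying one $\Pi_n$ has the spectrally most concentrated image under the dual attenuator. (By the same duality, running the argument with $\rho$ replaced by a finite–rank projection $P$ and $\rho^\downarrow$ by $P^\downarrow=\Pi_{k-1}$, this is equivalent to $\cE_\lambda(P)\prec\cE_\lambda(P^\downarrow)$ for projections, which is where one would actually attack it.) I would use trace–norm continuity of both sides in $\rho$ to reduce to finitely supported $\rho$ — whose image under $\cE_\lambda$ again has finite support, since the attenuator does not raise the photon number — reducing the optimization to finite dimension; there $\Pi\mapsto\sum_{i<k}\lambda_i^\downarrow(\cE_\lambda^{*}(\Pi))$ is convex and the rank–$(n+1)$ projections are precisely the extreme points of $\{0\le X\le I:\Tr X=n+1\}$, so a maximizer is a projection, and one then runs an exchange argument exploiting that the Kraus operators of $\cE_\lambda$ are multiples of powers of the annihilation operator (each application lowers the photon number), which forces the optimum to be $\Pi_n$. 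An alternative, matching the reduction used later in Section~\ref{sec:gaussian:conn_class}, is to prove the statement first for the classical binomial–thinning matrix — a stochastic–monotonicity computation — and lift it to the quantum attenuator via gauge covariance. Either way, this step together with the infinite–dimensional approximation is the technical core, while the gauge covariance and binomial identities above are bookkeeping; this is essentially the argument of~\cite{depalmaetal}.
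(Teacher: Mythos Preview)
The paper does not prove this theorem: its ``proof'' is a bare citation to \cite[Eq.~(VI.10)]{depalmaetal} for the majorization and to \cite[Theorem~III.3]{depalmaetal} for the entropy consequence. Your proposal therefore goes well beyond what the paper does, by attempting to reconstruct the argument behind the cited result. Your reductions are correct: the passivity of $\cE_\lambda(\rho^\downarrow)$ via the boundary computation $\dot d_n=(n+2)d_{n+1}$, the Ky Fan maximum principle, the von Neumann trace inequality, the Abel summation against the non-increasing $(p_i^\downarrow)$, and the trace-class check $\Tr\cE_\lambda^{*}(\Pi)=(n+1)/\lambda$ all hold as stated, and they do reduce everything to the operator majorization $\cE_\lambda^{*}(\Pi)\prec\cE_\lambda^{*}(\Pi_n)$ for rank-$(n{+}1)$ projections~$\Pi$, equivalently to $\cE_\lambda(P)\prec\cE_\lambda(P^\downarrow)$ for finite-rank projections~$P$.

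That said, the proposal stops precisely at the hard step. The two suggested attacks (an ``exchange argument'' exploiting that the Kraus operators are powers of $a$, and a reduction to binomial thinning lifted via gauge covariance) are one sentence each, not proofs, and neither is obviously easier than the original statement; in particular, gauge covariance alone does not lift classical majorization to the full quantum statement, since an arbitrary projection $\Pi$ need not be Fock-diagonal. Your assertion that ``this is essentially the argument of~\cite{depalmaetal}'' is also not quite accurate: the proof there proceeds by an infinitesimal (semigroup) argument, comparing the partial sums $\Tr(\Pi_n\,e^{t\cL_-}(\rho^\downarrow))$ along the flow and interleaving Fock rearrangements, rather than by passing to the dual channel and optimizing over projections. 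Your dual-channel reformulation is a legitimate alternative entry point, but as written it is an outline rather than a proof; for the purposes of this paper, the citation is all that is needed.
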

\begin{proof}
The first statement is shown in~\cite[Eq. (VI.10)]{depalmaetal}, and the second statement then follows from \cite[Theorem III.3]{depalmaetal}.
\end{proof}
 
  \begin{lemma}(\cite[Lemma IV.9]{depalmaetal})\label{lem:depalmamajorx}
 Suppose $X,Y,Z$ are positive trace-class operators with
 \begin{align}
 Y\prec_w Z\qquad\textrm{ and }\qquad Z^\downarrow= Z\ .
 \end{align}
 Then
 \begin{align}
 \Tr(XY)\leq \Tr(X^\downarrow Z)\ .
 \end{align}
 \end{lemma}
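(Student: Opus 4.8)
The plan is to reduce the operator inequality to a one–dimensional inequality about eigenvalue sequences, and then to establish the latter from the weak sub-majorization hypothesis by summation by parts.

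First I would pass to Fock rearrangements. Let $\{x_n\}_{n\in\mathbb{N}_0}$, $\{y_n\}_{n\in\mathbb{N}_0}$, $\{z_n\}_{n\in\mathbb{N}_0}$ denote the eigenvalues of $X$, $Y$, $Z$ arranged in decreasing order, so that $X^\downarrow=\sum_n x_n\proj{n}$ and similarly for $Y$ and $Z$. Since $Z=Z^\downarrow$ by hypothesis, $\Tr(X^\downarrow Z)=\sum_n x_n z_n$. The first ingredient is the rearrangement (von Neumann trace) inequality for positive trace-class operators: $\Tr(XY)\leq\sum_n x_n y_n=\Tr(X^\downarrow Y^\downarrow)$. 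Indeed, writing $\Tr(XY)=\sum_{n,m}x_n y_m|\langle e_n|f_m\rangle|^2$ in the eigenbases $\{e_n\}$ of $X$ and $\{f_m\}$ of $Y$, the matrix with entries $|\langle e_n|f_m\rangle|^2$ is doubly stochastic, so the sum is bounded by the value attained with the two decreasing orderings aligned, and the infinite-dimensional version of this bound holds for trace-class operators. It therefore suffices to prove the scalar inequality $\sum_n x_n y_n\leq\sum_n x_n z_n$.

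For this, put $S_n:=\sum_{i=0}^n (z_i-y_i)$, which is nonnegative for every $n$ precisely by the weak sub-majorization $Y\prec_w Z$, and which is bounded above by $\Tr(Z)<\infty$. Summation by parts gives, for every $N$,
\[
\sum_{n=0}^N x_n(z_n-y_n)=x_N S_N+\sum_{n=0}^{N-1}(x_n-x_{n+1})S_n\geq 0,
\]
since $x_N\geq 0$, $S_N\geq 0$, and $x_n-x_{n+1}\geq 0$ because the eigenvalue sequence of the positive operator $X$ is non-increasing. Letting $N\to\infty$: the boundary term obeys $0\leq x_N S_N\leq x_N\Tr(Z)\to 0$ as $X$ is trace-class, while the series $\sum_n(x_n-x_{n+1})S_n$ converges because its terms are nonnegative and dominated by $\Tr(Z)(x_n-x_{n+1})$, whose partial sums telescope to at most $\Tr(Z)\,x_0$. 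Hence $\sum_n x_n(z_n-y_n)\geq 0$, i.e. $\Tr(X^\downarrow Y^\downarrow)\leq\Tr(X^\downarrow Z)$, and combining with the first step yields $\Tr(XY)\leq\Tr(X^\downarrow Z)$.

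The genuinely substantive inputs are the two facts just used — the rearrangement inequality and the bound $S_n\geq 0$ coming from $Y\prec_w Z$ — both of which are standard. The only real care needed is in the infinite-dimensional bookkeeping: invoking von Neumann's trace inequality for trace-class rather than finite-rank operators, and justifying the interchange of the limit $N\to\infty$ with the summation-by-parts identity, which the trace-class summability of $X$ and $Z$ handles via the telescoping majorant. I expect this passage to the infinite-dimensional setting to be the main (though routine) obstacle.
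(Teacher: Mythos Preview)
Your argument is correct. Note, however, that the paper does not actually prove this lemma: it is quoted verbatim from \cite[Lemma~IV.9]{depalmaetal} and used as a black box, so there is no ``paper's own proof'' to compare against. Your two-step reduction --- first the von Neumann trace (rearrangement) inequality $\Tr(XY)\le\Tr(X^\downarrow Y^\downarrow)$, then Abel summation against the partial sums $S_n=\sum_{i\le n}(z_i-y_i)\ge 0$ furnished by weak sub-majorization --- is the standard route to this type of statement and is carried out cleanly, including the trace-class bookkeeping for the $N\to\infty$ limit.
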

 
 We begin proving Theorem \ref{thm:correspondence_class} by investigating the change of the mean photon number under the Fock rearrangement procedure.
 
 \begin{lemma}\label{lem:photonnumberrearrange}
 For any state~$\rho$, the Fock rearrangement does not increase the mean photon number
 \begin{align}
 \Tr(\hat{n}\rho^\downarrow)\leq \Tr(\hat{n}\rho)\ ,
 \end{align}
 where $\hat{n} = a^\dagger a$.
 \end{lemma}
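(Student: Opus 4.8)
The plan is to compare $\Tr(\hat n \rho)$ and $\Tr(\hat n\rho^\downarrow)$ directly in terms of the (common) eigenvalue sequence $\{x_n\}_{n\in\mathbb{N}_0}$ of $\rho$ arranged in decreasing order. Since $\rho^\downarrow=\sum_{n\geq 0}x_n\proj{n}$, we have $\Tr(\hat n\rho^\downarrow)=\sum_{n\geq 0}n\,x_n$. For $\Tr(\hat n\rho)$ I would expand $\rho=\sum_k x_k\proj{\psi_k}$ in its eigenbasis, so that $\Tr(\hat n\rho)=\sum_k x_k\langle\psi_k|\hat n|\psi_k\rangle$. Setting $a_k:=\langle\psi_k|\hat n|\psi_k\rangle\geq 0$, the statement to prove becomes
\begin{align}
\sum_{k\geq 0}x_k a_k\ \geq\ \sum_{n\geq 0}n\,x_n\ ,
\end{align}
where $(x_k)$ is nonnegative and decreasing. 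The first idea is that the numbers $a_k$ are not arbitrary: because the $\{\ket{\psi_k}\}$ are orthonormal and $\hat n$ has eigenvalues $0,1,2,\dots$, the sequence $(a_k)$ majorizes (in the weak-submajorization sense, from below) the sequence $(0,1,2,\dots)$ — more precisely, for the partial sums one has $\sum_{k=0}^N a_k\geq \sum_{n=0}^N n$ for every $N$. This is the key combinatorial fact, and I expect it to be the main obstacle.

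To justify that fact, let $P_N$ be the projector onto $\mathrm{span}\{\ket{\psi_0},\dots,\ket{\psi_N}\}$, an $(N+1)$-dimensional subspace. Then $\sum_{k=0}^N a_k=\Tr(P_N\hat n)=\Tr(P_N\hat n P_N)$, which is the sum of the eigenvalues of the compression of $\hat n$ to an $(N{+}1)$-dimensional subspace. By the Ky Fan / min-max characterization, the minimum over all $(N{+}1)$-dimensional subspaces of this trace equals the sum of the $N{+}1$ smallest eigenvalues of $\hat n$, namely $0+1+\cdots+N$. Hence $\sum_{k=0}^N a_k\geq\sum_{n=0}^N n$ for all $N$, which is exactly the partial-sum inequality claimed. (A small subtlety: one should note $\sum_k x_k a_k$ is finite, which holds since $\Tr(\hat n\rho)<\infty$ is part of the standing assumption that $\rho$ has finite mean photon number; if it is infinite there is nothing to prove.)

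Finally, I would combine the partial-sum inequality $\sum_{k=0}^N a_k\geq\sum_{k=0}^N k$ with the fact that $(x_k)$ is nonnegative and nonincreasing via Abel summation (summation by parts): writing $x_k=\sum_{j\geq k}(x_j-x_{j+1})+\lim x_j$ and using $x_j-x_{j+1}\geq 0$ and $\lim x_j=0$, one gets
\begin{align}
\sum_{k\geq 0}x_k a_k-\sum_{k\geq 0}x_k k=\sum_{k\geq 0}x_k(a_k-k)
=\sum_{N\geq 0}(x_N-x_{N+1})\sum_{k=0}^N(a_k-k)\ \geq\ 0\ ,
\end{align}
since each inner sum is nonnegative. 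This yields $\Tr(\hat n\rho)\geq\Tr(\hat n\rho^\downarrow)$, as desired. The only place requiring care beyond bookkeeping is the interchange of summations in the Abel step, which is legitimate because all terms $(x_N-x_{N+1})(a_k-k)$ can be handled by splitting $a_k-k$ into its positive and negative parts and invoking the finiteness of $\Tr(\hat n\rho)$ and of $\sum_n n x_n=\Tr(\hat n\rho^\downarrow)$ (the latter being finite exactly when the former is, by the inequality itself applied in the truncated finite-dimensional setting, or by an independent direct argument).
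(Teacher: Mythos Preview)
Your proof is correct and takes a genuinely different, more elementary route than the paper. The paper first invokes \cite[Lemma~IV.9]{depalmaetal} together with the trivial majorization $\rho\prec\rho^\downarrow$ to deduce the Fock-majorization relation $\rho\prec_F\rho^\downarrow$, i.e.\ $\Tr(\Pi_n\rho)\leq\Tr(\Pi_n\rho^\downarrow)$ for all~$n$, and then cites \cite[Eq.~(5)]{jabbouretal} for the implication $\rho\prec_F\sigma\Rightarrow\Tr(\hat n\sigma)\leq\Tr(\hat n\rho)$. You bypass both external results: the Ky~Fan step (the trace of~$\hat n$ compressed to any $(N{+}1)$-dimensional subspace is at least $0+1+\cdots+N$) together with Abel summation against the decreasing eigenvalue sequence~$(x_k)$ gives the inequality directly. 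Your argument is self-contained; the paper's is shorter on the page only because the work is exported to the cited references, and it introduces the Fock-majorization notion which is not used elsewhere in the paper. One minor remark: your worry about interchanging sums in the Abel step is unnecessary, since the finite partial-sum inequality $\sum_{k=0}^M k\,x_k\leq\sum_{k=0}^M x_k a_k$ already holds for every~$M$ from your Abel identity (all terms on the right are nonnegative), and passing to the monotone limit of series of nonnegative terms finishes the argument without any rearrangement.
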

 \begin{proof}
   Let $n\in\mathbb{N}_0$ be arbitrary, and set $X=\Pi_n$ with $\Pi_n$ defined in \eqref{eq:projfock}. Then clearly $X^\downarrow =X=\Pi_n$.
 Setting~$Y=\rho$, $Z=\rho^\downarrow$, we have $Z^\downarrow=Z$, and $Y\prec Z$, according to \eqref{eq:majorizationprec} for $t=0$, i.e.
 \begin{align}
 \rho\prec \rho^\downarrow\ .\label{eq:preceqrhorhofock}
 \end{align}
  Thus, Lemma~\ref{lem:depalmamajorx} leads to
$
 \Tr(\Pi_n\rho)\leq \Tr(\Pi_n\rho^\downarrow).
$
 According to Definition~\eqref{eq:fockmajorizationrelatdef}, this is equivalent to $$\rho\prec_F\rho^\downarrow.$$ It was shown in (\cite[Eq.~(5)]{jabbouretal}) that
 \begin{align}
 \rho\prec_F\sigma \qquad\Rightarrow\qquad \Tr(\hat{n}\sigma)\leq \Tr(\hat{n}\rho)\ .\label{eq:fockmajorizationv}
 \end{align}  
 The claim follows by taking $\sigma=\rho^\downarrow$ in the last statement.
 \end{proof}

\begin{proof}[Proof of Theorem \ref{thm:correspondence_class}]
Since $S(\rho^\downarrow)=S(\rho)$, Theorem \ref{thm:entropy_ineq_Fock} implies
\begin{align}
\frac{d}{dt}\Big|_{t=0}S\left(e^{t\cL_{-}}(\rho)\right)\geq \frac{d}{dt}\Big|_{t=0}S\left(e^{t\cL_{-}}(\rho^\downarrow)\right)\ .
\label{eq:majresult}
\end{align}
With Lemma~\ref{lem:photonnumberrearrange}, we therefore obtain
\begin{align}\label{eq:relation_class}
\inf_{\rho:\Tr(\hat{n}\rho)\leq \n}\frac{d}{dt}\Big|_{t=0}S\left(e^{t\cL_{-}}(\rho)\right) = \inf_{\rho = \rho^\downarrow:\Tr(\hat{n}\rho^\downarrow)\leq\n}\frac{d}{dt}\Big|_{t=0}S\left(e^{t\cL_{-}}(\rho^\downarrow)\right)\ .
\end{align} 
Since $\rho^\downarrow$ is a passive, Fock-rearranged state, the right-hand side is a classical problem related to the pure-death process~\eqref{eq:puredeath}, and, therefore, can be replaced with the infimum of the entropy rate of a probability distribution evolving under a pure-death process.  Thus the claim follows.
\end{proof}

\subsection{Geometric distributions optimize entropy rates of the classical death process under energy constraint}\label{sec:gaussian:class}
\newcommand{\cG}{\mathcal G}
For a probability distribution $p$ on~$\mathbb{N}_0$, let 
\begin{align}
J_-(p):=2\frac{d}{dt}\Big|_{t=0}H(e^{t\cC_-}(p))
\end{align}
denote the entropy rate when~$p$ evolves under the classical death-process~$\cC_-$ (cf.~\eqref{eq:c-action}). We are interested in distributions~$p$ with a fixed expectation value~$\mathbb{E}_p[N]=\sum_{n=0}^\infty n p_n$. 
The main
result we use here is~\cite[Theorem 24]{depalma2016}: it states that
for any probability distribution~$p$ on~$\mathbb{N}_0$, the quantity $J_-(p)$ is bounded by
\begin{align}
 \inf_{p: H(p) \leq H} J_-(p)\geq 2\inf_{p:H(p)\leq g(\n) } f(H(p))\ \label{eq:depalmaeq}
\end{align}
where $f(H) = - g^{-1}(H) g'(g^{-1}(H))$ 
and where  $g(\n)=(\n+1)\log(\n+1) - \n\log\n$ is the entropy of 
a geometric distribution with expectation value~$\n$ (or equivalently the entropy of a Gaussian state with mean photon number~$\n$). We use this to show the following:
\begin{theorem}
  \label{thm:alldistributionslowerbound}
The infimum $  \inf_{p: \mathbb{E}[N] \leq  \n} J_-(p)$ is achieved by the geometric distribution 
$p_k^{\mathrm{geo},\n} = (1-r)r^k$ with $r = \frac{\n}{\n+1}$.
In particular, for $\n > 0$,
\begin{align}
  \inf_{p: \mathbb{E}[N] \leq  \n} J_-(p) = - 2\n \log\left(1 + \frac{1}{\n}\right)\ .
\end{align}
\end{theorem}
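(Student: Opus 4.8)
The plan is to reduce Theorem~\ref{thm:alldistributionslowerbound} to the already-quoted bound~\eqref{eq:depalmaeq} from~\cite{depalma2016}, which is phrased in terms of an entropy constraint rather than an energy constraint, and to do so by exploiting two monotonicity facts: that the entropy rate $J_-(p)$ (as a function of an appropriate parameter) and the Shannon entropy $H$ are both monotone along the relevant one-parameter family of geometric distributions. Concretely, I would first record the explicit value of $J_-$ on geometric distributions. Using the action~\eqref{eq:c-action} of $\cC_-$ and the identity $\frac{d}{dt}\big|_{t=0}H(e^{t\cC_-}(p)) = -\sum_n (\cC_-(p))_n \log p_n$ (the $\sum (\cC_-(p))_n$ term vanishes since $\cC_-$ preserves normalization), a direct computation with $p_k^{\mathrm{geo},\n} = (1-r)r^k$, $r=\tfrac{\n}{\n+1}$, gives $J_-(p^{\mathrm{geo},\n}) = -2\n\log(1+\tfrac1\n)$. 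This simultaneously identifies the candidate optimizer and the claimed value, and one also checks $H(p^{\mathrm{geo},\n}) = g(\n) = (\n+1)\log(\n+1)-\n\log\n$, consistent with the statement that $g$ is the entropy of a geometric distribution of mean $\n$.

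Next I would set up the comparison. Fix $\n>0$ and let $p$ be an arbitrary distribution with $\mathbb{E}_p[N]\le \n$. The Gaussian (here: geometric) maximum entropy principle on $\mathbb{N}_0$ says that among distributions with mean at most $\n$, the geometric distribution $p^{\mathrm{geo},\n}$ has the largest entropy, so $H(p)\le g(\n)$. Thus every feasible $p$ for the energy-constrained problem is also feasible for the entropy-constrained problem with threshold $H=g(\n)$, and therefore
\begin{align}
\inf_{p:\,\mathbb{E}_p[N]\le\n} J_-(p) \;\ge\; \inf_{p:\,H(p)\le g(\n)} J_-(p)\ .
\end{align}
Now apply~\eqref{eq:depalmaeq} with $H=g(\n)$: the right-hand side is $2\inf_{p:\,H(p)\le g(\n)} f(H(p)) = 2\inf_{0\le h\le g(\n)} f(h)$, where $f(h) = -g^{-1}(h)\,g'(g^{-1}(h))$. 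Substituting $h=g(m)$, i.e.\ writing $h$ in terms of the mean $m\in[0,\n]$ of the corresponding geometric state, this becomes $2\inf_{0\le m\le\n}\big(-m\,g'(m)\big)$. Since $g'(m) = \log(1+\tfrac1m)$, we have $-m g'(m) = -m\log(1+\tfrac1m)$, and it remains to check that $m\mapsto -m\log(1+\tfrac1m)$ is decreasing on $(0,\n]$ (its derivative is $-\log(1+\tfrac1m)+\tfrac{1}{m+1}<0$), so the infimum over $m\in[0,\n]$ is attained at $m=\n$ and equals $-\n\log(1+\tfrac1\n)$. Hence $\inf_{p:\,\mathbb{E}_p[N]\le\n} J_-(p)\ge -2\n\log(1+\tfrac1\n)$.

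Finally, the matching upper bound is immediate: $p^{\mathrm{geo},\n}$ is feasible (its mean is exactly $\n$) and attains $J_-(p^{\mathrm{geo},\n}) = -2\n\log(1+\tfrac1\n)$ by the first step, so the infimum equals this value and is achieved by the geometric distribution. I expect the main obstacle to be bookkeeping around the function $f$ and the reparametrization from the entropy variable $h$ to the mean $m$ — in particular making sure that $g$ is a bijection on the relevant range so that $g^{-1}$ is well defined, and that the monotonicity direction of $m\mapsto -mg'(m)$ is the one that places the optimizer at the boundary $m=\n$ rather than in the interior; everything else is routine differentiation of geometric-distribution entropies plus the classical maximum-entropy fact and the cited inequality~\eqref{eq:depalmaeq}.
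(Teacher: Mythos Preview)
Your proposal is correct and follows essentially the same route as the paper: invoke the maximum entropy principle to pass from the energy constraint $\mathbb{E}_p[N]\le\n$ to the entropy constraint $H(p)\le g(\n)$, apply the bound~\eqref{eq:depalmaeq} from~\cite{depalma2016}, use the monotonicity of $f$ (which you verify directly via the reparametrization $h=g(m)$ and the derivative of $-m g'(m)$, whereas the paper cites \cite[Lemma~5]{depalma2016}) to land at $2f(g(\n))=-2\n\log(1+\tfrac1\n)$, and match this with the explicit value of $J_-(p^{\mathrm{geo},\n})$.
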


\begin{proof}
We show that 
\begin{align}
  \inf_{p: \mathbb{E}[N] =  \n} J_-(p) =J(p^{\mathrm{geo},\n})= - 2\n \log\left(1 + \frac{1}{\n}\right)\ .\label{eq:toshownvf}
\end{align}
Since the right-hand side is monotonically decreasing with $\n$, Eq.~\eqref{eq:toshownvf} implies the claim of the theorem.

The geometric distribution $p^{\mathrm{geo},\n}~=~(1-r)r^k$ with $r=\frac{\n}{\n+1}$ has expectation value $\mathbb{E}_{p^{\mathrm{geo},\n}}[N] = \n$ and entropy~$H(p^{\mathrm{geo},\n})=g(\n)$. By the maximum entropy principle \cite[Chapter 12]{coverelements}, we know that geometric distributions are the distributions with maximal entropy among all distributions with a fixed expectation value $\mathbb{E}_p[N]$.
Therefore we have 
\begin{align}
  \mathbb{E}_p[N] = \n \qquad \Rightarrow\qquad H(p) \leq g(\n)\ .
\end{align}
Combining this with~\eqref{eq:depalmaeq} we  obtain
\begin{align}
  \inf_{p: \mathbb{E}_p[N] = \n} J_-(p) \geq \inf_{p: H(p) \leq g(\n)} J_-(p) \geq 2\inf_{p: H(p) \leq g(\n)}  f(H(p))\ ,
\end{align}
 Since $f$ is decreasing \cite[Lemma 5]{depalma2016}, it follows that
\begin{align}
  \inf_{p:\mathbb{E}[N] = \n} J_-(p) \geq 2 f(g(\n))= -2 \n \log\left(1 + \frac{1}{\n}\right)\ .
  \label{eq:jminusboundcl}
\end{align}
However, since 
$\mathbb{E}_{p^{\mathrm{geo},\n}}[N] = \n$ and $J_-(p^{\mathrm{geo},\n}) = -2 \n \log\left(1 + \frac{1}{\n}\right)$, we have equality in \eqref{eq:jminusboundcl}.
\end{proof}
\subsection{Gaussian optimality of entropy rates for the quantum attenuator semigroup}\label{sec:gaussian:gaussian}

\begin{proof}[Proof of Theorem \ref{thm:gaussian_optimal}]

From Theorem \ref{thm:correspondence_class} and Theorem \ref{thm:alldistributionslowerbound} we have 
\begin{align}
  \inf_{\rho:\Tr(\hat{n}\rho)\leq\n} \frac{d}{dt}\big|_{t=0} S(e^{t\cL_-}(\rho))=-\n\log\left(1+\frac{1}{\n}\right).
\end{align}
Let $\omega_\n$ be the Gaussian thermal state with mean photon number $\n$ as defined in \eqref{eq:gaussianthermal}. We have 
$
S(\omega_\n)=g(\n)=(\n+1)\log (\n+1)-\n\log \n$ for $\n>0.
$
Under the map $e^{t\cL_-}$, the state $\omega_\n$ evolves into the thermal state $\omega_{\n_t}$ according to
\begin{align*}
e^{t\cL_-}(\omega_\n)&=\omega_{\n_t}\ ,
\qquad\textrm{ where }\qquad \n_t=e^{-t}\n\ .
\end{align*}
In particular,
\begin{align*}
\frac{d}{dt}\Big|_{t=0}S(e^{t\cL_-}(\omega_\n))&=
g'(\n)\n'_t\Big|_{t=0}=-\n\log\left(1+\frac{1}{\n}\right)\ .
\end{align*}
Therefore the considered infimum is achieved by the Gaussian thermal state $\omega_\n$.

\end{proof}

\section{Application to fast convergence of the Ornstein-Uhlenbeck semigroup}\label{sec:app}

In this section we consider a one-parameter group of CPTP maps $\{e^{\cL_{\mu,\lambda}}\}_{t\geq 0}$ generated by the linear combination
\begin{align*}
  \cL_{\mu,\lambda}&=\mu^2\cL_-+\lambda^2\cL_+ \qquad \textrm { for } \mu > \lambda > 0\ .
\end{align*}
where $\cL_-$ is defined by \eqref{eq:L_-}, and $\cL_+$ is defined by
\begin{equation}\label{eq:L_+}
 \cL_+(\rho)= a^\dagger\rho a-\frac{1}{2}\{a a^\dagger, \rho\}\ . 
 \end{equation}
In the following,
we use the entropy production rates 
\begin{align}
J_{\pm}(\rho)&:=2 \frac{d}{dt} S(e^{t\cL_{\pm}}(\rho))\ .
\end{align}
The factor $2$ here is for convenience to match de Bruijn identity \eqref{eq:de_Bruijn_heat} for $J(\rho)$. 
These quantities are related to the Fisher information $J(\rho)$ by
\begin{align}
J(\rho)&=2\pi (J_-(\rho)+J_+(\rho))\ ,\label{eq:sumidentityplusminus}
\end{align}
because
of de Bruijn's identity
and the fact that $\cLh=2\pi\cL_-+2\pi\cL_+$.

\begin{lemma}\label{lem:rate_decay}
Let $\mu>\lambda >0$. Then
\begin{align}
-\zeta D(\rho\|\sigma_{\mu,\lambda})-\frac{d}{dt}\Big|_{t=0} D(e^{t\cL_{\mu,\lambda}}(\rho)\|\sigma_{\mu,\lambda})= 
~&\frac{\mu^2}{2}J_-(\rho)+\frac{\lambda^2}{2}J_+(\rho)+\zeta S(\rho)\\
&+\lambda^2\log \nu+\zeta \log(1-\nu)
\label{eq:rate_decay}
\end{align} 
for any state $\rho$, where $\nu=\frac{\lambda^2}{\mu^2}$, $\zeta=\mu^2-\lambda^2$, and $\sigma_{\mu,\lambda} $ is the fixed point of $\cL_{\mu,\lambda}$.
\end{lemma}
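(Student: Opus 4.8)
The plan is to compute $\frac{d}{dt}\big|_{t=0} D(e^{t\cL_{\mu,\lambda}}(\rho)\|\sigma_{\mu,\lambda})$ directly by expanding the relative entropy into its two constituent terms, $D(\rho\|\sigma) = -S(\rho) - \tr(\rho\log\sigma)$, and differentiating each. First I would write $\log\sigma_{\mu,\lambda} = \log(1-\nu)\, \idty + (\log\nu)\, \hat n$, which follows immediately from the geometric form of the fixed point $\sigma_{\mu,\lambda} = (1-\nu)\sum_n \nu^n\proj{n}$ together with $\hat n = a^\dagger a$. Consequently $-\tr(\rho\log\sigma_{\mu,\lambda}) = -\log(1-\nu) - (\log\nu)\,\tr(\hat n\rho)$, and so differentiating $D(e^{t\cL_{\mu,\lambda}}(\rho)\|\sigma_{\mu,\lambda})$ in $t$ at $t=0$ reduces to controlling $\frac{d}{dt}\big|_{t=0} S(e^{t\cL_{\mu,\lambda}}(\rho))$ and $\frac{d}{dt}\big|_{t=0}\tr(\hat n\, e^{t\cL_{\mu,\lambda}}(\rho))$.

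The entropy derivative is handled by linearity of $\cL_{\mu,\lambda} = \mu^2\cL_- + \lambda^2\cL_+$ and the definition $J_\pm(\rho) = 2\frac{d}{dt}S(e^{t\cL_\pm}(\rho))$, giving $\frac{d}{dt}\big|_{t=0} S(e^{t\cL_{\mu,\lambda}}(\rho)) = \frac12\mu^2 J_-(\rho) + \frac12\lambda^2 J_+(\rho)$; one must check that the entropy rate is additive over the generator, which holds since $\frac{d}{dt}\big|_{t=0}S(e^{t\cL}(\rho)) = -\tr(\cL(\rho)(\log\rho + \idty)) = -\tr(\cL(\rho)\log\rho)$ is linear in $\cL$. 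For the photon-number rate I would compute $\tr(\hat n\,\cL_\pm(\rho))$ explicitly using the adjoint action: $\tr(\hat n\,\cL_-(\rho)) = -\tr(\hat n\rho)$ and $\tr(\hat n\,\cL_+(\rho)) = \tr(\hat n\rho) + 1$ (using $[a,a^\dagger]=\idty$ and cyclicity of the trace), so $\frac{d}{dt}\big|_{t=0}\tr(\hat n\,e^{t\cL_{\mu,\lambda}}(\rho)) = -\mu^2\tr(\hat n\rho) + \lambda^2\tr(\hat n\rho) + \lambda^2 = -\zeta\tr(\hat n\rho) + \lambda^2$, with $\zeta = \mu^2-\lambda^2$.

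Assembling these pieces, $\frac{d}{dt}\big|_{t=0} D(e^{t\cL_{\mu,\lambda}}(\rho)\|\sigma_{\mu,\lambda}) = -\frac12\mu^2 J_-(\rho) - \frac12\lambda^2 J_+(\rho) - (\log\nu)(-\zeta\tr(\hat n\rho) + \lambda^2)$. The final manipulation is to eliminate $\tr(\hat n\rho)$ in favor of $S(\rho)$ and $D(\rho\|\sigma_{\mu,\lambda})$ using the identity $D(\rho\|\sigma_{\mu,\lambda}) = -S(\rho) - \log(1-\nu) - (\log\nu)\tr(\hat n\rho)$, i.e. $-(\log\nu)\,\zeta\,\tr(\hat n\rho) = \zeta\big(D(\rho\|\sigma_{\mu,\lambda}) + S(\rho) + \log(1-\nu)\big)$; substituting this in and moving the $\zeta D(\rho\|\sigma_{\mu,\lambda})$ term to the left-hand side yields precisely~\eqref{eq:rate_decay} after collecting the constant terms $\lambda^2\log\nu$ and $\zeta\log(1-\nu)$. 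I expect the main obstacle to be bookkeeping rather than conceptual: keeping track of signs in the relative-entropy expansion and making sure the $\log\nu$ versus $\log\nu^{-1}$ conventions are consistent, and — on the rigor side — justifying the interchange of $\frac{d}{dt}$ with the trace for possibly unbounded $\hat n$ and for $\log\rho$, which is exactly the kind of regularity assumption the paper has already declared it will take for granted (as with the quantum de Bruijn identity).
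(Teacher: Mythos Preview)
Your proposal is correct and follows essentially the same route as the paper: write $D(\rho\|\sigma_{\mu,\lambda}) = -S(\rho) - \log(1-\nu) - (\log\nu)\tr(\hat n\rho)$, differentiate the entropy part via linearity of the generator to get $\frac{\mu^2}{2}J_-(\rho)+\frac{\lambda^2}{2}J_+(\rho)$, compute the photon-number rate, and then eliminate $\tr(\hat n\rho)$ using the relative-entropy identity. The only cosmetic difference is that the paper obtains $\frac{d}{dt}\big|_{t=0}\tr(\hat n\, e^{t\cL_{\mu,\lambda}}(\rho))$ by first solving the full evolution $\n_t = e^{-\zeta t}\tr(\rho\hat n) + (1-e^{-\zeta t})\n_\infty$ and differentiating, whereas you compute $\cL_\pm^\dagger(\hat n)$ directly; both yield $-\zeta\tr(\hat n\rho)+\lambda^2$.
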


\begin{proof}
  As mentioned in Section \ref{sec:appornstein}, the unique fixed point of the semigroup $\{e^{t\cL_{\mu,\lambda}}\}_{t\geq 0}$ is the state $$\sigma_{\mu,\lambda}=(1-\nu ) \sum_{n=0}^\infty \nu^n \proj{n}=(1-\nu)\nu^{\hat{n}},$$ 
where $\nu=\lambda^2/\mu^2$. In particular, this implies that for any state~$\rho$
\begin{align}
D(\rho\|\sigma_{\mu,\lambda})&=-S(\rho)-\Tr(\rho\log\sigma_{\mu,\lambda})
=-S(\rho)-(\log \nu)\Tr(\rho \hat{n})-\log(1-\nu)\ .\label{eq:drhosigmaexpl}
\end{align}
Straightforward calculations show that the mean photon number of $\rho$ converges to the mean photon number of the fixed point $\sigma_{\mu,\lambda}$ with an exponential rate 
\begin{align}
\n_t=\tr(e^{t\cL_{\mu,\lambda}}(\rho)\hat{n})&=\tr(\rho e^{t\cL^\dagger_{\mu,\lambda}}(\hat{n}))=e^{-(\mu^2-\lambda^2)t}\Tr(\rho \hat{n})+(1-e^{-(\mu^2-\lambda^2)t})\n_\infty\ ,
\end{align}
where $\n_\infty=\tr(\sigma_{\mu,\lambda}\hat{n})=\frac{\lambda^2}{\mu^2-\lambda^2}=\frac{\nu}{1-\nu}$. Therefore
\begin{align}\label{eq:n_der}
\frac{d}{dt}\Big|_{t=0}\Tr(e^{t\cL_{\mu,\lambda}}(\rho)\hat{n})=-(\mu^2-\lambda^2)\left(\Tr(\rho \hat{n})-\n_\infty\right).
\end{align}\label{eq:photon_number_convergence}
Combining the last equality with~\eqref{eq:drhosigmaexpl}, we find that
\begin{align}
-\frac{d}{dt}\Big|_{t=0} D(e^{t\cL_{\mu,\lambda}}(\rho)\|\sigma_{\mu,\lambda})&=\frac{\mu^2}{2}J_-(\rho)+\frac{\lambda^2}{2}J_+(\rho)-(\log \nu) (\mu^2-\lambda^2)\Tr(\rho \hat{n})+\lambda^2\log \nu\ .\label{eq:dtrhosigmacomp}
\end{align} 
With~\eqref{eq:drhosigmaexpl} and~\eqref{eq:dtrhosigmacomp},
and setting $\zeta=\mu^2-\lambda^2$,
we obtain the desired equality. 
\end{proof}

The choice of $\zeta=\mu^2-\lambda^2$ in the lemma is motivated by Gaussian states: in Appendix \ref{app:qOU} we show that for any Gaussian state $\rho$ the following inequality holds
\begin{align}
\frac{d}{dt}\Big|_{t=0} D(e^{t\cL_{\mu,\lambda}}(\rho)\|\sigma_{\mu,\lambda})\leq -\zeta D(\rho||\sigma_{\mu,\lambda})\qquad&
\text{ with }\qquad \zeta = \mu^2 - \lambda^2 > 0\ .
\end{align}
Furthermore, for any $\epsilon>0$ there exists a Gaussian state $\rho$ such that 
\begin{equation}
\frac{d}{dt}\Big|_{t=0} D(e^{t\cL_{\mu,\lambda}}(\rho)\|\sigma_{\mu,\lambda})\geq -(\zeta+\epsilon) D(\rho||\sigma_{\mu,\lambda})\ .
\end{equation}

Let us now consider a specific example of a quantum Ornstein-Uhlenbeck process.

\begin{example}\label{ex:2-1_process_S} Consider $\mu^2=2$, $\lambda^2=1$. Then
\begin{align}
\frac{d}{dt}\Big|_{t=0} D(e^{t\cL_{\sqrt{2},1}}(\rho)\|\sigma_{\sqrt{2},1})&\leq -D(\rho\|\sigma_{\sqrt{2},1})
\end{align}
for any state $\rho$ with $S(\rho) \gtrsim 2.06$. In comparison, the entropy of the fixed point is $S(\sigma_{\sqrt{2},1}) = 2\log(2) \approx 1.39$.
\end{example}
In Lemma \ref{lem:rate_decay} we can bound $J_+(\rho)\geq 2$~\cite[Eq. (43)]{buscemiwilde}, and the linear combination of $J_-(\rho)$ and $S(\rho)$ can be bounded by the result of de Palma et al. \cite{depalma2016}.
That is, for any state $\rho$ with $S(\rho)\geq S_0$,
\begin{align}
  \frac{\mu^2}{2} J_-(\rho)+\zeta S(\rho)&\geq \inf_{S\geq S_0}\left(\mu^2 f(S)+\zeta S\right)\ ,
\end{align}
where $f(S)=-g^{-1}(S)g'(g^{-1}(S))$, and $g(\n)=(\n+1)\log(\n+1)-\n\log(\n)$. Substituting
 $S=g(\n)$
gives
\begin{align} 
  \frac{\mu^2}{2} J_-(\rho)+\zeta S(\rho)&\geq \inf_{\n\geq g^{-1}(S_0)}\left(\mu^2 (-\n g'(\n))+\zeta g(\n)\right)=:F(S_0)\ .
\end{align}
That is, for any state with $S(\rho)\geq S_0$, we have
\begin{align}
-\zeta D(\rho\|\sigma_{\mu,\lambda})-\frac{d}{dt}\Big|_{t=0} D(e^{t\cL_{\mu,\lambda}}(\rho)\|\sigma_{\mu,\lambda})&\geq F(S_0)+\lambda^2+\lambda^2\log \nu+\zeta \log(1-\nu)\ .
\end{align}
With the choice $\mu^2=2$, $\lambda^2=1$, the function $F(S)$ is monotonically increasing for $S\gtrsim 0.5$. For $S_0\gtrsim 2.06$, the right-hand side of the last inequality is non-negative.
\boxendproof

The isoperimetric inequality for entropies \eqref{eq:iso_entropy} for one mode ($d = 1$) can be written as
 \begin{equation*}
  -S(\rho)\leq  \log \left(\frac{1}{4\pi e }J(\rho) \right)\ .
 \end{equation*}
 For any $A>0$, using $\log x\leq x-1$, we get
\begin{align}
\log \Bigl(\frac{1}{4\pi e }J(\rho) \Bigr)&=\log \Bigl(\frac{A J(\rho) }{4\pi e A } \Bigr)\\
&=\log \Bigl(\frac{1}{4\pi e A}\Bigr)+\log \Bigl(A J(\rho) \Bigr)\\
&\leq AJ(\rho)-2-\log (4\pi A).
\end{align} 
Therefore
  \begin{align}\label{eq:entropy_iso}
    -S(\rho)\leq AJ(\rho)-(2+\log (4\pi A))\ \text{ for } A > 0\ .
  \end{align}

\begin{lemma}[Log-Sobolev inequality for the qOU semigroup]\label{lem:logsobolev}
Let $\mu>\lambda >0$ and $\zeta>0$. Then
\begin{align}
-\zeta D(\rho\|\sigma_{\mu,\lambda})-\frac{d}{dt}\Big|_{t=0} D(e^{t\cL_{\mu,\lambda}}(\rho)\|\sigma_{\mu,\lambda}) &\geq \alpha_-J_-(\rho)+\alpha_+J_+(\rho)+\gamma\Tr(\rho \hat{n})+\delta
\label{eq:logsobolevstatement}
\end{align}
for all states $\rho$, where 
\begin{align}
\alpha_- &=\mu^2/2-2\pi A\zeta\\
\alpha_+ &=\lambda^2/2-2\pi A\zeta\\
\gamma &=(\log \nu)\left(\zeta-(\mu^2-\lambda^2)\right) \\
\delta &=\zeta \left(\log(1-\nu)+2+\log (4\pi A)\right)+\lambda^2\log \nu
\end{align}
for any $A>0$. Here  $\nu=\frac{\lambda^2}{\mu^2}$, and $\sigma_{\mu,\lambda} $ is the fixed point of $\cL_{\mu,\lambda}$. In particular, choosing $\zeta = \mu^2 - \lambda^2$ and $A = \frac{\lambda^2}{4\pi\left(\mu^2 - \lambda^2\right)}$, we have
\begin{align}
  -\zeta D(\rho\|\sigma_{\mu,\lambda})-\frac{d}{dt}\Big|_{t=0} D(e^{t\cL_{\mu,\lambda}}(\rho)\|\sigma_{\mu,\lambda}) &\geq -\zeta \n \log(1+1/\n)+\delta\ ,
  \label{eq:logsobolevex}
\end{align}
where $\n = \Tr(\rho \hat{n})$.
\end{lemma}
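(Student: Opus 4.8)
The plan is to start from the exact identity behind Lemma~\ref{lem:rate_decay}, generalized to an arbitrary $\zeta>0$, and then insert the entropy isoperimetric inequality in the form~\eqref{eq:entropy_iso}. First I would extend the identity of Lemma~\ref{lem:rate_decay}, which is stated only for $\zeta=\mu^2-\lambda^2$, to all $\zeta>0$. This costs nothing: combining equations~\eqref{eq:dtrhosigmacomp} and~\eqref{eq:drhosigmaexpl} from the proof of Lemma~\ref{lem:rate_decay} (equivalently, adding $(\mu^2-\lambda^2-\zeta)D(\rho\|\sigma_{\mu,\lambda})$ to the identity of Lemma~\ref{lem:rate_decay} and re-substituting~\eqref{eq:drhosigmaexpl}) yields
\begin{align}
-\zeta D(\rho\|\sigma_{\mu,\lambda})-\frac{d}{dt}\Big|_{t=0}D(e^{t\cL_{\mu,\lambda}}(\rho)\|\sigma_{\mu,\lambda})
=\frac{\mu^2}{2}J_-(\rho)+\frac{\lambda^2}{2}J_+(\rho)+\zeta S(\rho)+\gamma\Tr(\rho\hat{n})+\lambda^2\log\nu+\zeta\log(1-\nu)\ ,
\label{eq:genzetaid}
\end{align}
with $\gamma=(\log\nu)(\zeta-(\mu^2-\lambda^2))$ as in the statement; the term $\gamma\Tr(\rho\hat{n})$ reappears precisely because the coefficient $\zeta$ of $S(\rho)$ no longer matches the one dictated by the de Bruijn identity unless $\zeta=\mu^2-\lambda^2$.

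The core step is to make room for the free parameter $A$. I would write $\frac{\mu^2}{2}=\alpha_-+2\pi A\zeta$ and $\frac{\lambda^2}{2}=\alpha_++2\pi A\zeta$, so that the surplus contributions of the two Fisher terms in~\eqref{eq:genzetaid} amount to $2\pi A\zeta\bigl(J_-(\rho)+J_+(\rho)\bigr)=A\zeta J(\rho)$ by the decomposition~\eqref{eq:sumidentityplusminus}. Putting \emph{equal} weight $2\pi A\zeta$ on $J_-$ and on $J_+$ is the one genuinely non-automatic choice here: it is exactly what lets the two entropy-rate pieces coalesce into the single Fisher information $J(\rho)$ to which~\eqref{eq:entropy_iso} applies. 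After this substitution, the claimed inequality~\eqref{eq:logsobolevstatement} reduces, since $\zeta>0$, to $\zeta\bigl(AJ(\rho)+S(\rho)\bigr)\geq\zeta\bigl(2+\log(4\pi A)\bigr)$, which is just~\eqref{eq:entropy_iso} rearranged and multiplied by $\zeta$; collecting the constant $\lambda^2\log\nu+\zeta\log(1-\nu)+\zeta(2+\log(4\pi A))$ into $\delta$ finishes the general statement.

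For the specialization stated in~\eqref{eq:logsobolevex}, I would substitute $\zeta=\mu^2-\lambda^2$, which makes $\gamma=0$ and hence removes the $\Tr(\rho\hat{n})$ term, together with $A=\frac{\lambda^2}{4\pi(\mu^2-\lambda^2)}$, which makes $\alpha_+=0$ (eliminating the $J_+(\rho)$ term) and $\alpha_-=\frac{1}{2}(\mu^2-\lambda^2)=\frac{\zeta}{2}$. The right-hand side of~\eqref{eq:logsobolevstatement} then reads $\frac{\zeta}{2}J_-(\rho)+\delta$. Since $J_-(\rho)=2\frac{d}{dt}\big|_{t=0}S(e^{t\cL_-}(\rho))$ and $\rho$ is feasible for the constraint $\Tr(\hat{n}\rho)\leq\n$ with $\n=\Tr(\rho\hat{n})$, Theorem~\ref{thm:gaussian_optimal} gives $J_-(\rho)\geq-2\n\log(1+1/\n)$, which yields~\eqref{eq:logsobolevex}.

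I do not expect a serious obstacle: once~\eqref{eq:genzetaid},~\eqref{eq:sumidentityplusminus} and~\eqref{eq:entropy_iso} are in hand the argument is essentially bookkeeping. The only point needing a moment's thought is the symmetric introduction of $A$ above, together with the observation that in the sharp case the choices $\zeta=\mu^2-\lambda^2$ and $A=\frac{\lambda^2}{4\pi\zeta}$ simultaneously annihilate $\gamma$ and $\alpha_+$, reducing the right-hand side to a single term controlled by the energy-constrained entropy-rate bound of Theorem~\ref{thm:gaussian_optimal}.
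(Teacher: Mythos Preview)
Your proposal is correct and follows essentially the same route as the paper: combine the identity behind Lemma~\ref{lem:rate_decay} with the isoperimetric bound~\eqref{eq:entropy_iso} (via the decomposition~\eqref{eq:sumidentityplusminus}) to absorb $\zeta S(\rho)$ into the Fisher terms, then specialize using Theorem~\ref{thm:gaussian_optimal}. Your write-up is in fact slightly more complete than the paper's, which invokes Lemma~\ref{lem:rate_decay} directly even though that lemma is stated only for $\zeta=\mu^2-\lambda^2$; your explicit derivation of the general-$\zeta$ identity~\eqref{eq:genzetaid} (picking up the $\gamma\,\Tr(\rho\hat{n})$ term) fills that small gap.
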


\begin{proof}
Combining \eqref{eq:sumidentityplusminus} and \eqref{eq:entropy_iso} leads to
\begin{align}
  S(\rho)\geq -2\pi AJ_-(\rho)-2\pi A J_+(\rho)+2+\log (4\pi A)\ .
  \end{align}
  Using this inequality in Lemma \ref{lem:rate_decay} we obtain the result \eqref{eq:logsobolevstatement}.

For the second part of the statement, Theorem \ref{thm:gaussian_optimal} shows that 
\begin{align}
J_-(\rho)\geq J_-(\omega_{\n})=-2\n \log(1+1/\n)\ ,\label{eq:jminuslowerbound}
\end{align}
where $\omega_\n$ is the Gaussian thermal state with mean photon number~$\n$. Eq.~\eqref{eq:logsobolevex} 
then follows directly from the first part of the statement with the choice $A=\frac{\lambda^2}{4\pi(\mu^2-\lambda^2)}$, inequality~\eqref{eq:jminuslowerbound}, and the choice of $\zeta=\mu^2-\lambda^2$.
\end{proof}

Considering the same specific qOU process as in Example \ref{ex:2-1_process_S} we obtain the same rate of convergence to its fixed point, but now only for states with low mean photon number instead of large initial entropy as in Example \ref{ex:2-1_process_S}.

\begin{example}\label{ex:2-1_process_n} Consider $\mu^2=2$ and $\lambda^2=1$. Then 
\begin{align}
\frac{d}{dt}\Big|_{t=0} D(e^{t\cL_{\sqrt{2},1}}(\rho)\|\sigma_{\sqrt{2},1})&\leq -D(\rho\|\sigma_{\sqrt{2},1})
\end{align}
for any state $\rho$ with $\tr(\rho \hat{n}) \lesssim 0.67$. In comparison, the mean photon number of the fixed point is $\Tr(\sigma_{\sqrt{2},1}\hat{n}) = 1$.
\end{example}
Choosing $\mu^2=2$ and $\lambda^2=1$ in Lemma \ref{lem:logsobolev}, we obtain
\begin{align}
-\zeta D(\rho\|\sigma_{\mu,\lambda})-\frac{d}{dt}\Big|_{t=0} D(e^{t\cL_{\mu,\lambda}}(\rho)\|\sigma_{\mu,\lambda}) &\geq - \n \log(1+1/\n)+2-2\log(2)\ .
\end{align}
The right-hand side is monotonically decreasing and for $\n\lesssim 0.67$, it is non-negative.
\boxendproof

\section{Discussion}
We have established new information-theoretic inequalities for bosonic quantum systems. Our inequalities are motivated by and directly generalize well-known existing results concerning the sum of two real-valued random variables.  They also complement recent results concerning the ``addition'' of two bosonic quantum states by means of a beamsplitter: we consider a hybrid operation which amounts to a certain way of combining a classical probability distribution on phase space with a quantum state. Mathematically, our work thus makes progress towards a unified view of three types of convolution operations: the convolution of two classical probability density functions, of two quantum Wigner functions, and of a pair consisting of a classical probability density function and a quantum Wigner function. Operationally, our results extend entropic characterizations of classical additive noise  and quantum additive noise to  so-called classical noise in bosonic systems. Indeed, the proofs of our main inequalities, which include  hybrid versions of the Fisher information and entropy power inequalities,  are formally very similar to existing proofs in the fully classical as well as fully quantum settings.

The consideration of the hybrid classical-quantum convolution operation~\eqref{eq:cqconvolution} brings additional advantages, however: it allows for the study of infinitesimal Gaussian perturbations to a given quantum state. In contrast, existing fully quantum entropy power inequalities are not directly amenable to such arguments (at least not in an obvious way) since basic uncertainty relations prevent us from making  sense of e.g., a Gaussian state with infinitesimal variance. Mirroring the derivation of the isoperimetric inequality from the Brunn-Minkowski inequality (where a given set is perturbed by adding an infinitesimally small ball), we obtain a quantum isoperimetric inequality relating Fisher information and entropy power. A striking simple consequence of the latter is the statement that the entropy power is a concave function of time for the quantum heat diffusion semigroup: again, this provides a quantum generalization of a fundamental result about  the classical heat equation.

Let us conclude by mentioning a few potential directions for future work, as well as some open problems. For concreteness and simplicity, we have considered the simplest non-trivial definition of a hybrid convolution operation defined on multiple modes, as studied in~\cite{WernerHarmonicanalysis84}. One may  generalize our convolution~\eqref{eq:cqconvolution} and the associated results by considering additional (linear) operations along the lines of~\cite{MariPalma15}. From the point of view of information theory, it is also interesting to examine the implications of our results for the capacity of the  classical noise channels similarly to~\cite{KoeSmiEPIChannel}.

On a more speculative side, one may wonder whether alternative quantum generalizations of the results considered here exist, especially related to the conjectured photon number inequality by Guha, Erkmen and Shapiro~\cite{Guhaetal07}. These authors (and subsequent work such as~\cite{guha16}) suggest replacing
the quantum entropy power~$e^{S(\rho)}$ by the arguably more natural expression~$g^{-1}(S(\rho))$. This is the mean photon number of a Gaussian state with identical entropy as~$\rho$. It appears that at least for our isoperimetric inequality, such a generalization would require more than a na\"ive substitution  as there is no meaningful lower bound on the product~$g^{-1}(S(\rho))J(\rho)$ even for Gaussian states. While this may be considered as additional mathematical justification for our formulation of these inequalities, we believe that progress in this direction could be helpful in resolving, e.g., our conjecture concerning the
convergence rate to the fixed point of the quantum Ornstein-Uhlenbeck (qOU) semigroup. For the latter problem, apart from proving our conjecture, it would also be interesting to obtain multi-mode generalizations. This concerns, in particular, the entropy production rates for the qOU semigroup. Here, a resolution of the conjecture of~\cite{depalma2016}  for the multi-mode attenuator would likely provide important insights.

Finally, we mention some challenging mathematical problems resulting from our work. For example, while our isoperimetric inequality is tight for Gaussian states, necessary and sufficient conditions for equality in most of our statements are currently unknown. Finally, a rigorous discussion of the family of states for which the de Bruijn identity~\eqref{eq:de_Bruijn_heat} holds, possibly using the framework of Schwartz operators \cite{keylkiukaswerner15}, would be an interesting task for future work.

\section{Remark}
After posting our paper to the arxiv, we were made aware of concurrent related work by Rouz\'{e}, Datta, and Pautrat. Their paper has now been made available \cite{dattaetal16}.

\section*{Acknowledgements}
RK thanks the organizers of the workshop on Hypercontractivity and Log-Sobolev Inequalities in Banff and R. F. Werner for interesting
discussions. RK and SH are supported by the Technische Universit\"at M\"unchen –- Institute
for Advanced Study, funded by the German Excellence
Initiative and the European Union Seventh Framework
Programme under grant agreement no. 291763. They acknowledge additional support by DFP project no. K05430/1-1. AV is supported by the John Templeton Foundation Grant No. 48322, and by the Basque Government through the BERC 2014-2017 program and by Spanish Ministry of Economy and Competitiveness MINECO: BCAM Severo Ochoa excellence accreditation SEV-2013-0323.

\appendix

\section{A Log-Sobolev inequality and the classical Ornstein-Uhlenbeck process}
\label{sec:classicalOU}
In this appendix we discuss known classical results for the reader's convenience: we briefly review
the relationship between the
isoperimetric inequality for classical Fisher information,
the Log-Sobolev inequality, and 
the rate of convergence to the fixed point for the classical Ornstein-Uhlenbeck semigroup. These arguments were given by Carlen~\cite{carlen94}
for a particular element of the two-parameter family of  Ornstein-Uhlenbeck processes. Here we specialize to real-valued random variables, but allow arbitrary parameters in order to illustrate the parallels to the qOU semigroup. We also explicitly discuss the convergence to the fixed point,
which is only implicit in~\cite{carlen94} but appears to be folklore.

Let $f_0$ be a probability density on~$\mathbb{R}$ of a real-valued random variable~$X$. The classical Ornstein-Uhlenbeck (cOU) process, for $\theta>0$ and $\sigma>0$ is  given by the Fokker-Planck equation
\begin{align}
\frac{\partial f}{\partial t}=\theta \frac{\partial }{\partial x}\left[x f\right]+\frac{\sigma^2}{2}\frac{\partial^2 f}{\partial x^2}=\cA_{\theta,\sigma^2}(f)\ .
\end{align}
(Carlen considers the case where $\theta=1$ and $\sigma^2=1/\pi$.) 
The solution to this equation can be written (in terms of random variables) as
\begin{align}
X_t &=e^{-\theta t}X_0+\frac{\sigma}{\sqrt{2\theta}}\sqrt{1-e^{-2\theta t}}Z\ ,\label{eq:xtevolution}
\end{align}
where $Z\sim\cN(0,1)$ is an independent centered Gaussian random variable with unit variance.  The stationary solution~$f_\infty$ therefore is a  centered Gaussian distribution with variance~$\sigma^2/(2\theta)$.
In particular,~\eqref{eq:xtevolution}
implies that the second moments satisfy
\begin{align}
\mathbb{E}[X_t^2]&=e^{-2\theta t}\mathbb{E}[X_0^2]+
(1-e^{-2\theta t})\frac{\sigma^2}{2\theta}\ .\label{eq:momentsevolutionc}
\end{align}
We use that the relative entropy 
between a random variable~$X$ and a centered normal variable~$Z_{\sigma^2}\sim \cN(0,\sigma^2)$ is given by
$D(X\|Z_{\sigma^2}) = -H(X)+\frac{1}{2}\log 2\pi \sigma^2+\frac{1}{2\sigma^2}\mathbb{E}[X^2]$
as can be verified easily.
In particular, the relative entropy between the solution at time~$t$
and the fixed point~$Z_{\sigma^2/(2\theta)}$ is given by
\begin{align}
D(X_t\|Z_{\sigma^2/(2\theta)}) &=-H(X_t)+\frac{1}{2}\log \pi \sigma^2/\theta+\frac{\theta}{\sigma^2}\mathbb{E}[X_t^2]\ .\label{eq:relativeentropytofixed}
\end{align} 
Furthermore, we have the following de Bruijn-type identity:
\begin{lemma}
Let $\{X_t\}_{t\geq 0}$ be of the form~\eqref{eq:xtevolution}, i.e., a solution to the cOU process. 
Let 
$J(X)=\int \frac{|\frac{\partial f(x)}{\partial x}|^2}{f(x)} dx$
denote the Fisher information of a random variable~$X$ with distribution function $f$. Then
\begin{align}
\frac{d}{dt}\Big|_{t=0} H(X_t)&=\frac{\sigma^2}{2}J(X_0)-\theta\ .\label{eq:debruijinclassicalx}
\end{align}
\end{lemma}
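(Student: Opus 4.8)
The plan is to reduce the statement to the classical de Bruijn identity~\eqref{eq:de_Bruijn_cl} by combining the explicit form~\eqref{eq:xtevolution} of the solution with the elementary scaling behaviour of the differential entropy under dilations. Write $a(t)=e^{-\theta t}$ and $b(t)=\frac{\sigma}{\sqrt{2\theta}}\sqrt{1-e^{-2\theta t}}$, so that $X_t = a(t)X_0 + b(t)Z$. Since $Z$ is a centered unit-variance Gaussian independent of $X_0$, the variable $(b(t)/a(t))Z$ is again a centered Gaussian, now with variance $s(t):=b(t)^2/a(t)^2=\frac{\sigma^2}{2\theta}\bigl(e^{2\theta t}-1\bigr)$; hence it has the same law as $\sqrt{s(t)}\,Z$. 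Factoring out $a(t)>0$, we get that $X_t$ is distributed as $a(t)\bigl(X_0+\sqrt{s(t)}\,Z\bigr)$.

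Next I would record the dilation identity $H(cY)=H(Y)+\log c$, valid for any scalar $c>0$ and real-valued $Y$ (immediate from $f_{cY}(y)=c^{-1}f_Y(y/c)$). Applying it with $c=a(t)$ yields
\begin{align}
H(X_t) = \log a(t) + H\bigl(X_0+\sqrt{s(t)}\,Z\bigr)\ .
\end{align}
Now differentiate at $t=0$. The first term contributes $\frac{d}{dt}\big|_{t=0}\log a(t)=a'(0)/a(0)=-\theta$. For the second term, observe that $s(0)=0$ and $s'(0)=\sigma^2$; by the classical de Bruijn identity~\eqref{eq:de_Bruijn_cl} one has $\frac{\partial}{\partial u}H\bigl(X_0+\sqrt{u}\,Z\bigr)\big|_{u=0}=\frac{1}{2} J(X_0)$, so the chain rule gives $\frac{d}{dt}\big|_{t=0}H\bigl(X_0+\sqrt{s(t)}\,Z\bigr)=\frac{1}{2}J(X_0)\,s'(0)=\frac{\sigma^2}{2}J(X_0)$. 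Adding the two contributions produces exactly~\eqref{eq:debruijinclassicalx}.

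The only delicate point is the use of the classical de Bruijn identity at $u=0$ (where it is a one-sided derivative) and the interchange of differentiation and integration it hides; in line with the discussion in the main text we take this for granted under our standing regularity assumptions — for $X_0$ with finite variance it is made rigorous by Barron~\cite{Barron86central} building on Stam~\cite{Stam59}. Everything else is a change of variables and the chain rule, so I do not anticipate a substantive obstacle. As a remark, one could alternatively argue directly: writing $\partial_t f=\cA_{\theta,\sigma^2}(f)$ and $\frac{d}{dt}H(X_t)=-\int \partial_t f\,(\log f+1)\,dx$, the mass-conservation $\int \cA_{\theta,\sigma^2}(f)\,dx=0$ kills the ``$+1$'' term, and two integrations by parts (with vanishing boundary terms) turn the drift part into $-\theta$ and the diffusion part into $\frac{\sigma^2}{2}\int (f')^2/f\,dx=\frac{\sigma^2}{2}J(X_0)$, giving the same conclusion; the reduction above is preferable since it invokes~\eqref{eq:de_Bruijn_cl} as a black box.
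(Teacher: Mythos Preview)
Your proof is correct, and in fact you sketch both viable approaches: the one you adopt as primary (reduce to the standard de~Bruijn identity~\eqref{eq:de_Bruijn_cl} via the explicit solution~\eqref{eq:xtevolution} and the dilation rule $H(cY)=H(Y)+\log c$), and the one you relegate to a remark (differentiate $H$ along the Fokker--Planck flow and integrate by parts). The paper does it the other way around: it works directly from the generator $\cA_{\theta,\sigma^2}$, writes $\frac{d}{dt}H(X_t)=-\int\cA_{\theta,\sigma^2}(f)\log f\,dx$, and handles the drift term $\theta\,\partial_x(xf)$ and the diffusion term $\frac{\sigma^2}{2}\partial_x^2 f$ separately by integration by parts, obtaining $-\theta$ and $\frac{\sigma^2}{2}J(X_0)$. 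Your route has the advantage of invoking~\eqref{eq:de_Bruijn_cl} as a black box and making the origin of the $-\theta$ term transparent as the scaling contribution $\frac{d}{dt}\log a(t)$ (cf.~\eqref{eq:entropyrescalingclassical}); the paper's route is self-contained and does not rely on the closed-form solution. Either way the regularity caveats are the same.
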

Observe that the second summand essentially stems from the fact that
entropies transform very simply under rescaling of random variables, namely
\begin{align} 
H(e^{-\theta t}X)=H(X)-\theta t\ .\label{eq:entropyrescalingclassical}
\end{align}
Eq.~\eqref{eq:entropyrescalingclassical}  significantly simplifies the analysis. Such a property does not hold in the quantum case: as a consequence, we do not have a simple expression in terms of $J(X_0)$ only.

\begin{proof}
The derivative of the entropy along a semigroup with generator $\cA$ is given by the expression $\frac{d}{dt}\Big|_{t=0}H(e^{t\cA}(f))=-\int\cA(f)(x)\log f(x)dx$. Using this fact gives  
\begin{align}
\frac{d}{dt}\Big|_{t=0}H(X_t)&=
-\theta \int \left(\frac{\partial }{\partial x} [x f(x)]\right)\log f(x)dx-\frac{\sigma^2}{2}\int \frac{\partial^2 f(x)}{\partial x^2} \log f(x)dx\ .
\end{align}
Denoting $f'(x)=\frac{\partial}{\partial x}f(x)$, we obtain
\begin{align}
\int (x f)'\log f dx&=-\int (x f) f'/f dx=-\int x f'dx=\int f dx=1\ ,
\end{align}
where we have used partial integration and the fact that boundary terms vanish twice. Similarly,
we have
\begin{align}
\int f'' \log f dx&=-\int (f')^2/f dx
\end{align}
by partial integration. Combining these statements gives the claim. 
\end{proof}

Following~\cite{carlen94}, we can write the isoperimetric inequality $1/N(X)\leq \frac{J(X)}{2\pi e}$ as 
\begin{align}
-H(X)&\leq \frac{1}{2}\log \left(\frac{J(X)}{2\pi e}\right)\\
&=\frac{1}{2}\log \left(\frac{J(X)}{2\pi }\right)-\frac{1}{2}\ .
\end{align}
In particular, using
  $\log x\leq x-1$, we get 
\begin{align}
\log \left(\frac{J(X)}{2\pi e}\right)&=\log \left(\frac{J(X)\cdot A}{2\pi eA}\right)\nonumber\\
&=\log\frac{1}{2\pi e A} +\log( A J(X))\nonumber\\
&\leq AJ(X)-2-\log 2\pi  A\ ,\label{eq:hupperboundclassical}
\end{align}
for any $A>0$. 
Using inequality~\eqref{eq:hupperboundclassical}, it is straightforward to show the following.

\begin{theorem}[Fast convergence of the cOU semigroup~\cite{carlen94}]\label{thm:carlengeneralized}
Let $\{X_t\}_{t\geq 0}$ be of the form~\eqref{eq:xtevolution}, i.e., a solution to the cOU process
 with parameters~$\theta>0,\sigma>0$.
 Then 
\begin{align}
 \frac{d}{dt}\Big|_{t=0}D(X_t\|Z_{\sigma^2/(2\theta)})&\leq -2\theta  D(X_0\|Z_{\sigma^2/(2\theta)})\ .
\end{align}
\end{theorem}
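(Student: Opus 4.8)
The plan is to reduce the claimed differential inequality to the isoperimetric inequality for the classical Fisher information, following the strategy of Carlen~\cite{carlen94}. Write $v=\sigma^2/(2\theta)$ for the variance of the stationary Gaussian $Z_{\sigma^2/(2\theta)}$, which is the fixed point of the cOU process. First I would differentiate the explicit expression~\eqref{eq:relativeentropytofixed} for $D(X_t\|Z_{v})$ at $t=0$: the entropy term is controlled by the de Bruijn-type identity~\eqref{eq:debruijinclassicalx}, and the second-moment term by~\eqref{eq:momentsevolutionc}, whose derivative at $t=0$ equals $-2\theta(\mathbb{E}[X_0^2]-v)$. Combining these yields
\[
\frac{d}{dt}\Big|_{t=0}D(X_t\|Z_{v})=-\frac{\sigma^2}{2}J(X_0)+2\theta-\frac{\theta}{v}\,\mathbb{E}[X_0^2]\ .
\]

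Next I would expand the right-hand side of the target inequality, $-2\theta D(X_0\|Z_{v})$, again using~\eqref{eq:relativeentropytofixed}. The terms proportional to $\mathbb{E}[X_0^2]$ on the two sides are identical and cancel, so the assertion of the theorem becomes equivalent to
\[
2\theta-2\theta H(X_0)+\theta\log(\pi\sigma^2/\theta)\ \le\ \frac{\sigma^2}{2}\,J(X_0)\ .
\]

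Finally I would invoke the isoperimetric inequality. From the bound $-H(X_0)\le\frac{1}{2}\log(J(X_0)/(2\pi e))$ together with~\eqref{eq:hupperboundclassical} one obtains $-2H(X_0)\le AJ(X_0)-2-\log 2\pi A$ for every $A>0$. Multiplying by $\theta$ and substituting this into the left-hand side of the displayed inequality, the coefficient of $J(X_0)$ becomes $\theta A$ while the remaining constants collapse to $\theta\log(v/A)$ after using $2\pi v=\pi\sigma^2/\theta$. Choosing $A=v=\sigma^2/(2\theta)$ makes the $J(X_0)$-coefficient equal to $\sigma^2/2$ and annihilates the logarithmic remainder, which is exactly the inequality we need, and hence proves the theorem.

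The computation is essentially bookkeeping once the ingredients~\eqref{eq:relativeentropytofixed}, \eqref{eq:momentsevolutionc}, \eqref{eq:debruijinclassicalx}, and~\eqref{eq:hupperboundclassical} are assembled; the only step that requires a genuine choice rather than routine manipulation is taking the free parameter $A$ in the linearized isoperimetric bound equal to the variance $\sigma^2/(2\theta)$ of the fixed point, for which that bound is exactly tight after the estimate $\log x\le x-1$. A minor technical caveat, which I would not dwell on, is that differentiating at $t=0$ and applying~\eqref{eq:debruijinclassicalx} presuppose enough regularity of $X_0$ (finite variance suffices, cf.~Barron~\cite{Barron86central}).
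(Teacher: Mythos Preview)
Your proposal is correct and follows essentially the same approach as the paper: both compute $\frac{d}{dt}\big|_{t=0}D(X_t\|Z_{\sigma^2/(2\theta)})$ via~\eqref{eq:relativeentropytofixed}, \eqref{eq:momentsevolutionc}, and~\eqref{eq:debruijinclassicalx}, bound $-H(X_0)$ using the linearized isoperimetric inequality~\eqref{eq:hupperboundclassical}, and then set $A=\sigma^2/(2\theta)$. The only organizational difference is that you cancel the $\mathbb{E}[X_0^2]$ terms before invoking~\eqref{eq:hupperboundclassical}, whereas the paper carries them along and combines the two estimates into a single lower bound on $-2\theta D(X_0\|Z_{\sigma^2/(2\theta)})-\frac{d}{dt}\big|_{t=0}D(X_t\|Z_{\sigma^2/(2\theta)})$; the substance is identical.
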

\noindent Note that because we are considering a semigroup, this result
immediately implies that 
\begin{align}
D(X_t\|Z_{\sigma^2/(2\theta)})\leq e^{-2\theta t}D(X_0\|Z_{\sigma^2/(2\theta)}) \qquad \textrm{ for all }t\geq 0\ .
\end{align}
Also, this result is tight:
if $X_0\sim \cN(0,\sigma^2_{X_0})$ is a centered Gaussian random variable with variance~$\sigma^2_{X_0}$, then~\eqref{eq:xtevolution} implies $X_t\sim \cN(0,\sigma_t^2)$ where the variance of $X_t$ is 
\begin{align}
\sigma^2_t&=e^{-2\theta t}\sigma^2_{X_0}+\frac{\sigma^2}{2\theta}(1-e^{-2\theta t})\ .
\end{align}
Inserting into~\eqref{eq:relativeentropytofixed}, using $H(X_t)=\frac{1}{2}(1+\log(2\pi \sigma_t^2))$
yields
\begin{align}
D(X_t\|Z_{\sigma^2/(2\theta)})&=-\frac{1}{2}(1+\log(2\pi \sigma_t^2))+\frac{1}{2}\log \pi \sigma^2/\theta+\frac{\theta}{\sigma^2}\sigma_t^2\ .
\end{align}
In particular, we obtain
\begin{align}
\lim_{\sigma_{X_0}^2\rightarrow\infty}\left(\frac{d}{dt}\Big|_{t=0} D(X_t\|Z_{\sigma^2/(2\theta)})\right)/D(X_0\|Z_{\sigma^2/(2\theta)})=-2\theta\ .
\end{align}

\begin{proof}
According to Eqs.~\eqref{eq:momentsevolutionc},~\eqref{eq:relativeentropytofixed} and~\eqref{eq:debruijinclassicalx},  we have 
\begin{align}
\frac{d}{dt}\Big|_{t=0}
 D(X_t\|Z_{\sigma^2/(2\theta)})
 &=-\frac{\sigma^2}{2}J(X_0)-\frac{2\theta^2}{\sigma^2}\mathbb{E}[X^2]+2\theta\ .\label{eq:ddterivativeclassical}
\end{align} 
Combining~\eqref{eq:hupperboundclassical} with~\eqref{eq:relativeentropytofixed}
 yields
\begin{align}
 D(X_0\|Z_{\sigma^2/(2\theta)})\leq \frac{AJ(X)}{2}-(1+\frac{1}{2}\log 2\pi A)+\frac{1}{2}\log \pi\sigma^2/\theta+\theta/\sigma^2\mathbb{E}[X_0^2]\ .\label{eq:dxzerozsigma}
\end{align}
Combining~\eqref{eq:dxzerozsigma} with~\eqref{eq:ddterivativeclassical}
 yields 
 \begin{align}
 -2\theta  D(X_0\|Z_{\sigma^2/(2\theta)})-\frac{d}{dt}\Big|_{t=0}
 D(X_t\|Z_{\sigma^2/(2\theta)})&\geq \left(-\theta  A+\frac{\sigma^2}{2}\right) J(X_0)\\
& + 2\theta\left[\left(1+\frac{1}{2}\log 2\pi A-\frac{1}{2}\log\pi \sigma^2/\theta\right)-1\right]\ .\label{eq:logsobolevclassical}
 \end{align}
The claim then follows by choosing~$A=\frac{\sigma^2}{2\theta }$.
\end{proof}

\section{Tightness of the quantum Fisher information isoperimetric inequality}
\label{app:Fisher}
Consider a one-mode Gaussian thermal state $\omega_\n$ with mean photon number $\n > 0$. Its entropy is
\begin{align}
  \label{eq:entropygaussian}
S(\omega_\n)&=g(\n)=(\n+1)\log (\n+1)-\n\log \n\ .
\end{align}
Under the diffusion semigroup, the state $\omega_\n$ evolves as
\begin{align*}
e^{t\cLh}(\omega_\n)&=\omega_{\n_t}\ ,
\qquad\textrm{ where }\qquad \n_t=\n+2\pi t\ .
\end{align*}
In particular, by the de Bruijn identity
\begin{align}
  \label{eq:debruijnheat}
  J(\omega_\n) = 2\, \frac{d}{dt} S(e^{t\cLh}(\omega_\n))\bigg|_{t=0}=
2g'(\n)\n'_t\Big|_{t=0}=4\pi\log\left(\frac{\n+1}{\n}\right)\ .
\end{align}
Also,
\begin{align}
  J(e^{t\cLh}(\omega_\n))=4\pi\log\left(1+\frac{1}{\n+2\pi t}\right)\ .
\end{align}
Calculating the right-hand side of the quantum Fisher information isoperimetric inequality \eqref{eq:iso}, we obtain
\begin{align}
  \frac{d}{dt}\bigg|_{t=0}\Bigl[\frac{1}{2}J(e^{t\cLh}(\omega_\n))\Bigr]^{-1} &=\frac{1}{\n(\n+1)}\log^{-2}\left(1+\frac{1}{\n}\right)\rightarrow\ 1\qquad \text{ as }\n\rightarrow\infty\ .
\end{align}

\section{Tightness of the isoperimetric inequality}
\label{app:isoperim}
Consider a one-mode Gaussian thermal state $\omega_\n$ with mean photon number $\n$. From Eq.~\eqref{eq:debruijnheat} we know that
\begin{align}
  J(\omega_\n) = 
4\pi\log\left(\frac{\n+1}{\n}\right)\ ,
\end{align}
and that the entropy power of $\omega_\n$ is given by (cf. \eqref{eq:entropygaussian})
\begin{align}
  N(\omega_\n)= \exp(S(\omega_\n)/1) = \frac{(\n+1)^{\n+1}}{\n^\n}\ .
\end{align}
Combining these two expressions, we see that the left-hand side of \eqref{eq:iso_entropy} is
\begin{align}
  J(\omega_\n)N(\omega_\n) = 4\pi \left(\frac{\n+1}{\n}\right)^{\n}\log\left(\frac{\n+1}{\n}\right)^{\n+1}{\rightarrow} \ 4\pi e\qquad \text{ for }{\n \rightarrow \infty}\ .
\end{align}

\section{On the convergence rate for Gaussian initial states}
\label{app:qOU}

In this section we show that for a one-mode Gaussian state $\rho$ the following inequality holds:
\begin{align}
\frac{d}{dt}\Big|_{t=0} D(e^{t\cL_{\mu,\lambda}}(\rho)\|\sigma_{\mu,\lambda})\leq -\zeta D(\rho||\sigma_{\mu,\lambda})\qquad&
\text{ with }\zeta = \mu^2 - \lambda^2 > 0\ .
\label{eq:appineq}
\end{align}
Furthermore, this statement is optimal: for any $\epsilon>0$ there exists a Gaussian state $\rho$ such that 
\begin{equation}
\frac{d}{dt}\Big|_{t=0} D(e^{t\cL_{\mu,\lambda}}(\rho)\|\sigma_{\mu,\lambda})\geq -(\zeta+\epsilon) D(\rho||\sigma_{\mu,\lambda})\ .
\end{equation}

First, we note that the entropy of a Gaussian state does not depend on its first moments. Hence it follows that for a Gaussian state the right-hand side of Eq. \eqref{eq:rate_decay} does not depend on its first moments, and it is suffices to consider centered Gaussian states.

Calculating the right-hand side of \eqref{eq:rate_decay}, we focus on calculating $J_\pm(\rho)$. Recall that the covariance matrix $M$ of a centered state $\rho$ is defined as  $M_{jk}=\Tr(\rho\{R_j, R_k\})$.

\begin{lemma} Let $\rho$ be a one-mode centered Gaussian state with mean-photon number $\n$ and covariance matrix given by $M = \kappa S^TS$, where $\kappa=2\n+1$, and $S = O_1 \begin{pmatrix}z & 0 \\ 0 & 1/z \end{pmatrix} O_2^T$ with $O_i \in Sp(2) \cap O(2)$ and  $z \geq 1$. Denote $$\left. J_\pm(\rho)= 2\frac{d}{dt} S\left(e^{t\cL_\pm}(\rho) \right)\right|_{t=0}\ . $$ Then we have

\begin{equation}\label{eq:LJ_Gaussian}
 J_\pm(\rho)=\left(\half \left(1/z^2 + z^2\right) \pm \kappa\right) \left(\log\left(\frac{\kappa +1}{2}\right)-\log\left(\frac{\kappa-1}{2}\right) \right)\ .
\end{equation}
\end{lemma}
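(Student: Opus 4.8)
The plan is to reduce the identity to a short computation with the symplectic eigenvalue $\nu=\sqrt{\det M}$ of the covariance matrix, resting on two structural facts. First, a Gaussian state remains Gaussian under $e^{t\cL_\pm}$, with covariance matrix $M_t=e^{\pm t}M\pm(e^{\pm t}-1)I$ (the quantum-limited amplifier for $\cL_+$, the attenuator for $\cL_-$), so that $M_0=M$ and $\dot M_t\big|_{t=0}=I\pm M$. I would obtain this either by differentiating $\Tr\!\bigl(e^{t\cL_\pm}(\rho)\{R_j,R_k\}\bigr)$ at $t=0$ and evaluating the adjoint action of $\cL_\pm$ on the quadratics $\{R_j,R_k\}$ via $[a,a^\dagger]=\mathsf{id}$, or by invoking facts already present in the paper --- e.g.\ $e^{t\cL_-}(\omega_\n)=\omega_{e^{-t}\n}$ and $e^{t\cLh}(\omega_\n)=\omega_{\n+2\pi t}$ together with $\cLh=2\pi\cL_-+2\pi\cL_+$ --- which pin down the normalization of the generators. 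Second, the von Neumann entropy of a one-mode Gaussian state depends only on $\det M$, namely $S(\rho)=g\!\bigl(\tfrac{\sqrt{\det M}-1}{2}\bigr)$ with $g(x)=(x+1)\log(x+1)-x\log x$ the function used throughout the paper.

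Granting these, I would first check that the hypotheses are consistent: since $M=\kappa S^TS$ with $S\in Sp(2)$ we have $\det S=1$, hence $\det M=\kappa^2$ and $S(\rho)=g\!\bigl(\tfrac{\kappa-1}{2}\bigr)$, so the mean photon number is indeed $\n=\tfrac{\kappa-1}{2}$. By the chain rule, $J_\pm(\rho)=2\,\tfrac{d}{dt}\big|_{0}\,g\!\bigl(\tfrac{\sqrt{\det M_t}-1}{2}\bigr)=g'\!\bigl(\tfrac{\kappa-1}{2}\bigr)\,\tfrac{d}{dt}\big|_{0}\sqrt{\det M_t}$, the factor $2$ cancelling the $\tfrac12$ coming from $(\nu-1)/2$. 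By Jacobi's formula, $\tfrac{d}{dt}\big|_{0}\det M_t=\det M\cdot\Tr\!\bigl(M^{-1}(I\pm M)\bigr)=\kappa^2\bigl(\Tr(M^{-1})\pm 2\bigr)$, whence $\tfrac{d}{dt}\big|_{0}\sqrt{\det M_t}=\tfrac{\kappa}{2}\Tr(M^{-1})\pm\kappa$.

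It then remains to compute $\Tr(M^{-1})=\tfrac{1}{\kappa}\Tr\!\bigl((S^TS)^{-1}\bigr)$. From $S=O_1\,\mathrm{diag}(z,1/z)\,O_2^T$ with $O_i$ orthogonal, one gets $S^TS=O_2\,\mathrm{diag}(z^2,1/z^2)\,O_2^T$ because $O_1^TO_1=I$, hence $\Tr\!\bigl((S^TS)^{-1}\bigr)=z^2+1/z^2$ (the constraint $z\ge1$ is immaterial, this expression being invariant under $z\mapsto1/z$). Substituting back, $\tfrac{d}{dt}\big|_{0}\sqrt{\det M_t}=\tfrac12(z^2+1/z^2)\pm\kappa$, and combining with $g'\!\bigl(\tfrac{\kappa-1}{2}\bigr)=\log\tfrac{\kappa+1}{2}-\log\tfrac{\kappa-1}{2}$ gives $J_\pm(\rho)=\bigl(\tfrac12(1/z^2+z^2)\pm\kappa\bigr)\bigl(\log\tfrac{\kappa+1}{2}-\log\tfrac{\kappa-1}{2}\bigr)$, which is the claim.

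The main obstacle is the first structural input: pinning down, with the constants straight, how $\cL_\pm$ act on covariance matrices in the present normalization (where $[Q,P]=iI$, $M_{jk}=\Tr(\rho\{R_j,R_k\})$, the vacuum has covariance $I$, and the factor $\sqrt{2\pi}$ in the Weyl operators is in force) so that $\dot M_t\big|_0=I\pm M$ holds exactly, together with the matching entropy formula $S(\rho)=g((\sqrt{\det M}-1)/2)$. Once these conventions are fixed, everything else is routine differentiation and $2\times2$ linear algebra.
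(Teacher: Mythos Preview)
Your proposal is correct and follows essentially the same approach as the paper: both use the covariance-matrix evolution $M_\pm(t)=e^{\pm t}M\pm(e^{\pm t}-1)I$ (which the paper also simply states without derivation) together with the one-mode entropy formula $S=g\bigl((\sqrt{\det M}-1)/2\bigr)$, and then differentiate. The only cosmetic difference is that the paper computes the symplectic eigenvalue $\kappa_\pm(t)=\sqrt{\det M_\pm(t)}$ explicitly and Taylor-expands to first order in~$t$, whereas you obtain $\tfrac{d}{dt}\big|_{0}\sqrt{\det M_t}$ directly from Jacobi's formula via $\Tr(M^{-1})$; both routes yield $\tfrac12(z^2+1/z^2)\pm\kappa$ and the rest is identical.
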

\begin{proof}
  The covariance matrix of the time-evolved state $e^{t\cL_{\pm}}(\rho)$ is
\begin{equation}
M_\pm(t)=c_1^\pm(t)M(0)+c_2^\pm(t)\id\ ,
\end{equation}
where 
\begin{align}
c_1^-(t)=e^{-t} \qquad &\text{ and }\qquad c_2^-(t)=1-e^{-t}\ ,\label{eq:c_-}\\
c_1^+(t)=e^{t} \qquad &\text{ and }\qquad c_2^+(t)=e^{t}-1\ .\label{eq:c_+}
\end{align}
Therefore writing $S[M]$ for the entropy of a Gaussian state with covariance matrix $M$, we have
\begin{align}
  S\left[M_\pm(t) \right] &= S\left[c_1^\pm(t) \kappa S^T  S + c_2^\pm(t) \id\right] = S\left[c_1^\pm(t) \kappa O_2^T K^T O_1^T  O_1 K O_2 + c_2^\pm(t) O_2^T O_2 \right] \\
  &= S\left[c_1^\pm(t)\kappa  K^T  O_1^T O_1 K + c_2^\pm(t) \id \right]\\
  &= S\left[\begin{pmatrix} c_1^{\pm}(t) \kappa z^2 + c_2^{\pm}(t) & 0 \\ 0 & c_1^{\pm}(t)\kappa/z^2 + c_2^{\pm}(t) \end{pmatrix}\right]\ .
  \label{eq:sympleval}
\end{align}
The symplectic eigenvalue of the matrix argument in \eqref{eq:sympleval} is the square root of its determinant:
\begin{equation}\label{eq:kappa}
  \kappa_\pm(t) = \sqrt{\left(c_1^\pm(t)\kappa + c_2^\pm(t) z^2\right)\left(c_1^\pm(t)\kappa + c_2^\pm(t)/z^2\right)}\ .
\end{equation}
The entropy of the time evolved state is
\begin{equation}\label{eq:entropy_Gaussian}
  S(e^{t\cL_{\pm}}(\rho)) = g(\n(\kappa_\pm(t)))\ ,
\end{equation}
where $g(x) = (x+1)\log(x+1) - x\log x$ and $\n(\kappa) = \frac{1}{2}(\kappa - 1)$. By the chain rule, we have that
\begin{equation}
  \frac{d}{dt} S\left(e^{t\cL_{\pm}}(\rho)\right) = \half g'(\n(\kappa_\pm(t)))\, {\kappa_\pm}'(t)\ .
\end{equation}
Since $g'(x) = \log(x+1) - \log x $, we only need to find ${\kappa_\pm}'(t)$.

Combining \eqref{eq:kappa} and \eqref{eq:c_-} we obtain
\begin{align}
  \kappa_-(t) &= \sqrt{\left(e^{-t}\kappa + \left(1 - e^{-t}\right)z^2\right)\left(e^{-t}\kappa + \left(1 - e^{-t}\right)/z^2\right)}\\
  &= \kappa + t \left(\half \left(z^2 + 1/z^2\right) - \kappa\right) + O\left(t^2\right)\ .
\end{align}
Therefore $\left.{\kappa_-}'(t)\right|_{t = 0} = \half \left(1/z^2 + z^2\right) - \kappa$ and finally

\begin{equation}
  J_-(\rho)= \left(\half \left(1/z^2 + z^2\right) - \kappa\right) \log\frac{\kappa +1}{\kappa-1}\ .
\end{equation}
Similarly, using \eqref{eq:kappa} together with \eqref{eq:c_+}, we obtain
\begin{align}
  \kappa_+(t) &= \sqrt{\left(e^t\kappa + \left(e^t - 1\right)z^2\right)\left(e^t\kappa + \left(e^t - 1\right)/z^2\right)}\\
  &= \kappa + t \left(\half \left(z^2 + 1/z^2\right) + \kappa\right) + O\left(t^2\right)\ .
\end{align}
Thus $\left.{\kappa_+}'(t)\right|_{t = 0} = \half \left(1/z^2 + z^2\right) + \kappa$ and
\begin{equation}
  J_+(\rho)=\left(\half \left(1/z^2 + z^2\right) + \kappa\right) \log\frac{\kappa +1}{\kappa-1}\ .
\end{equation}

\end{proof}

From Lemma \ref{lem:rate_decay} it is clear that we are interested in minimizing $J_\pm(\rho)$. Both expressions in \eqref{eq:LJ_Gaussian} have a minimum at $z=1$. Therefore, with $\zeta = \mu^2 - \lambda^2$ and $\n = \frac{\kappa - 1}{2}$, from Lemma \ref{lem:rate_decay} we obtain

\begin{align}
-\zeta D(\rho||\sigma_{\mu,\lambda})-\frac{d}{dt}\Big|_{t=0} D(e^{t\cL_{\mu,\lambda}}(\rho)\|\sigma_{\mu,\lambda})\geq &\mu^2\log(\n+1) - \lambda^2\log\n + \lambda^2\log\lambda^2\\
    &- \mu^2\log\mu^2 + (\mu^2-\lambda^2)\log(\mu^2-\lambda^2)\\
    =: &h_{\mu,\lambda}(\n)\ .
\end{align}

For fixed $\mu^2 > \lambda^2$, the function $h_{\mu,\lambda}$ satisfies $\lim_{\n\rightarrow 0} h_{\mu,\lambda}(\n) = \lim_{\n\rightarrow \infty} h_{\mu,\lambda}(\n) = \infty$. Since $h_{\mu,\lambda}$ is differentiable (in fact smooth) for $\n > 0$, we can find the global minimum by finding the zeros of the derivative
\begin{align}
  \frac{d}{d\n} h_{\mu,\lambda}(\n) = \frac{\mu^2}{\n+1}-\frac{\lambda^2}{\n} = 0\ .
\end{align}
The only solution is $\n = \frac{\lambda^2}{\mu^2-\lambda^2}$, and since 
\begin{align}
  \frac{d^2}{d\n^2} \bigg|_{\n = \frac{\lambda^2}{\mu^2-\lambda^2}} h_{\mu,\lambda}(\n) = (\mu^2-\lambda^2) \left(\frac{1}{\lambda^2} - \frac{1}{\mu^2}\right) > 0\ ,
\end{align}
it is the minimum. The value of the minimum is $h_{\mu,\lambda}\left(\frac{\lambda^2}{\mu^2-\lambda^2}\right) = 0$, hence it follows that $h_{\mu,\lambda}(\n) \geq 0$ for all $ \n > 0$ and we have

\begin{align}\label{eq:conv_tight}
 \frac{d}{dt}\bigg|_{t=0} D(e^{t\cL_{\mu,\lambda}}(\rho)\|\sigma_{\mu,\lambda}) \leq - \zeta D(\rho\|\sigma_{\mu,\lambda})\ .
\end{align}
Moreover, the last inequality becomes equality for $\rho = \sigma_{\mu,\lambda}=\omega_{\n_\infty}$, with  $\n_\infty= \frac{\lambda^2}{\mu^2-\lambda^2}$.

It remains to show that $\zeta = \mu^2-\lambda^2$ is optimal. Let $\epsilon > 0$ and $\zeta' = \zeta + \epsilon$. Then for the Gaussian thermal state $\omega_\n$ we have

\begin{align}
  - \zeta' D(\omega_\n\| \sigma_{\mu,\lambda})-\frac{d}{dt}\bigg|_{t=0} D(e^{t\cL_{\mu,\lambda}}(\omega_\n)\|\sigma_{\mu,\lambda}) = &h_{\mu,\lambda}(\n) + \epsilon (\n+1)\log(\n+1) -\epsilon\,  \n\log\n \\
     &+ \epsilon\, \n \log\left(\frac{\lambda^2}{\mu^2}\right) + \epsilon\log\left(1-\frac{\lambda^2}{\mu^2}\right) \\
     =&\log\left( \left(\frac{\n+1}{\n}\right)^{\mu^2+\epsilon(\n+1)}\n^{\mu^2-\lambda^2+\epsilon}\left(\frac{\lambda^2}{\mu^2}\right)^{\epsilon\n} \right)\\
     & + \mathrm{c}(\mu,\lambda) \ {\rightarrow} - \infty \qquad \text{ for }{\n \rightarrow \infty}\ ,
\end{align}
where $\mathrm{c}(\mu,\lambda)=\log\left((\mu^2)^{-\mu^2-\epsilon}(\lambda^2)^{\lambda^2}(\mu^2-\lambda^2)^{\mu^2-\lambda^2+\epsilon} \right)$. 
Therefore, for any $\epsilon>0$, there exists $\n$ such that 
\begin{equation}
  \frac{d}{dt}\bigg|_{t=0} D(e^{t\cL_{\mu,\lambda}}(\omega_\n)\|\sigma_{\mu,\lambda}) > -(\zeta+\epsilon) D(\omega_\n||\sigma_{\mu,\lambda})\ .
\end{equation}
This shows that the constant $\zeta=\mu^2-\lambda^2$ is optimal in the inequality \eqref{eq:appineq}.


\end{document}